\newif\ifdvi
\newcommand{\clonelabel}[2]{\@bsphack
  \expandafter\ifx\csname r@#2\endcsname\relax
  \else\protected@write\@auxout{}{\string\newlabel{#1}%
    {\csname r@#2\endcsname}}%
  \fi
  \expandafter\ifx\csname r@#2@cref\endcsname\relax
  \else\protected@write\@auxout{}{\string\newlabel{#1@cref}%
    {\csname r@#2@cref\endcsname}}%
  \fi
  \@esphack}
\def\@citex[#1]#2{\leavevmode
  \let\@citea\@empty
  \@cite{\@for\@citeb:=#2\do
    {\@citea\def\@citea{,\penalty\@m\ }%
\edef\magic##1{\let##1\expandafter\noexpand\csname bibalias@\@citeb\endcsname}%
\magic\tmp \ifx\tmp\relax\else \let\@citeb\tmp\fi
     \edef\@citeb{\expandafter\@firstofone\@citeb\@empty}%
     \if@filesw\immediate\write\@auxout{\string\citation{\@citeb}}\fi
     \@ifundefined{b@\@citeb}{\hbox{\reset@font\bfseries ?}%
       \G@refundefinedtrue
       \@latex@warning
         {Citation `\@citeb' on page \thepage \space undefined}}%
       {\@cite@ofmt{\csname b@\@citeb\endcsname}}}}{#1}}
\def\bibalias#1#2{\expandafter\def\csname bibalias@#1\endcsname{#2}}
\newcommand{\np}{{\em NP}\xspace}
\newcommand{\nphard}{\np-hard\xspace} 
\newcommand{\apx}{{\em APX}\xspace}
\newcommand{\apxhard}{\apx-hard\xspace}
\newcommand{\p}{{\em P}\xspace}
\DeclareMathOperator{\supp}{supp}
\DeclareMathOperator{\spn}{span}
\newenvironment{proofof}[1]{\begin{proof}[Proof of {#1}]}{\end{proof}}
\newtheorem{theorem}{Theorem}[section]
\newtheorem{lemma}[theorem]{Lemma}
\newtheorem{claim}[theorem]{Claim}
\newtheorem{corollary}[theorem]{Corollary}
\theoremstyle{definition} 
\newtheorem{definition}[theorem]{Definition}
\newtheorem{remark}[theorem]{Remark}}
\newcommand{\R}{\ensuremath{\mathbb R}}
\newcommand{\Z}{\ensuremath{\mathbb Z}}
\newcommand{\T}{\ensuremath{\mathcal T}}
\newcommand{\Lc}{\ensuremath{\mathcal L}}
\newcommand{\Sc}{\ensuremath{\mathcal S}}
\newcommand{\OPT}{\ensuremath{\mathit{OPT}}}
\newcommand{\es}{\ensuremath{\emptyset}}
\newcommand{\assign}{\ensuremath{\leftarrow}}
\newcommand{\ceil}[1]{\ensuremath{\left\lceil#1\right\rceil}}
\newcommand{\floor}[1]{\ensuremath{\left\lfloor#1\right\rfloor}}
\newcommand{\gm}{\ensuremath{\gamma}}
\newcommand{\sse}{\subseteq}
\newcommand{\hx}{\ensuremath{\widehat x}}
\newcommand{\bx}{\ensuremath{\overline x}}
\newcommand{\bB}{\ensuremath{\overline B}}
\newcommand{\al}{\ensuremath{\alpha}}
\newcommand{\tht}{\ensuremath{\theta}}
\newcommand{\dt}{\ensuremath{\delta}}
\newcommand{\ve}{\ensuremath{\varepsilon}}
\newcommand{\sndp}{\ensuremath{\mathsf{SNDP}}\xspace}
\newcommand{\res}{\ensuremath{\mathsf{res}}\xspace}
\newcommand{\kecss}[1][k]{\ensuremath{{#1}\text{-}\mathsf{ECSS}}\xspace}
\newcommand{\kecsm}[1][k]{\ensuremath{{#1}\text{-}\mathsf{ECSM}}\xspace}
\newcommand{\kecsslp}[1][k]{\ensuremath{{#1}\text{\textnormal{-ECSSLP}}}\xspace}
\newcommand{\kecsmlp}[1][k]{\ensuremath{{#1}\text{\textnormal{-ECSMLP}}}\xspace}
\newcommand{\lpopt}[1][{\kecsslp}]{\ensuremath{\mathsf{LP}^*_{{#1}}}\xspace}
\newcommand{\mdkecss}[1][k]{\ensuremath{\mathsf{MD}\text{-}\kecss[{#1}]}\xspace}
\newcommand{\mdkecsm}[1][k]{\ensuremath{\mathsf{MD}\text{-}\kecsm[{#1}]}\xspace}
\newcommand{\mdkecsslp}[1][k]{\ensuremath{\text{\textnormal{MD-}}\kecsslp[{#1}]}\xspace}
\newcommand{\mdkecsmlp}[1][k]{\ensuremath{\text{\textnormal{MD-}}\kecsmlp[{#1}]}\xspace}
\newcommand{\ldeg}{\ell}
\newcommand{\udeg}{b}
\newcommand{\lb}{\ensuremath{\mathsf{lb}}}
\newcommand{\ub}{\ensuremath{\mathsf{ub}}}
\newcommand{\after}{\ensuremath{\mathsf{iterend}}\xspace}
\title{Almost Tight Additive Guarantees for \boldmath $k$-Edge-Connectivity%
\thanks{To appear in the Proceedings of FOCS 2025.}} 
\author{
    Nikhil Kumar\thanks{{\tt\{nikhil.kumar2,cswamy\}@uwaterloo.ca}.
    Dept. of Combinatorics and Optimization, Univ. Waterloo, Waterloo, ON N2L 3G1. 
    Supported in part by C. Swamy's NSERC Discovery grant.}
\and 
\addtocounter{footnote}{-1} 
    Chaitanya Swamy\footnotemark
}
\date{}
\begin{document}

\maketitle

\def\thepage{}
\thispagestyle{empty}

\bibalias{LauNSS09}{lau2009survivable}
\bibalias{HershkowitzKZ24}{hershkowitz2024ghost}
\bibalias{GabowGTW09}{GGTW09}
\bibalias{GabowG12}{GG12}

\begin{abstract}
We consider the \emph{$k$-edge connected spanning subgraph} (\kecss) problem, where we are
given an undirected graph \( G = (V, E) \) with nonnegative edge costs \( \{c_e\}_{e \in
  E} \), and the goal is to find a minimum-cost subgraph \( H \) of \( G \) that is
\emph{$k$-edge connected}, i.e., there exist at least \( k \) edge-disjoint paths between
every pair of vertices in \( H \). For even \( k \), we present a polynomial time
algorithm that computes a {\em \((k-2)\)-edge connected subgraph of cost at most that of
the optimal \( k \)-edge connected subgraph of $G$}; for odd $k$, we obtain a $(k-3)$-edge 
connected subgraph of cost at most the optimum. In fact, the cost of our solution does not
exceed the optimal value, $\lpopt$ of the natural LP-relaxation for \kecss. Since
\kecss is \apxhard for all values of $k\geq 2$, our results are nearly optimal. They also 
significantly improve upon the recent work of Hershkowitz, Klein, and
Zenklusen~\cite{hershkowitz2024ghost}, both in terms of solution quality and the
simplicity of algorithm and its analysis. 
Interestingly, our techniques also yield an alternate guarantee, where we obtain a
$(k-1)$-edge connected subgraph of cost at most $1.5\cdot\lpopt$; with unit edge costs,
the cost guarantee improves to $\bigl(1+\frac{4}{3k}\bigr)\cdot\lpopt$, which improves
upon the state-of-the-art approximation guarantee for unit edge costs~\cite{GabowG12},
albeit with a unit loss in edge connectivity. 

Our \kecss-result also yields results for the \emph{$k$-edge
connected spanning multigraph} (\kecsm) problem, where multiple copies of an edge can be
selected. For \kecsm, we obtain a \(\left(1 + \frac{2}{k}\right)\)-approximation algorithm
for even $k$, and a \(\left(1 + \frac{3}{k}\right)\)-approximation algorithm for odd $k$. 

Finally, our techniques extend to the {\em degree-bounded} versions of \kecss and \kecsm,
wherein we also impose degree lower- and upper- bounds on the nodes.
Our results for \kecss and \kecsm extend to yield the same cost and connectivity
guarantees for these degree-bounded versions with an additive violation of (roughly) $2$ 
for the degree bounds.  
These are the first results for degree-bounded \mbox{$\{\kecss,\kecsm\}$} of the form where
the cost of the solution obtained is at most the optimum, and the connectivity 
constraints are violated by an additive constant. 
\end{abstract}

\newpage \pagenumbering{arabic} \normalsize

\section{Introduction} \label{intro}
We consider the {\em $k$-edge connected spanning subgraph} (\kecss) problem, wherein we
are given an undirected graph $G=(V,E)$ and nonnegative edge costs $\{c_e\}_{e\in E}$, and
we seek a minimum-cost subgraph $H$ of $G$ that is $k$-edge connected, i.e., for any
$u,v\in V$, there are $k$ edge-disjoint $u$-$v$ paths in $H$. We overload
notation and use $H$ to denote both the subgraph of $G$ and its edge-set.
A closely related problem is the {\em $k$-edge connected spanning multigraph} (\kecsm)
problem, wherein we are allowed to pick multiple copies of an edge, i.e., the subgraph $H$
is allowed to be a multigraph.%
\footnote{As in most prior work on these problems with general edge costs, we
allow $G$ itself to be a multigraph, i.e., have potentially parallel edges. But, in
\kecss, we may pick each edge at most once, whereas in \kecsm, we may pick an edge
multiple times.} 
When $k=1$, both \kecss and \kecsm correspond to the well known minimum spanning tree
problem, which can be solved exactly in polynomial time. For $k \geq 2$, both 
problems are \apxhard~\cite{Pri11}, i.e., there is a universal constant $\epsilon >0$ such
that there does not exist a $(1+ \epsilon)$-approximation algorithm, unless \p = \np.
\kecss and \kecsm are classical network-design problems with a rich history of study; see,
e.g.~\cite{FJ81,FJ82,CT00,GabowGTW09,GG12,Pri11,KKO21a,KKO22,hershkowitz2024ghost} and the
references therein.

For $k\geq 2$, the best-known approximation guarantee for \kecss is a longstanding
$2$-approximation algorithm due to Frederickson and J\'aJ\'a~\cite{FJ81,FJ82}. This 
also follows from Jain's seminal work~\cite{Jai01} that developed a $2$-approximation
algorithm for the more general survivable network-design problem and introduced the
elegant and versatile iterative-rounding technique. 
Very recently, Hershkowitz, Klein and Zenklusen~\cite{hershkowitz2024ghost} extended the
iterative-rounding framework of Jain~\cite{Jai01} and obtained the following complementary
result for \kecss: they showed that one can obtain, in polynomial time, a $(k-10)$-edge
connected subgraph $H$ of cost at most the optimum. That is, instead of blowing up the
cost compared to the optimal solution, they relax the connectivity constraints by an 
{\em additive constant} (equal to $10$).  
Note that since \kecss is \nphard, unless \p = \np, any polynomial time algorithm must
blow up the cost or violate feasibility. 
This also leads to a $\bigl(1+\frac{10}{k}\bigr)$-approximation algorithm for
\kecsm, which is asymptotically tight in the sense that~\cite{hershkowitz2024ghost} 
proved a $\bigl(1+\frac{\Omega(1)}{k}\bigr)$-factor hardness-of-approximation for \kecsm,
assuming \p$\neq\,$\np.

\subsection{Our results} \label{contrib}
In this paper, we significantly improve the above bounds obtained by Hershkowitz, Klein
and Zenklusen~\cite{hershkowitz2024ghost}, and obtain {\em a nearly-tight 
result for \kecss}. We show that for \kecss, when $k$ is even, one can obtain in
polynomial time a subgraph $H$ of 
{\em cost at most the optimum that is $(k-2)$-edge connected}. Since 
\kecss remains \nphard for even $k$, observe that if we want the cost to be at most the
optimum, then the best we can hope for is a $(k-1)$-edge-connected
subgraph, and so our result is almost tight. The result for even $k$ also
immediately implies that for odd $k$, we obtain a $(k-3)$-edge-connected subgraph of cost
at most the optimum. Our result is based on LP-rounding, and the cost of our solution is
in fact at most the optimal value $\lpopt$ of the following natural LP-relaxation 
\eqref{kecss-lp} for \kecss. 
\begin{alignat}{3}
\min & \quad & \sum_{e\in E}c_ex_e \tag{\kecsslp} \label{kecss-lp} \\
\text{s.t.} \quad && x\bigl(\dt(S)\bigr) & \geq k
\qquad && \forall \es\neq S\subsetneq V \label{con} \\
&& 0 \leq x_e & \leq 1 && \forall e\in E. \label{bnd}
\end{alignat}
\begin{theorem}[Proved in Section~\ref{kecss}] \label{mainthm} \label{kecss-thm}
Algorithm~\ref{kecss-alg} in Section~\ref{kecss} runs in polynomial time and, for any even
$k$, returns a $(k-2)$-edge connected subgraph $H$ that satisfies
$c(H)\leq\lpopt$. 

For odd $k$, running Algorithm~\ref{kecss-alg} with $k-1$, we obtain a $(k-3)$-edge
connected subgraph $H$ with $c(H)\leq\lpopt$.
\end{theorem}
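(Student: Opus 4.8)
The plan is to round an optimal extreme‑point solution $x^*$ of the LP‑relaxation~\eqref{kecss-lp} in the iterative‑rounding spirit of Jain, but under the strict discipline that we \emph{never round a fractional variable up}: the only operations allowed are (i) deleting an edge $e$ with $x^*_e=0$, (ii) selecting an edge $e$ with $x^*_e=1$ into $H$ and recursing on $G-e$ with the requirements of all cuts through $e$ decremented, and (iii) \emph{discarding} a cut constraint. This immediately pins down the cost. Maintaining the optimal value $\lambda$ of the current relaxation together with the set $H$ of already‑selected edges, operation (ii) leaves the potential $c(H)+\lambda$ unchanged (the restriction of $x^*$ to $G-e$ is feasible for the new relaxation, of cost $\lambda-c_e$), while (i) and (iii) only relax the relaxation and hence can only decrease $\lambda$; so $c(H)+\lambda$ never exceeds its initial value $\lpopt$, and at termination $\lambda=0$ leaves $c(H)\le\lpopt$. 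It remains to establish that (a) when $x^*$ is integral we can make progress by (i) or (ii), (b) when $x^*$ is \emph{not} integral some cut constraint is ``safe to discard,'' and (c) the discards, aggregated over the whole run, cost at most $2$ units of connectivity.

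Claim (a) is immediate, so the substance is (b)--(c). For (b) I would first invoke the standard uncrossing fact that at an extreme point whose support $E_f$ is entirely strictly fractional, the tight cut constraints can be chosen to form a laminar family $\mathcal{L}$ with $|\mathcal{L}|=|E_f|$, and then run a token‑counting argument in the style of Jain's half‑integrality lemma. Whereas Jain's argument extracts an edge of value $\ge\tfrac12$, the aim here is to extract a tight set $S$ whose \emph{residual} requirement $r(S)$ (the current right‑hand side of its still‑present constraint) is small — at most $2$ — so that discarding its constraint can cost at most $2$ units of connectivity and, by the cost invariant above, nothing in $c(H)$. The evenness of $k$ is used precisely to make the token bookkeeping close with slack exactly $2$ rather than $3$. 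Having found such an $S$ we discard its constraint, re‑solve, and iterate; each iteration deletes an edge, fixes an edge, or permanently removes a constraint, so there are polynomially many iterations, and~\eqref{kecss-lp} is polynomial‑time solvable via a min‑cut separation oracle, giving a polynomial‑time algorithm. Point (c) is then bookkeeping: a cut is discarded only once its residual requirement is $\le 2$, and since edges are only ever \emph{added} to $H$ afterwards, $|H\cap\delta(S)|\ge k-2$ holds at the end for every discarded $S$; for every cut whose constraint is never discarded, the final LP (with $\lambda=0$) forces $|H\cap\delta(S)|\ge k$. Hence $H$ is $(k-2)$‑edge connected with $c(H)\le\lpopt$.

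The odd case is a one‑line reduction: for odd $k$, run the algorithm with the even value $k-1$, obtaining a $(k-3)$‑edge connected $H$ of cost at most the optimum of the analogue of~\eqref{kecss-lp} with requirement $k-1$; since lowering the requirement only enlarges the feasible region, that optimum is at most $\lpopt$, so $c(H)\le\lpopt$. The step I expect to be the real obstacle is the structural lemma (b): converting Jain's token argument — which only guarantees a heavy edge — into one that exhibits a tight cut whose residual requirement is at most $2$, and simultaneously showing (c) that the total connectivity loss over all discards is the absolute constant $2$ rather than a quantity growing with $k$. This is where one must carefully track how the residual requirements evolve under edge‑fixing, and it is where the hypothesis that $k$ is even is used in an essential way.
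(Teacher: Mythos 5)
Your overall architecture — select integral edges, drop constraints whose residual requirement is already $\le 2$, use the potential $c(H)+\lambda$ to bound cost, and reduce odd $k$ to $k-1$ — matches the paper. But there is a genuine gap at the center of step (b), and it is exactly the step the paper identifies as its main technical contribution.

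You write that you would ``first invoke the standard uncrossing fact'' to get a laminar family of tight cuts of the right size, and you locate the ``real obstacle'' in upgrading the token-count from a heavy edge to a low-requirement set. This misidentifies the difficulty. The standard uncrossing fact is a statement about extreme points of a system $x(\dt(S))\ge f(S)$, $0\le x\le 1$ where $f$ is weakly supermodular, and it is proved by the observation that if $A,B$ are crossing tight sets, then weak supermodularity forces $A\cap B, A\cup B$ (or $A-B,B-A$) to be tight as well, so one can replace $\{A,B\}$ in a laminar basis. Once you have discarded the constraints for some collection of sets, the effective requirement function (which equals $f$ on surviving sets and $-\infty$ on discarded ones) is no longer weakly supermodular, and the replacement step breaks: when you uncross $A,B$, the sets $A\cap B,A\cup B$ you want to swap in may well be discarded sets with no constraint present. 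So it is not a priori true that tight constraints of the residual LP form a laminar basis, and indeed the prior work~\cite{hershkowitz2024ghost} treats this as the principal obstruction and introduces ``ghost edges'' to circumvent it. The paper's key new lemma (Lemma~\ref{lcspan}, feeding into Lemma~\ref{laminar}) shows that the tight cuts \emph{can} still be uncrossed after dropping, and the proof uses both two-way uncrossability and the even-parity of $f^\res$ — the latter is precisely what rules out the bad boundary case where a surviving set would have requirement $3$ while both $A\cap B$ and $B-A$ (say) have requirement $2$. Without an argument along these lines your laminar family does not exist and the token count has nothing to run on.

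A second, smaller, misattribution: you say evenness of $k$ is used ``to make the token bookkeeping close with slack exactly $2$ rather than $3$.'' The token-counting step (Lemma~\ref{largereq}) is essentially verbatim Jain's argument and does not use evenness at all; it gives a set with at most $3$ fractional boundary edges, hence residual requirement at most $2$, for any admissible laminar basis. Evenness enters only through the uncrossing lemma. Your bookkeeping in (c) — a set is dropped only once $|\dt_H(S)|\ge k-2$, and $H$ only grows — is correct and is exactly what the paper does; this part needs no extra care.
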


We thus improve upon the guarantees obtained in~\cite{HershkowitzKZ24}; notably, our 
improvements are obtained via a much simpler algorithm and analysis. We show that the
approach leading to the above guarantee can be modified to obtain the following
interesting alternate tradeoff between connectivity and cost.

\begin{theorem}[Proved in Section~\ref{bicriteria}] \label{bicriteria-kecss}
There is a polynomial-time algorithm that returns a $(k-1)$-edge connected subgraph $H$
with $c(H)\leq 1.5\cdot\lpopt$. For unweighted \kecss, i.e., when $c_e=1$ for all 
$e\in E$, the cost bound improves to 
$c(H)\leq\min\bigl\{1.5,\bigl(1+\frac{4}{3k}\bigr)\bigr\}\cdot\lpopt$.
\end{theorem}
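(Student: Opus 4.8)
The plan is to leverage the machinery developed for Theorem~\ref{mainthm} but to "pay" a little in cost in exchange for recovering almost all of the connectivity. Recall that the LP-rounding approach underlying Algorithm~\ref{kecss-alg} produces, from a fractional optimum $x^*$, an integral subgraph that is $(k-2)$-edge connected with cost at most $\lpopt$. The first step is to observe that the deficiency of $2$ in connectivity can be traced to a bounded number of "almost-tight" cuts where the iterative-rounding/token argument loses edges; the standard remedy is to buy an extra fractional copy of the solution (or of a spanning-connectivity structure) to patch these cuts. Concretely, I would scale: run the rounding on $\tfrac{k}{k-1}\cdot x^*$ (or equivalently on the LP with right-hand side $k$ but allowing $x_e$ up to $\tfrac{k}{k-1}$), which is feasible for $(k{-}1)$-edge-connectivity with a deficiency of $2$ built in, hence gives $(k{-}3)$-connectivity at cost $\tfrac{k}{k-1}\lpopt$ — not yet what we want. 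The better route: combine the Theorem~\ref{mainthm} output $H_0$ (which is $(k{-}2)$-edge connected, cost $\le\lpopt$) with an auxiliary cheap structure $H_1$ that boosts connectivity by $1$ everywhere. A subgraph whose edge-set forms, on every cut, at least one extra edge beyond what $H_0$ provides is exactly a connected spanning subgraph relative to the "residual" requirement; since $\hf x^*$ is a feasible fractional solution to a cut-covering LP with requirement $\tfrac k2\ge 1$, the integrality of the spanning-tree / $T$-join type polyhedron lets us extract $H_1$ with $c(H_1)\le \hf\lpopt$. Then $H=H_0\cup H_1$ (taking multiplicities if needed, but here we argue parallel copies are unnecessary on the relevant cuts) is $(k{-}1)$-edge connected with $c(H)\le \hf\lpopt + \lpopt = 1.5\,\lpopt$.

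For the second, unit-cost bound, the key extra ingredient is that with $c_e=1$ the LP optimum is large: $\lpopt\ge \tfrac{kn}{2}$ (summing the constraint $x(\dt(v))\ge k$ over all $v\in V$), so an integral $k$-edge-connected subgraph has at least $\lceil kn/2\rceil$ edges, and more to the point any $(k{-}1)$-edge-connected subgraph we might produce has at least $\lceil (k{-}1)n/2\rceil$ edges. The idea is to start from the $(k{-}2)$-connected solution $H_0$ of Theorem~\ref{mainthm} with $|H_0|\le\lpopt$ edges and add an augmenting set $F$ that raises connectivity from $k{-}2$ to $k{-}1$. The minimum-size such augmentation is governed by the "deficient cuts" of $H_0$, and a counting/uncrossing argument (à la the Gabow–Goemans–Williamson style analysis cited as \cite{GG12}) shows $|F|$ can be taken to be $O(n)$, in fact at most $\tfrac{4n}{3}\cdot\tfrac{1}{k}\cdot$(something) — more precisely one shows $|F|\le \tfrac{4}{3k}\lpopt$ by charging the edges of $F$ against the $n$ vertices with a $\tfrac43$-per-vertex rate and invoking $\lpopt\ge kn/2$. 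Combining, $|H|=|H_0|+|F|\le \lpopt + \tfrac{4}{3k}\lpopt = \bigl(1+\tfrac{4}{3k}\bigr)\lpopt$, and since we always also have the weighted bound, we take the min with $1.5$.

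The order of steps I would carry out: (i) restate the structural output of Algorithm~\ref{kecss-alg} — an $(k{-}2)$-connected $H_0$, $c(H_0)\le\lpopt$, together with whatever description of its tight/deficient cuts the analysis in Section~\ref{kecss} provides; (ii) set up the residual connectivity-augmentation LP for going from $k{-}2$ to $k{-}1$, show $\hf x^*$ is feasible for it, and invoke integrality of the relevant polyhedron (spanning tree on the contracted deficient-cut structure, or a $T$-join) to get $H_1$ with $c(H_1)\le\hf\lpopt$; (iii) conclude $H_0\cup H_1$ is $(k{-}1)$-connected with cost $\le 1.5\lpopt$; (iv) for unit costs, replace step (ii) with the combinatorial augmentation bound, using $\lpopt\ge kn/2$ and an $O(1)$-per-vertex charging argument to get $|F|\le\tfrac{4}{3k}\lpopt$. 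The main obstacle is step (ii)/(iv): ensuring that a single extra fractional (resp. the combinatorial augmentation) really suffices to lift connectivity by exactly $1$ on every cut simultaneously — this needs a clean handle on which cuts are deficient in $H_0$ and an uncrossing argument showing they form a laminar-or-cross-free family on which a spanning-tree-like LP is integral; getting the constant in the unit-cost case down to precisely $\tfrac43$ (rather than $2$) will require the more careful token-counting of the \cite{GG12}-style analysis rather than a naive bound.
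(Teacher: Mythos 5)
Your proposal takes a genuinely different two-phase route from the paper, and it has an unresolved gap at the step you yourself flag as the main obstacle.

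The paper does not construct $H_0$ from Theorem~\ref{mainthm} and then augment. Instead, it modifies Algorithm~\ref{kecss-alg} directly: sets are dropped only once their residual requirement falls to $\leq 1$ (so $\Sc=\{S: f^\res(S)\geq 2\}$), and the algorithm picks every edge with $\hx_e\geq 2/3$ rather than only integral edges. The uncrossing Lemma~\ref{lcspan} is re-proved in this regime as Lemma~\ref{new-lcspan} (notably, \emph{without} even parity), which gives Corollary~\ref{new-laminar}. Lemma~\ref{largereq} then yields a tight set $T$ with $|\dt_F(T)|\leq 3$ and $z(\dt_F(T))\leq 2$; since $f^\res(T)\geq 2$, either an integral edge exists or some edge in $\dt_F(T)$ has $\hx_e\geq 2/3$. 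Rounding edges of value $\geq 2/3$ up to $1$ gives the $1.5$ cost factor directly via the usual potential argument $c(H)+1.5\cdot\lpopt[\text{\ref{reslp}}]\leq 1.5\cdot\lpopt$. The unit-cost bound is then a refinement of \emph{this same algorithm's} cost accounting: $c(H)-\lpopt \leq \min\{|F_1|-\hx(F_1),\,\hx(F_1)/2\}\leq |F_1|/3\leq 2n/3\leq \frac{4}{3k}\lpopt$, using $|F_1|=|\Lc_1|\leq 2n$ and $\lpopt\geq kn/2$.

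The concrete gap in your step (ii): you claim that because $\tfrac12 x^*$ satisfies $x(\dt(S))\geq 1$ for all cuts, ``integrality of the spanning-tree / $T$-join type polyhedron'' lets you extract $H_1$ with $c(H_1)\leq \tfrac12\lpopt$. This is not valid. The cut LP $\{x: x(\dt(S))\geq 1\ \forall\, \es\neq S\subsetneq V,\ 0\leq x\leq 1\}$ is \emph{not} the spanning-tree polytope and has integrality gap $2$ (e.g.\ a cycle with $x_e\equiv\tfrac12$). So from feasibility of $\tfrac12 x^*$ you only get a connected subgraph of cost $\leq 2\cdot\tfrac12\lpopt = \lpopt$, i.e.\ total cost $2\lpopt$, which is no better than Jain. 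To make your plan work you would need the stronger claim that the cuts deficient in $H_0$ admit a covering LP that is integral; these deficient cuts form a cactus, not a laminar family, and even a laminar covering LP is not automatically integral. Your unit-cost step (iv) inherits the same problem: you assert $|F|\leq\frac{4}{3k}\lpopt$ for a minimum augmentation from $(k{-}2)$ to $(k{-}1)$-connectivity without a concrete charging scheme, and you would additionally need the augmenting edges to exist in $E\setminus H_0$. The paper sidesteps all of this by never decoupling the augmentation from the iterative rounding: it builds the $(k{-}1)$-connected solution in one pass, and the only new lemma needed (Lemma~\ref{new-lcspan}) is a mild variant of the uncrossing argument already used for Theorem~\ref{mainthm}.
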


For unweighted \kecss, the approximation factor above improves upon the state-of-the-art
approximation factor of $\bigl(1+\frac{1.91}{k}\bigr)$ for unweighted \kecss due to Gabow
and Gallagher~\cite{GabowG12}, albeit by sacrificing ``one unit'' of edge-connectivity.
Our results thus paint a nuanced picture of the approximation vs. connectivity tradeoff
for \kecss, as depicted in the following figure.

\begin{figure}[ht!]
\centering
\hspace*{-0.25in}
\resizebox{!}{3in}{\input{tradeoff-kecss-arxiv.pdf_t}}
\captionsetup{font=small, labelfont=small}
\caption{The approximation vs. connectivity tradeoff for (a) \kecss, and (b) unweighted
  \kecss. The shaded circles in (a) and the shaded squares in (b) denote our results, as
  stated in Theorems~\ref{kecss-thm} and~\ref{bicriteria-kecss}.  
  The ? denotes the point $(k-1,1)$, which is the tightest polytime-achievable
  guarantee of the form $(\cdot,1)$. The results in~\cite{NutovC25} were obtained
  independently and concurrently; neither group was aware of the others' work
  when they obtained their respective results.
  For unweighted \kecss: Gabow and Gallagher~\cite{GabowG12} actually obtain a
  $\bigl(1+\frac{1.91}{k}\bigr)$-approximation; the results in~\cite{GabowGTW09,GabowG12}
  are not stated as yielding the $(k-2,1)$-guarantee, but they can be translated to obtain
  this guarantee; see Section~\ref{relwork}. \label{tradeoff}}   
\end{figure}

The \kecss-result in Theorem~\ref{kecss-thm} also implies a
$\bigl(1+\frac{3}{k}\bigr)$-approximation guarantee  
for \kecsm, improving upon the $\bigl(1+\frac{10}{k}\bigr)$-approximation obtained
by Hershkowitz et al.~\cite{hershkowitz2024ghost}. The natural LP-relaxation for \kecsm
replaces constraints \eqref{bnd} with the constraints $x\geq 0$. Let $\lpopt[{\kecsmlp}]$
denote the optimal value of this LP.

\begin{theorem}[Proved in Section~\ref{kecsm}] \label{kecsm-thm}  
There is a polynomial time algorithm that 
returns a feasible \kecsm solution $H$ such that
$c(H)\leq\rho_k\cdot\lpopt[{\kecsmlp}]$,
where $\rho_k=\bigl(1+\frac{2}{k}\bigr)$, if $k$ is even, and
$\bigl(1+\frac{3}{k}\bigr)$ if $k$ is odd.
\end{theorem}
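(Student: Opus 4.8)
The plan is to reduce \kecsm to \kecss and invoke Theorem~\ref{kecss-thm}. Fix the \kecsm instance $(G,c)$, solve \kecsmlp to obtain an optimal $x^*$ with $\sum_e c_ex^*_e=\lpopt[{\kecsmlp}]$ and $x^*(\dt(S))\geq k$ for every $\es\neq S\subsetneq V$; I would first observe that we may assume $x^*_e\leq k$ for all $e$, since capping any coordinate at $k$ keeps all cut constraints satisfied (a cut through $e$ already had load at least $\max(k,x^*_e)$) and does not increase the cost. Set $k':=k+2$ when $k$ is even and $k':=k+3$ when $k$ is odd; in both cases $k'$ is even, $k'-2\geq k$, and $k'/k=\rho_k$.

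The only obstruction to directly scaling $x^*$ up to a feasible \kecsslp solution is the box constraint $x_e\leq 1$ in \eqref{kecss-lp}, which $(k'/k)x^*_e$ may violate; but in a multigraph this is trivially circumvented by splitting edges into parallel copies. Form the multigraph $G'$ from $G$ by replacing each edge $e$ with $N_e:=\lceil (k'/k)\,x^*_e\rceil\leq k'$ parallel copies, each of cost $c_e$ (and dropping $e$ if $x^*_e=0$), and define $\tx$ on $E(G')$ by assigning every copy of $e$ the value $(k'/k)x^*_e/N_e\in[0,1]$. Since all copies of $e$ cross $\dt_{G'}(S)$ precisely when $e\in\dt_G(S)$, we get $\tx(\dt_{G'}(S))=(k'/k)\,x^*(\dt_G(S))\geq k'$ for every $\es\neq S\subsetneq V$, so $\tx$ is feasible for the \kecss LP on $G'$ with parameter $k'$, with $\sum_e c_e\tx_e=(k'/k)\sum_e c_ex^*_e=\rho_k\cdot\lpopt[{\kecsmlp}]$. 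Hence $\lpopt[{\kecsslp}]$ for the instance $(G',k')$ is at most $\rho_k\cdot\lpopt[{\kecsmlp}]$.

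Now run Algorithm~\ref{kecss-alg} on $G'$ with the even parameter $k'$; as $G'$ has polynomially many edges (each $N_e\leq k'$), this runs in polynomial time, and by Theorem~\ref{kecss-thm} it returns a $(k'-2)$-edge connected subgraph $H'$ of $G'$ with $c(H')\leq\lpopt[{\kecsslp}]\leq\rho_k\cdot\lpopt[{\kecsmlp}]$. Mapping each chosen copy of an edge $e$ back to $e$ turns $H'$ into a sub-multigraph $H$ of $G$ with $c(H)=c(H')$; this map is a bijection on edge sets preserving endpoints, so $H$ has the same edge-connectivity as $H'$, namely at least $k'-2\geq k$. Thus $H$ is a feasible \kecsm solution with $c(H)\leq\rho_k\cdot\lpopt[{\kecsmlp}]$, proving the theorem. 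There is no real obstacle here beyond Theorem~\ref{kecss-thm} itself: the one step requiring care is the passage through $G'$, which is forced precisely because scaling $x^*$ by $\rho_k>1$ can push a coordinate past the \kecss box bound of $1$.
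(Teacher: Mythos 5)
There is a genuine gap in the running-time argument. You reduce \kecsm to \kecss by forming a multigraph $G'$ with $N_e=\lceil(k'/k)x^*_e\rceil\leq k'$ parallel copies of each edge, and claim that ``$G'$ has polynomially many edges (each $N_e\leq k'$).'' But $k$ is given in binary, so $k'$ (hence $\sum_e N_e$) can be exponential in the input size; the bound $N_e\leq k'$ makes this pseudopolynomial, not polynomial. The paper explicitly flags this: the ``make $\lceil x_e\rceil$ parallel copies and run the \kecss algorithm'' construction is described as yielding a pseudopolynomial-time algorithm, which is then refined in Section~\ref{kecsm}. Note that unlike \kecss (where $k$ is bounded by a vertex degree and hence by $|E|$), \kecsm places no such a priori bound on $k$, so you cannot assume it is small; your otherwise-correct observation that one may cap $x^*_e$ at $k$ only bounds $N_e$ by $k'$, which is not enough.

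The paper's fix, which you are missing, is Algorithm~\ref{kecsm-alg}: compute an extreme-point optimum $\hx$ of \kecsmlp directly on $G$ (no edge duplication), immediately commit the integral part by setting $w_e=\lfloor\hx_e\rfloor$ (stored as a number in binary, not as parallel copies), and pass to the residual cut-requirement function $f^\res(S)=f^{\kecss}(S)-w(\delta_G(S))$. The leftover fractional vector $(\hx_e-\lfloor\hx_e\rfloor)_e$ automatically satisfies the box constraints $0\leq x_e\leq 1$, so from this point on the machinery of Algorithm~\ref{kecss-alg} applies verbatim on the edge set $E'=\{e:\hx_e>\lfloor\hx_e\rfloor\}$, which has at most $|E|$ elements. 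This runs for at most $|E|$ iterations, each solvable in polynomial time via the approximate-min-cut separation oracle. The scaling step you use (run with $k'=k+2$ or $k+3$, and use $\lpopt[{\kecsmlp[{k'}]}]\leq(k'/k)\lpopt[{\kecsmlp}]$) is correct and is exactly what the paper does; it is only the realization of the scaled fractional solution as a bounded-fractional \kecss instance that needs the floor-and-residual trick rather than explicit edge duplication.
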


Theorem~\ref{kecsm-thm} follows from Theorem~\ref{kecss-thm} because if
$x^*$ is an optimal solution to the LP-relaxation for \kecsm, then, say for even $k$, one
can consider the solution $x=\frac{k+2}{k}\cdot x^*$. By making $\ceil{x_e}$ parallel
copies of each edge $e$, one can view $x$ as yielding a feasible solution to 
(\kecsslp[{(k+2)}]),
and then apply Theorem~\ref{kecss-thm}. As described, this yields a pseudopolynomial-time
algorithm, but one can refine this to obtain a polynomial time algorithm, as described in
Section~\ref{kecsm}. 

\vspace*{-1ex}
\paragraph{Degree-bounded \kecss and \kecsm.}
We showcase the versatility of our techniques by showing in Section~\ref{degbnd-kecss},
that they readily extend to yield strong guarantees for the {\em degree-bounded
extensions of \kecss and \kecsm}, wherein we are also given node degree lower- and upper-
bounds $\{\ldeg_v,\udeg_v\}_{v\in V}$, and we seek a minimum-cost (\kecss or \kecsm)
solution satisfying these degree bounds.
Our results for \kecss and \kecsm extend to yield the same cost and connectivity
guarantees with (roughly) an additive violation of $2$ for the degree bounds. For \kecss,
we obtain in polytime a subgraph $H$, where $c(H)$ is at most the 
the optimal value $\lpopt[{\mdkecsslp}]$ of the LP-relaxation of the problem, 
we have $\ldeg_v-2\leq|\dt_H(v)|\leq\udeg_v+2$ for all $v\in V$,
and $H$ is $(k-2)$-edge connected for even $k$, and $(k-3)$-edge connected for odd $k$
(Theorem~\ref{mdkecss-thm}).
For \kecsm, we obtain a $k$-edge connected multigraph $H$ with 
$c(H)\leq\rho_k\text{(optimum)}$ and $\ldeg_v-2\leq|\dt_H(v)|\leq\rho_k\cdot\udeg_v+2$ for
all $v\in V$ (Theorem~\ref{mdkecsm-thm}). (Recall that $\rho_k=\bigl(1+\frac{2}{k}\bigr)$
if $k$ is even, and $\bigl(1+\frac{3}{k}\bigr)$ if $k$ is odd.) 

These are the first results for degree-bounded $\{\kecss,\kecsm\}$ where the cost of the
solution returned is at most the optimum, but the connectivity constraints are violated by 
an additive constant. Prior work on degree-bounded network design, when specialized to the
setting of degree-bounded \kecss, yields results where the solution $H$ obtained is
$k$-edge connected, but $c(H)$ is roughly $2\cdot\lpopt[{\mdkecsslp}]$, and there is some
violation of the degree constraints. The state-of-the-art for the degree violation in
these works is $\ldeg_v\leq|\dt_H(v)|\leq 2\udeg_v+3$ for all $v\in V$~\cite{LauNSS09},
and $|\dt_H(v)|\leq\min\{\udeg_v+3k,2\udeg_v+2\}$ for all $v\in V$, when only degree upper
bounds are present~\cite{LauZ15}; Section~\ref{relwork} discusses prior work
on degree-bounded network design in more detail.

\subsection{Technical contributions and overview}
We now provide a high-level overview of our algorithms and analyses. 
For any $S\sse V$ and $Z\sse E$, let $\dt_Z(S)$ denote $\dt(S)\cap Z$.
Our algorithms build on the
iterative-rounding technique introduced by Jain~\cite{Jai01} and further refined by Singh
and Lau~\cite{SL15}. 
We discuss first the algorithm leading to Theorem~\ref{kecss-thm}.
We begin by computing an extreme-point solution $\hx$ to \eqref{kecss-lp}.
Say that a set $S\sse V$, or its cut-constraint \eqref{con}, is {\em tight} (under
$\hx$), if $\hx\bigl(\dt(S)\bigr)=k$. 
If any variables are set to $1$ in $\hx$, we fix them to $1$ permanently, that is, we pick
the corresponding edges in our solution. 
Otherwise, we
look for a cut-constraint \eqref{con} that is nearly satisfied by the edges already
picked. For \kecss, this corresponds to identifying a set \( S \) with close to \( k \)
edges crossing it that have been fixed to $1$. We would like to then drop the cut-constraint
\eqref{con} for set \( S \), compute an extreme-point solution to the resulting LP, and
iterate.    

In the first iteration, before any constraints are dropped, it is straightforward to find
a set with at least \((k-2)\) edges of value $1$ crossing it. This
follows from the token-counting argument of Jain~\cite{Jai01} and crucially relies on the
existence of a laminar family of tight cut-constraints of full rank: by this,
we mean a collection $\Lc$ of tight node-sets 
such that 
any two $A,B\in\Lc$ satisfy $A\sse B, B\sse A$, or
$A\cap B=\es$ (laminar family), and, letting $F=\{e: 0<\hx_e<1\}$, the vectors
$\bigl\{\chi^{\dt_F(A)}\bigr\}_{A\in\Lc}$ form a (linearly-independent) basis of 
$\spn\bigl(\bigl\{\chi^{\dt_F(S)}\bigr\}_{S\text{ is tight}}\bigr)$ (full rank).
However, as the cut-constraints for certain sets are dropped in subsequent iterations, it
becomes unclear whether such a 
laminar family of tight constraints even exists. 
Hershkowitz, Klein, and Zenklusen~\cite{hershkowitz2024ghost} also notice this difficulty
and come up with the construct of 
adding fake ``ghost'' edges to overcome it. In particular, they show that if they cannot
construct a laminar family of tight constraints with full rank, they can make progress by
contracting certain sets and adding ghost edges, a technique they term 
\emph{ghost value augmentation}.  

{\em Our chief technical insight is that even after dropping cut-constraints that are
nearly satisfied, we can still find a full-rank, laminar family of tight cut-constraints}
(Lemma~\ref{laminar}). The technical engine that leads to this laminar structure is 
a key {\em ``uncrossing'' result} that we establish, showing that even after dropping
cut-constraints that are nearly satisfied, tight sets can be uncrossed
(Lemma~\ref{lcspan}). Uncrossing tight sets means that for any two not-dropped tight
sets $A,B$, 
we can come up with a laminar collection of not-dropped tight sets $\Lc$ such that
$\bigl\{\chi^{\dt_F(A)},\chi^{\dt_F(B)}\bigr\}\sse\spn\bigl(\bigl\{\chi^{\dt_F(S)}\bigr\}_{S\in\Lc}\bigr)$. 
With this key uncrossing property in hand, we can utilize standard arguments to find the
desired laminar family, and use the same token-counting argument of~\cite{Jai01} to find a
set whose cut-constraint is nearly satisfied, drop this set, and iterate.

We point out that establishing the uncrossing property is the place where we completely
diverge from the approach in~\cite{hershkowitz2024ghost}. 
Hershkowitz et al.~\cite{hershkowitz2024ghost} identify the potential lack of
uncrossability after dropping sets, as the barrier toward making standard approaches
work,%
\footnote{Indeed, they state that ``The major barrier to
the standard iterative relaxation approach is that it does not appear that uncrossing is
possible.''} 
and (as noted above) come up with the 
novel workaround of ghost edges and ghost-value augmentation as a means of restoring
uncrossability and thereby recovering the laminar structure of tight cut-constraints.
In contrast, and contrary to the sentiment expressed in~\cite{hershkowitz2024ghost},
our key insight is that {\em tight cut-constraints can still be uncrossed
(as is)},
without 
the need for any workaround. 
This is the source of our savings, compared to~\cite{hershkowitz2024ghost}, both in
the additive approximation factor, where our bounds are nearly optimal, and 
the simplicity of the underlying algorithm and analysis. 
For even $k$, our algorithm (see Algorithm~\ref{kecss-alg}) is simply the
following: 
(1) compute an extreme-point solution to the current LP; 
(2) pick edges whose variables are set to $1$, 
(3) drop sets crossed by at least $(k-2)$ picked edges, 
(4) update the requirements, i.e., the right-hand-sides (RHSs) of constraints \eqref{con}
for the remaining sets to take into account the edges already picked; 
{\sc Iterate.}
The laminar structure of tight cut constraints (which follows due to uncrossing) enables
one to easily argue that there is always some new edge that is picked, and some set that
is dropped, in every iteration. 
(As a byproduct, we point out that this proves the so-called {\em light cut property}
in~\cite{hershkowitz2024ghost}, which is stated as a ``very interesting open
problem.'') 

We prove that the uncrossing property holds for any cut-requirement function $f:2^V\mapsto\Z$
(specifying the RHSs of constraints \eqref{con}) that is {\em two-way uncrossable} and has
{\em even parity}%
\footnote{ 
Two-way uncrossable means that we have
$f(A)+f(B)\leq\min\bigl\{f(A\cap B)+f(A\cup B),\,f(A-B)+f(B-A)\bigr\}$ 
for any $A,B\sse V$ such that $A\cap B, A-B, B-A, V-(A\cup B)$ are all non-empty.  
Even parity means that $f(A)+f(B)+f(A\cup B)$ is even for all $A,B\neq\es$ and disjoint.} 
(see Definition~\ref{crprops}), properties that we crucially exploit in our proof. 
We believe that this is a tool of independent interest. 
Observe that, for even $k$, these properties hold at the beginning, when
$f=f^{\kecss}$ (where $f^{\kecss}(S)=k$ if $\es\neq S\subsetneq V$, and is $0$ otherwise),
and they continue to hold when $f$ is updated after picking edges (see Lemma~\ref{twoway}
(b) and Claim~\ref{even}). 

To obtain $(k-1)$-edge connectivity incurring cost at most $1.5\cdot\lpopt$
(Theorem~\ref{bicriteria-kecss} and Section~\ref{bicriteria}), we tweak the
above approach. We now drop a set only when it is crossed by at least $(k-1)$ picked
edges. We argue that the uncrossing property still holds with this more restrictive dropping
rule; see Lemma~\ref{new-lcspan}. Interestingly, to establish this, we only need that the
cut-requirement function $f$ is two-way uncrossable (but it need not have even parity). So
we can again obtain a laminar family of tight cut-constraints. It need not now be the case
that there is always an integral edge on the boundary of some set in the laminar family;
however, we can show that there is always some edge with $\hx_e\geq\frac{2}{3}$. So we
pick such edges, drop sets as described above, and iterate, to obtain the above
guarantee. With unit costs, one can perform a tighter cost analysis using the fact that
$\lpopt\geq\frac{kn}{2}$ in this case.

Finally, to handle the degree-bounded versions of \kecss and \kecsm
(Section~\ref{degbnd-kecss}), we proceed quite similarly to the non-degree-bounded
versions. Essentially, everything works out as before, since singleton sets corresponding
to tight node-degree constraints can be added to any laminar family while preserving
laminarity. So we can again pick integral edges, and when we pick such edges we also
modify the node degree bounds and drop the degree constraints for a node if they are close
enough to being satisfied.  

\medskip
We believe that the insights developed in this work will be valuable in obtaining similar
results for more-general network design problems. In particular, one enticing question
that arises is whether similar additive guarantees are possible for more-general survivable
network design problems.

\subsection{Other related work} \label{relwork}
\kecss and its special cases have been studied extensively in the literature. 
One special case that has received much attention is unweighted \kecss (and unweighted
\kecsm), i.e., where all edges have cost $1$.
When \( G \) is simple, 
Cheriyan and Thurimella~\cite{CT00} presented a \((1 + 2/k)\)-approximation algorithm
using a purely combinatorial approach.  
Gabow, Goemans, Tardos, and Williamson~\cite{GGTW09}, used the iterative-rounding framework of
Jain~\cite{Jai01} to obtain a $(1+2/k)$-approximation algorithm for unweighted \kecss even
on multigraphs, and Gabow and Gallagher~\cite{GabowG12} improved the approximation factor
to $\bigl(1+\frac{1.91}{k}\bigr)$ (and $\bigl(1+\frac{1.89}{k}\bigr)$ for even $k$), which is
the current state-of-the art for this problem.
The results of~\cite{GabowGTW09,GabowG12}, which compare the cost against the LP-optimum,
can also be used to yield guarantees of the form stated in Theorem~\ref{kecss-thm}, 
where one obtains cost at most the optimum by
sacrificing some connectivity. This is because one can easily see that
$\lpopt[{\kecsslp[{(k-2)}]}]\leq\frac{k-2}{k}\cdot\lpopt$, and so (for instance) the result
of Gabow et al.~\cite{GabowGTW09} for \kecss[{(k-2)}] yields a $(k-2)$-edge connected
subgraph of cost at most
$\bigl(1+\frac{2}{k-2}\bigr)\cdot\lpopt[{\kecsslp[{(k-2)}]}]\leq\lpopt$.
On the hardness side, Gabow et al.~\cite{GabowGTW09} showed that, for all $k\geq 2$, there
is some absolute constant $c>0$ such that unweighted \kecss does not
admit a $\bigl(1+\frac{c}{k}\bigr)$-approximation, unless \p = \np. 
After our work was posted on the arXiv~\cite{KumarS25v1}, Nutov posted a manuscript on the
arXiv~\cite{Nutov25} obtaining a subset of the results obtained here for \kecss and
\kecsm, acknowledging that these were obtained subsequent to our work.

Unlike \kecss (with general edge costs), for which improving upon the long-standing
2-approximation algorithm remains an open problem, \kecsm admits better approximation
algorithms. For \kecsm, 
Frederickson and J\'aJ\'a~\cite{FJ81,FJ82} gave a \(3/2\)-approximation algorithm for even
\( k \) and a \((3/2 + O(1/k))\)-approximation for odd \( k \). Karlin, Klein, Oveis
Gharan, and Zhang~\cite{KKOZ22} made a significant improvement, and gave a \((1 +
O(1/\sqrt{k}))\)-approximation algorithm for \kecsm. 
Most recently, Hershkowitz et al.~\cite{hershkowitz2024ghost} gave an asymptotically tight
\((1 + O(1/k))\)-approximation algorithm for \kecsm.   

There has been significant recent progress on \kecsm for $k = 2$, largely due to its
connection with the metric traveling salesman problem (metric-TSP). In a recent
breakthrough, Karlin, Klein, and Oveis Gharan~\cite{KKO21a, KKO22} presented a better than
$1.5$ approximation for metric-TSP, which also implies a better-than-$1.5$ approximation
algorithm for $\kecsm[2]$. 

Iterative rounding is a powerful technique introduced by Jain~\cite{Jai01}, that involves
iteratively fixing variables by exploiting structural properties of extreme-point optimal
solutions. Later work of~\cite{SL15,LauNSS09} on degree-bounded network design added
another ingredient to this technique, namely iteratively dropping some constraints. 
This paradigm of 
{\em iterative rounding and relaxation} has proved to quite successful in developing
algorithms for network-design problems, and more generally in combinatorial optimization;
see~\cite{LRS11} for an extensive study. 
In particular, the state-of-the-art for the survivable network-design problem (\sndp) with
node degree bounds is achieved via iterative rounding and relaxation. 
For general \sndp, or even the special case of \kecss, the only work 
that we are aware of that considers both lower- and upper-bounds on node degrees is due to  
Lau et al.~\cite{LauNSS09}.
They 
present a polytime algorithm that outputs a subgraph $H$ satisfying all the connectivity
requirements such that $c(H)$ is at most twice the LP optimum, 
and $\ldeg_v\leq|\dt_H(v)|\leq 2\udeg_v+3$ for all $v\in V$.
When only degree upper bounds are present, the degree-violation above was
improved by~\cite{LouisV10} and then~\cite{LauZ15} to 
$|\dt_H(v)|\leq\min\{\udeg_v+3r_{\max},2\udeg_v+2\}$ for all $v\in V$ (with the same bound on
$c(H)$), where $r_{\max}$ is the maximum edge-connectivity requirement between any pair of
nodes.  
When $G$ is the complete graph, the edge costs from a metric, and with only
degree upper bounds, Chan et al.~\cite{ChanFLY11}, devised a polytime algorithm that
returns a feasible solution (i.e., where all connectivity {\em and} degree constraints are
satisfied) of cost at most $O(1)$ times the LP-optimum. In particular, for degree-bounded
\kecss, they obtain a $k$-edge connected subgraph satisfying the degree bounds of cost
at most $\bigl(1+\frac{1}{k}\bigr)$ times the LP-optimum. 
Finally, we note that for the degree-bounded MST problem, Singh and Lau~\cite{SL15}
obtained a tight result, producing a spanning tree of cost at most the optimum with an
additive violation of $1$ for the degree (lower and upper) bounds.

We note that, with the notable exception of~\cite{hershkowitz2024ghost}, almost all
applications of iterative rounding and relaxation for
network-design problems proceed by not touching the network-connectivity constraints,
in order to retain the uncrossing properties.

\section{Preliminaries and notation} \label{prelim}
Recall that we are given an undirected graph $G=(V,E)$ with nonnegative edge costs
$\{c_e\}_{e\in E}$. For a subset $S\sse V$ and subset $Z\sse E$, which we will
interchangeably view as the subgraph $(V,Z)$ of $G$, we use $\dt_Z(S)$ to denote
$\dt_G(S)\cap Z$, i.e., the edges of $Z$ on the boundary of $S$.
For $x\in\R^E$, we use $\supp(x):=\{e:E: x_e>0\}$ to denote the support of $x$.

We say that two subsets $A,B\sse V$ are {\em crossing}, if all four sets 
$A\cap B, A-B, B-A, V-(A\cup B)$ are non-empty; we sometimes say that $A, B$ are 
{\em weakly-crossing} if $A\cap B, A-B, B-A$ are all non-empty. A {\em laminar family}
$\Lc$ of sets denotes a collection of sets where no two sets in the collection are
weakly-crossing.

\vspace*{-1ex}
\paragraph{Cut-requirement functions.}
It will be useful to specify \kecss (and \kecsm) using the framework of 
{\em cut-requirement functions}. 
A cut-requirement function is a function $f:2^V\mapsto\Z$. This can be used to specify the
network-design problem where we seek a minimum-cost solution $H$ such that 
$|\dt_H(S)|\geq f(S)$ for all $S\sse V$.
Cut-requirement functions constitute a very versatile framework for specifying
network-design problems: for instance, \kecss corresponds to the cut-requirement 
function $f^{\kecss}$ where $f^{\kecss}(S):=k$ for all $\es\neq S\subsetneq V$, and
$f^{\kecss}(\es)=f^{\kecss}(V)=0$; 
various other network-design problems, such as the $T$-join problem, point-to-point
connection problem etc., can be modeled using suitable cut-requirement functions 
(see, e.g.,~\cite{goemans1995general, goemans1997primal}). We define below various properties of
cut-requirement functions. 

\begin{definition} \label{crprops}
Let $f:2^V\mapsto\R$. We say that:
\begin{enumerate}[label=$\bullet$, topsep=0.2ex, noitemsep, leftmargin=*]
\item $f$ is {\em symmetric} if $f(S)=f(V-S)$ for all $S\sse V$; 
$f$ is {\em normalized} if $f(\es)=f(V)=0$
\item $f$ is {\em crossing supermodular} if for any crossing sets $A,B\sse V$, we have  
$f(A)+f(B)\leq f(A\cap B)+f(A\cup B)$
\item $f$ is {\em two-way uncrossable} if for any crossing sets $A,B\sse V$, we have 
$f(A)+f(B)\leq\min\bigl\{f(A\cap B)+f(A\cup B),\,f(A-B)+f(B-A)\bigr\}$. 
\item $f$ is {\em weakly supermodular} if for any sets $A,B\sse V$, we have 
$f(A)+f(B)\leq\max\bigl\{f(A\cap B)+f(A\cup B),\,f(A-B)+f(B-A)\bigr\}$. 
\end{enumerate}

\noindent
For $f:2^V\mapsto\Z$, we say that $f$ has {\em even parity} if for any non-empty disjoint 
sets $A,B\sse V$, we have that $f(A)+f(B)+f(A\cup B)$ is even.
\end{definition}

Some relationships between these properties are easy to infer. Clearly, if $f$ is
two-way uncrossable, then it is also crossing supermodular.
Lemma~\ref{twoway} shows that the converse is also true, if $f$ is symmetric.
Also, if $f$ is symmetric and two-way uncrossable, then it is weakly supermodular: if
sets $A,B\sse V$ do not cross, then we have $\{A-B,B-A\}=\{A,B\}$, or 
$\{A\cap B,A\cup B\}=\{A,B\}$, or $\{V-A,V-B\}=\{A-B,B-A\}$.

In our algorithm, we iteratively pick edges, and modify the cut-requirement function
correspondingly. 
Lemma~\ref{twoway} (b) and Claim~\ref{even} show respectively that this preserves two-way
uncrossability and even parity.  

\begin{lemma} \label{twoway}
Let $f:2^V\mapsto\R$. 
\begin{enumerate}[label=(\alph*), topsep=0.1ex, noitemsep, leftmargin=*]
\item If $f$ is symmetric and crossing supermodular, then it is two-way uncrossable.
\item Let $w:E\mapsto\R_+$. If $f$ is two-way uncrossable, then the function
$g:2^V\mapsto\R$ defined by $g(S):=f(S)-w\bigl(\dt(S)\bigr)$ for all $S\sse V$, is also
two-way uncrossable.
\end{enumerate}
\end{lemma}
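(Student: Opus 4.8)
The plan is to prove the two parts of Lemma~\ref{twoway} separately, both via straightforward case analysis exploiting symmetry.

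\textbf{Part (a).} Suppose $f$ is symmetric and crossing supermodular; I want to show the ``second half'' of two-way uncrossability, namely $f(A)+f(B)\le f(A-B)+f(B-A)$ for crossing $A,B$. The key observation is that $A$ and $V-B$ also cross when $A,B$ cross: the four sets $A\cap(V-B)=A-B$, $A-(V-B)=A\cap B$, $(V-B)-A=V-(A\cup B)$, and $V-(A\cup(V-B))=B-A$ are exactly the same four ``quadrants'' that witness the crossing of $A$ and $B$, just relabeled, and all are non-empty. So I can apply crossing supermodularity to the crossing pair $A,\,V-B$: $f(A)+f(V-B)\le f(A\cap(V-B))+f(A\cup(V-B))=f(A-B)+f(V-(B-A))$. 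Now invoke symmetry twice: $f(V-B)=f(B)$ and $f(V-(B-A))=f(B-A)$. This yields $f(A)+f(B)\le f(A-B)+f(B-A)$, which together with the crossing-supermodular inequality $f(A)+f(B)\le f(A\cap B)+f(A\cup B)$ gives the claimed minimum. (I should double-check that I only need symmetry on the specific sets appearing, which is fine since the hypothesis gives it for all $S$.)

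\textbf{Part (b).} Let $g(S)=f(S)-w(\dt(S))$ with $w\ge 0$, and let $A,B$ cross. I want $g(A)+g(B)\le\min\{g(A\cap B)+g(A\cup B),\,g(A-B)+g(B-A)\}$. By hypothesis the $f$-part already satisfies the corresponding inequality, so it suffices to show the $w$-part goes the right way, i.e. for the degree function $d(S):=w(\dt(S))$ we have $d(A)+d(B)\ge d(A\cap B)+d(A\cup B)$ and $d(A)+d(B)\ge d(A-B)+d(B-A)$ — that is, $d$ is \emph{submodular} and also satisfies the ``difference'' version. This is the standard posimodularity/submodularity of cut functions: classify each edge $e=\{u,v\}$ by which of the four quadrants its endpoints lie in, and check that the coefficient of $w_e$ on the left-hand side is always at least the coefficient on the right-hand side, for both inequalities. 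For the submodular inequality this is classical; for $d(A)+d(B)\ge d(A-B)+d(B-A)$ one can either redo the edge-classification or, more slickly, note that $d$ is symmetric ($d(S)=d(V-S)$) and reduce the second inequality to the first by replacing $B$ with $V-B$ exactly as in part (a). Subtracting the correctly-signed $w$-inequalities from the $f$-inequalities then gives two-way uncrossability of $g$.

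The routine edge-by-edge bookkeeping in part (b) is the only place any real computation happens, and even there the ``replace $B$ by $V-B$'' trick lets the second inequality piggyback on the first, so neither part presents a genuine obstacle; the main thing to be careful about is verifying non-emptiness of all four quadrants when I claim that $(A,V-B)$ is a crossing pair, which is immediate since those quadrants coincide with the quadrants of $(A,B)$.
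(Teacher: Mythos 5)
Your proof is correct and matches the paper's approach essentially exactly: part (a) applies crossing supermodularity to the crossing pair $(A,V-B)$ and finishes via two uses of symmetry, and part (b) reduces to the symmetry and submodularity of the cut function $w\bigl(\dt(\cdot)\bigr)$, with the ``difference'' inequality piggybacking on the intersection/union one via the same $B\mapsto V-B$ substitution. The only cosmetic difference is that you spell out the quadrant-relabeling and the edge-classification justification, which the paper treats as well known.
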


\begin{proof}
For part (a), we only need to argue that $f(A)+f(B)\leq f(A-B)+f(B-A)$. This follows by
noting that $f(B)=f(V-B)$, applying the crossing-supermodularity property to sets 
$A, V-B$, which are also crossing, and noting that $f\bigl(A\cap (V-B)\bigr)=f(A-B)$ and  
$f\bigl(A\cup (V-B)\bigr)=f\bigl(V-(B-A)\bigr)=f(B-A)$.  

Part (b) follows because the cut function $w\bigl(\dt(S)\bigr)$ is symmetric and
submodular, and so for any two sets $A,B\sse V$, we have
\[
w\bigl(\dt(A)\bigr)+w\bigl(\dt(B)\bigr)\geq\max\Bigl\{
w\bigl(\dt(A\cap B)\bigr)+w\bigl(\dt(A\cup B)\bigr),\,
w\bigl(\dt(A-B)\bigr)+w\bigl(\dt(B-A)\bigr)\Bigr\}. \qedhere
\]
\end{proof}

\begin{claim} \label{even}
Let $f:2^V\mapsto\Z$ have even parity, and $H\sse E$. Define $g:2^V\mapsto\Z$ by
$g(S):=f(S)-|\dt_H(S)|$ for all $S\sse V$. Then, $g$ has even parity.
\end{claim}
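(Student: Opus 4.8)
The plan is to reduce the claim to a parity statement purely about cut-sizes. Since
$g(A)+g(B)+g(A\cup B) = \bigl(f(A)+f(B)+f(A\cup B)\bigr) - \bigl(|\dt_H(A)|+|\dt_H(B)|+|\dt_H(A\cup B)|\bigr)$,
and the first bracketed quantity is even by the even-parity hypothesis on $f$, it suffices to show that $|\dt_H(A)|+|\dt_H(B)|+|\dt_H(A\cup B)|$ is even for every pair of non-empty disjoint sets $A,B\sse V$.

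To prove this, I would fix such $A,B$, set $C:=V\sm(A\cup B)$, and count the contribution of each edge $e\in H$ to the sum $|\dt_H(A)|+|\dt_H(B)|+|\dt_H(A\cup B)|$ according to which of the three parts $A,B,C$ contain its endpoints. An edge with both endpoints inside a single part contributes $0$ to all three cuts. An edge between $A$ and $B$ contributes $1+1+0=2$; an edge between $A$ and $C$ contributes $1+0+1=2$; an edge between $B$ and $C$ contributes $0+1+1=2$. Hence every edge of $H$ contributes an even amount, so the total is even, which is exactly what is needed.

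There is no real obstacle in this argument; the only minor point to be careful about is that $H$ may be a multigraph (with parallel edges or self-loops, since the paper allows $G$ to be a multigraph). This causes no issue: the per-edge contribution count above is insensitive to edge multiplicities, and self-loops contribute $0$ to every cut, so the case analysis is exhaustive and the conclusion holds verbatim.
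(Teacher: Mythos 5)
Your proof is correct and follows essentially the same route as the paper's: both reduce the claim to showing that $|\dt_H(A)|+|\dt_H(B)|+|\dt_H(A\cup B)|$ is even, and your per-edge case analysis over the three parts $A$, $B$, $V\sm(A\cup B)$ is just an unpacked version of the paper's identity $|\dt_H(A)|+|\dt_H(B)|+|\dt_H(A\cup B)|=2|\dt_H(A\cup B)|+2\tht$, where $\tht$ is the number of $A$--$B$ edges in $H$. Your extra remark about parallel edges and self-loops is harmless and correct, though the paper does not bother to spell it out.
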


\begin{proof}
Let $A,B\sse V$ be non-empty disjoint sets. Then,
\[
f(A)+f(B)+f(A\cup B)-g(A)-g(B)-g(A\cup B)=|\dt_H(A)|+|\dt_H(B)|+|\dt_H(A\cup B)|
=2|\dt_H(A\cup B)|+2\tht
\]
where $\tht$ is the number of edges with one endpoint in $A$ and other endpoint in $B$. 
So $f(A)+f(B)+f(A\cup B)$ and $g(A)+g(B)+g(A\cup B)$ have the same parity; so since $f$
has even parity, $g$ has even parity.
\end{proof}

Finally, we note that since we are working on undirected graphs, we may assume that the
cut-requirement function $f$ is symmetric and normalized. 
In particular, if $f$ is 
two-way uncrossable, then we can always
``inject'' symmetry and normalization while maintaining this property,
and without changing the underlying network-design problem. 
The following result, proved in Appendix~\ref{append-prelim}, makes this precise.

\begin{lemma} \label{symnormal}
Let $f:2^V\mapsto\R$, and $x\in\R_+^E$. 
Define the symmetrization and normalization of $f$ to be the function $g:2^V\mapsto\R$
defined by $g(S):=\max\{f(S),f(V-S)\}$ for all $\es\neq S\subsetneq V$, and $g(\es)=g(V)=0$.
\begin{enumerate}[label=(\alph*), topsep=0.1ex, noitemsep, leftmargin=*]
\item If $f$ is two-way uncrossable, then so is $g$. \label{symtwoway}
\item We have $x\bigl(\dt(S)\bigr)\geq f(S)$ for all $\es\neq S\subsetneq V$ iff
$x\bigl(\dt(S)\bigr)\geq g(S)$ for all $S\sse V$. \label{symndp}
\item Suppose that $x\bigl(\dt(S)\bigr)\geq f(S)$ for all $\es\neq S\subsetneq V$.
Let $T\sse V$. If $x\bigl(\dt(T)\bigr)=f(T)$ then $x\bigl(\dt(T)\bigr)=g(T)$.
\label{symtight}
\end{enumerate}
\end{lemma}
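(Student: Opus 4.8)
The plan is to prove the three parts in the order (b), (c), (a), since (b) and (c) are essentially bookkeeping about cuts, while (a)---the two-way uncrossability of the symmetrized function---is the part that actually needs a case analysis.

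For part~(b): the key observation is that $\dt(S)=\dt(V-S)$ for every $S$, so $x\bigl(\dt(S)\bigr)=x\bigl(\dt(V-S)\bigr)$. Hence for any nonempty proper $S$, the single inequality $x\bigl(\dt(S)\bigr)\geq g(S)=\max\{f(S),f(V-S)\}$ is equivalent to the pair of inequalities $x\bigl(\dt(S)\bigr)\geq f(S)$ \emph{and} $x\bigl(\dt(V-S)\bigr)\geq f(V-S)$. Ranging over all nonempty proper $S$, the family $\{x(\dt(S))\geq g(S)\}$ is therefore exactly the family $\{x(\dt(S))\geq f(S)\}$. The cases $S=\es$ and $S=V$ contribute $0\geq 0$ on the $g$-side and are vacuous on the $f$-side since $\dt(\es)=\dt(V)=\es$ (so $x(\dt(\es))=0$, and whatever $f(\es),f(V)$ are, these constraints can be ignored as they are not part of the network-design problem as stated). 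Part~(c) is the same idea specialized: if $x\bigl(\dt(S)\bigr)=f(S)$, then since by part~(b)'s reasoning $x\bigl(\dt(S)\bigr)\geq g(S)=\max\{f(S),f(V-S)\}\geq f(S)=x\bigl(\dt(S)\bigr)$, equality holds throughout, so $x\bigl(\dt(S)\bigr)=g(S)$.

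For part~(a), I want to show $g(A)+g(B)\leq\min\{g(A\cap B)+g(A\cup B),\,g(A-B)+g(B-A)\}$ for crossing $A,B$. By symmetry of the whole setup under complementation (note $g$ is symmetric and normalized by construction, and for crossing $A,B$ the four sets $A\cap B$, $A-B$, $B-A$, $V-(A\cup B)$ are all nonempty), it suffices to bound $g(A)+g(B)$ by $g(A\cap B)+g(A\cup B)$; the other bound $g(A-B)+g(B-A)$ then follows by replacing $B$ with $V-B$ (which is crossing with $A$ as well, since the four relevant sets become $A-B$, $A\cap B$, $V-(A\cup B)$, $B-A$, all nonempty) and using $g(B)=g(V-B)$, $g(A\cap(V-B))=g(A-B)$, $g(A\cup(V-B))=g(V-(B-A))=g(B-A)$. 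So the crux is: for crossing $A,B$, $g(A)+g(B)\leq g(A\cap B)+g(A\cup B)$. Write $g(A)=f(A')$ where $A'\in\{A,V-A\}$ is the maximizer, and similarly $g(B)=f(B')$ with $B'\in\{B,V-B\}$. The plan is to do a four-way case split on $(A',B')\in\{(A,B),(A,V-B),(V-A,B),(V-A,V-B)\}$, and in each case apply two-way uncrossability of $f$ to the pair $A',B'$ --- checking in each case that $A',B'$ actually cross (which they do, since $A,B$ cross), and then identifying $A'\cap B'$, $A'\cup B'$, $A'-B'$, $B'-A'$ with the appropriate members of $\{A\cap B, A\cup B, A-B, B-A\}$ (up to complement). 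For instance, if $A'=A$, $B'=V-B$, two-way uncrossability gives $f(A)+f(V-B)\leq f(A-B)+f(B-A)$, and since $g(A-B)\geq f(A-B)$, $g(B-A)\geq f(B-A)$, $g(A\cap B)\geq\dots$ --- here I need the inequality to land on $g(A\cap B)+g(A\cup B)$, so I should instead use the \emph{other} half of two-way uncrossability when it routes to the right pair, namely $f(A)+f(V-B)\leq f(A\cap(V-B))+\dots$; the point is that for \emph{each} of the four cases, \emph{one} of the two inequalities in the definition of two-way uncrossability of $f$ produces an upper bound whose two terms are each $\leq$ one of $g(A\cap B)$, $g(A\cup B)$ (using $g\geq f$ and $g$'s symmetry to match terms like $f(V-(A\cap B))=f(A\cap B)$-side).

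The main obstacle I expect is purely the organization of this last case analysis: making sure that in every one of the four cases, after applying two-way uncrossability of $f$ to $(A',B')$, the two resulting terms on the right-hand side can each be bounded by $g$ evaluated at (a complement of) $A\cap B$ or $A\cup B$, and that I pick the correct one of the two inequalities in the two-way uncrossability definition. Nothing deep happens --- it is four short computations each using $g\geq f$ and $g(\cdot)=g(V-\cdot)$ --- but it is the kind of thing where a sloppy bookkeeping choice routes a term to $g(A-B)$ when one wanted $g(A\cup B)$. I would handle this by first proving the ``easy direction'' $g(A)+g(B)\leq g(A\cap B)+g(A\cup B)$ carefully with the four cases, and then getting $g(A)+g(B)\leq g(A-B)+g(B-A)$ for free by the complementation trick above, rather than redoing the cases.
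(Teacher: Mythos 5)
Your proof is correct, and for parts (b) and (c) it is the same bookkeeping as the paper's proof; note (as the paper also implicitly does) that part (c) only makes sense under the standing hypothesis of part (b), since it uses $x\bigl(\dt(S)\bigr)\geq g(S)$.

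For part (a), your packaging differs slightly from the paper's but is valid and arguably cleaner: you prove only crossing supermodularity of $g$ via the four-case split on $(T_A,T_B)$, then derive the bound $g(A-B)+g(B-A)$ by the complementation $B\mapsto V-B$, using that $g$ is symmetric by construction. That second step is precisely the content of the paper's own Lemma~\ref{twoway}(a), so you could cite it and save yourself from restating the trick. The paper instead verifies both halves of the minimum directly in each of the four cases, which works because after the identifications $T_A\cap T_B$, $T_A\cup T_B$, $T_A-T_B$, $T_B-T_A$ the RHS of the $f$-two-way-uncrossability inequality matches the desired minimum term-for-term under $f\leq g$ and $g(\cdot)=g(V-\cdot)$. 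Your worry about routing a term to the wrong pair does not materialize: in each of the four cases one of the two two-way-uncrossability inequalities for $f$ lands exactly on $\{A\cap B,\,A\cup B\}$ up to complementation, and you never need to fall back on the $\{A-B,\,B-A\}$ pair to finish the supermodularity bound. So the proposal is sound; it is a minor reorganization of the paper's argument rather than a genuinely different route.
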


\section{Algorithm for \boldmath \kecss} \label{kecss}
We now describe and analyze the algorithm 
for \kecss that yields a $(k-2)$-edge connected subgraph of cost at most $\lpopt$, for
even $k$, thereby proving Theorem~\ref{kecss-thm}. In Section~\ref{bicriteria}, we show
how the the algorithm and analysis can be modified to obtain a $(k-1)$-edge connected
subgraph of cost at most $1.5\cdot\lpopt$ (for all $k$).
Recall the following LP-relaxation for \kecss.
\begin{alignat}{3}
\min & \quad & \sum_{e\in E}c_ex_e \tag{\kecsslp} \\
\text{s.t.} \quad && x\bigl(\dt(S)\bigr) & \geq k
\qquad && \forall \es\neq S\subsetneq V \tag{\ref{con}} \\
&& 0 \leq x_e & \leq 1 && \forall e\in E. \tag{\ref{bnd}}
\end{alignat}
Throughout this section, we assume that $k$ is even, since the case of odd $k$ can be
handled by considering $k-1$, which is even. 
Note that since $k$ is even, $f^{\kecss}$ has even parity. 
In terms of cuts, our goal is to produce a subgraph $H$ of cost at most $\lpopt$
such that $|\dt_H(S)|\geq k-2$ for all $\es\neq S\subsetneq V$.

\paragraph{An overview.}
We first give an overview of our algorithm and analysis. Our algorithm is actually quite
simple. As noted earlier, we iteratively pick edges. 
Letting $H$ denote the currently picked edges, we consider the {\em residual LP}
\eqref{reslp}, which is of the same form as \eqref{kecss-lp}, 
but with the RHS of constraints \eqref{con} replaced by the residual cut-requirement
function given by $f^\res(S):=f^{\kecss}(S)-|\dt_H(S)|$ for all $S\sse V$. 
Note that by Lemma~\ref{twoway} (b), $f^\res$ is two-way uncrossable, and by
Claim~\ref{even}, $f^\res$ has even parity.

We compute an extreme-point optimal solution $\hx$ to this LP.
It is well known that when the RHS of constraints \eqref{con} is given by a
weakly-supermodular function, for any extreme-point solution to \eqref{kecss-lp}, one can  
associate a laminar family of tight constraints \eqref{con} that uniquely defines the
extreme point (see, e.g.,~\cite{Jai01}).
We exploit this laminar structure to argue that if all sets of this laminar family have
residual requirement 
at least $3$, then $\hx$ must contain some integral edge (i.e., an edge $e$ with
$\hx_e=1$) and there is some tight set $T$ from the laminar family with at most $3$
fractional edges crossing it; see Lemma~\ref{largereq}. We pick the integral edges, and
note that after doing so the residual requirement of $T$ becomes at most $2$.

We have $|\dt_H(T)|\geq k-2$, and so we have achieved our goal for set $T$. The
natural thing to do now is to drop set $T$ from the collection of constraints
\eqref{con}. This is indeed what we do: we drop {\em all} sets whose residual
requirement is at most $2$ (see step~\ref{kecss-update}). This however means that the RHS 
of constraints \eqref{con} is no longer given by a weakly-supermodular function. (That is,
if we treat the requirement of dropped sets as $0$, then the resulting cut-requirement
function is not weakly supermodular.)
Nevertheless, 
{\em we prove that any extreme-point solution to the resulting LP can still be defined via
a laminar family of tight constraints} (Lemma~\ref{laminar}). This is our 
{\em key technical insight}, and the result that drives our algorithm and analysis.
We show that this property holds for any cut-requirement function that is two-way
uncrossable and has even parity, and we believe that this result is of independent
interest and may have applications beyond the current context.

\begin{procedure}[t!] 
\caption{$k$-ECSS-Alg() \textnormal{\qquad\qquad // LP-rounding algorithm for \kecss}
\label{kecss-alg}} 
\KwIn{\kecss instance $\bigl(G=(V,E),\{c_e\geq 0\}_{e\in E},\,\text{\em even}\ k\geq 4\bigr)$}
\KwOut{subset $H\sse E$}

Initialize $E'\assign E$, $H\assign\es$, $f^\res=f^{\kecss}$ (i.e., $f^\res(S)=k$ for all
$\es\neq S\subsetneq V$, $f^\res(\es)=f^\res(V)=0$). 

Let $\Sc=\{S\sse V: f^\res(S)\geq 3\}$.

\While{$H$ is not $(k-2)$-edge connected (equivalently $\Sc\neq\es$)}{
Compute an extreme-point optimal solution $\hx\in\R^{E'}$ to the following residual LP:
\begin{alignat}{3}
\min & \quad & \sum_{e\in E'}c_ex_e \tag{ResLP} \label{reslp} \\
\text{s.t.} \quad 
&& x\bigl(\dt_{E'}(S)\bigr) & \geq f^\res(S) \qquad && \forall S\in\Sc \label{rescon} \\
&& 0 \leq x_e & \leq 1 && \forall e\in E'. \notag
\end{alignat}

Update $H\assign H\cup\{e\in E':\hx_e=1\}$, $E'\assign E'-H$,
$f^\res(S)=f^{\kecss}(S)-|\dt_H(S)|$ for all
$S\sse V$, and $\Sc=\{S\sse V: f^\res(S)\geq 3\}$. \label{kecss-update}
}

\Return $H$.
\end{procedure}

Given this, we can simply iterate the above process. We have a laminar family where all
residual requirements are at least $3$, by {\em construction}, since we have dropped the
constraints of all other sets. So, as stated earlier, by Lemma~\ref{largereq}, we are
guaranteed to find an integral edge, and make progress. 

Given Lemmas~\ref{laminar} and~\ref{largereq}, which show that the algorithm is
well-defined, the proof of the performance
guarantee of the algorithm is fairly immediate. 
We obtain $c(H)\leq\lpopt$, since we only pick edges that are set to $1$ by an
optimal solution, and we obtain that $H$ is at least $(k-2)$-edge connected, since we only
drop sets with residual requirement at most $2$. 
The only thing left to argue is that one can solve the
residual LP \eqref{reslp} in polynomial time. We show that one can devise a polytime
separation oracle for the residual LP (Lemma~\ref{lpsolve}). For \eqref{kecss-lp} (i.e.,
the starting LP), 
such an oracle amounts to a min-cut computation. For the residual LP, we argue that this
amounts to an approximate min-cut computation, where we only need to consider
$2$-approximate min-cuts. 
This is because 
if $x$ is feasible for the residual LP, then the min-cut value with edge weights $x_e$ or
$1$ respectively depending on whether $e$ belongs to the residual graph or $H$, is at least
$k-2$. Since $2(k-2)\geq k$ for $k\geq 4$, this means that if $x$ is infeasible, then
there is a $2$-approximate min-cut $S$ of value less than $k$, where $S$ is not a dropped 
set, i.e., $|\dt_H(S)|<k-2$.
We can check if the min-cut value is at least $k-2$, and if so, enumerate the
$2$-approximate min-cuts and check that the LP-constraint is met for these cuts. 
There are at most $O(n^4)$ $2$-approximate min-cuts~\cite{Kar99}, and we can enumerate
all these cuts in polytime~\cite{NNI94}.

\paragraph{Analysis.}
We begin by arguing that \eqref{reslp} can be solved efficiently.

\begin{lemma} \label{lpsolve}
There is a polytime separation oracle for \eqref{reslp}.
\end{lemma}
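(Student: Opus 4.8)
The plan is to reduce separation for \eqref{reslp} to a small number of approximate-min-cut computations in an auxiliary weighted graph, and then invoke known enumeration results. Given a candidate point $x\in\R^{E'}$, first check the box constraints $0\le x_e\le 1$ directly. For the cut constraints \eqref{rescon}, build the weighted graph on vertex set $V$ with edge weights $y_e := x_e$ for $e\in E'$ and $y_e := 1$ for $e\in H$ (i.e., already-picked edges count fully). The key observation is that, since $H$ is maintained so as to never drop a set of residual requirement more than $2$, every set $S\notin\Sc$ satisfies $|\dt_H(S)|\ge k-2$, hence $y\bigl(\dt(S)\bigr)\ge k-2$ for such $S$; and for $S\in\Sc$, feasibility of $x$ in \eqref{rescon} is equivalent to $y\bigl(\dt(S)\bigr)\ge f^\res(S)+|\dt_H(S)| = k$ (using $f^\res(S)=k-|\dt_H(S)|$). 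So $x$ is feasible for \eqref{reslp} iff $y\bigl(\dt(S)\bigr)\ge k$ for all $S\in\Sc$.

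Next I would argue this is checkable via approximate min-cuts. Compute the global minimum $y$-cut value $\mu := \min_{\emptyset\ne S\subsetneq V} y\bigl(\dt(S)\bigr)$. If $\mu < k-2$, then the cut $S$ achieving it has $y\bigl(\dt(S)\bigr)<k-2\le |\dt_H(S)|$ would be impossible unless $S\in\Sc$ — more carefully, any $S$ with $y\bigl(\dt(S)\bigr) < k-2$ cannot have $|\dt_H(S)|\ge k-2$, so it is not a dropped set, hence $S\in\Sc$, and it violates \eqref{rescon}; return it as a separating hyperplane. If $\mu\ge k-2$, then since $k\ge 4$ gives $2(k-2)\ge k$, every cut $S$ with $y\bigl(\dt(S)\bigr) < k$ satisfies $y\bigl(\dt(S)\bigr) < 2\mu$, i.e.\ $S$ is a $2$-approximate min-cut. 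By Karger's bound~\cite{Kar99} there are $O(n^4)$ such cuts, and they can all be enumerated in polynomial time~\cite{NNI94}. For each enumerated $2$-approximate min-cut $S$, check whether $|\dt_H(S)| < k-2$ (so $S\in\Sc$) and $y\bigl(\dt(S)\bigr) < k$; if so, output the violated constraint \eqref{rescon} for $S$. If no such $S$ is found, declare $x$ feasible.

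For correctness, the only thing to verify is that this procedure never misses a violated constraint: if some $S\in\Sc$ has $y\bigl(\dt(S)\bigr)<k$, then either $\mu<k-2$ (caught in the first branch), or $\mu\ge k-2$ and $S$ is a $2$-approximate min-cut (caught in the enumeration, since $S\in\Sc$ means $|\dt_H(S)|=k-f^\res(S)\le k-3<k-2$). Conversely every constraint the procedure flags is genuinely violated. The running time is dominated by one min-cut computation plus the $O(n^4)$-size enumeration, which is polynomial. The mild subtlety — and the only place requiring care — is keeping straight the three different "requirement" quantities ($f^\res(S)$, $|\dt_H(S)|$, and the threshold $k$ in the auxiliary graph) and confirming that "$S$ not dropped" is equivalent to "$|\dt_H(S)|<k-2$", which holds precisely because step~\ref{kecss-update} drops a set exactly when its residual requirement falls to $\le 2$; this is where the evenness/$k\ge 4$ hypotheses enter to make $2(k-2)\ge k$ and thus confine the relevant cuts to $2$-approximate min-cuts.
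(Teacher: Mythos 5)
Your proof is correct and matches the paper's argument essentially step for step: build the auxiliary weights ($y_e=x_e$ on $E'$, $y_e=1$ on $H$), observe that feasibility of \eqref{rescon} is equivalent to $y(\dt(S))\geq k$ for $S\in\Sc$ while dropped sets automatically have $y(\dt(S))\geq k-2$, then use $k\geq 4$ (so $2(k-2)\geq k$) to confine all potential violators to $2$-approximate min-cuts, which Karger's bound and the Nagamochi--Nishimura--Ibaraki enumeration make polynomially checkable. The extra care you take in justifying that the set achieving $\mu<k-2$ must lie in $\Sc$ is implicit in the paper (it follows from the observation that $S\notin\Sc$ forces $y(\dt(S))\geq k-2$), so the two arguments are the same in substance.
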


\begin{proof}
Let $x\in\R^{E'}$. We can easily check if $0\leq x_e\leq 1$ for all $e\in E'$ in
polytime. So assume this holds. Define $w_e=x_e$ if $e\in E'$, and $w_e=1$ if 
$e\in H$. So $x$ is feasible to \eqref{reslp} iff $w\bigl(\dt_G(S)\bigr)\geq k$ for all
$S\in\Sc$. Note that, by construction, we have $w\bigl(\dt(S)\bigr)\geq k-2$ for all 
$S\notin\Sc$, $\es\neq S\subsetneq V$.

So we check that the min-cut with $\{w_e\}_{e\in E}$ edge capacities is at least $k-2$;
if not, constraint \eqref{rescon} for the min-cut $S$ is a violated inequality. Since
$k\geq 4$, we have $2(k-2)\geq k$, and so any $S$ with $w\bigl(\dt(S)\bigr)<k$ is a
$2$-approximate min-cut. So we next enumerate all $2$-approximate min-cuts:
there are at most $O(n^4)$ such cuts~\cite{Kar99}, and they can enumerated in
polynomial time~\cite{NNI94,chekuri2020lp,beideman2023approximate}. If $w\bigl(\dt(S)\bigr)<k$
and $S\in\Sc$, for any of these cuts, then we can return \eqref{rescon} for set $S$ as a
violated constraint; otherwise $x$ is feasible to \eqref{reslp}. 
\end{proof}

Next, we prove some results that will show that Algorithm~\ref{kecss-alg} is well-defined
and makes progress in each iteration. We prove that there always exists an edge $e$ in
our current edge-set $E'$ with $\hx_e=1$, and that the collection $\Sc$ strictly
reduces in each iteration. 
To this end, we prove that any extreme point of \eqref{reslp}
is defined by a laminar family of tight sets (Lemma~\ref{laminar}). 
As mentioned earlier, this is our key technical insight, showing that laminarity holds
despite the fact that we have dropped some cut-constraints (corresponding 
to $S\sse V$ with $f^\res(S)\leq 2$). We prove this more generally, whenever the RHS of
\eqref{rescon} is given by a two-way uncrossable function of even parity, and we believe
that this statement is of independent interest.
The chief ingredient of this proof is encapsulated by
Lemma~\ref{lcspan}, which shows that tight sets can be {\em uncrossed}; here, we crucially
exploit that two-way uncrossability and even parity.
Capitalizing on this laminar structure, Lemma~\ref{largereq} establishes the required
integrality property. Lemma~\ref{largereq} argues the contrapositive: if all edges are
fractional, then the RHS of some constraint must be at most $2$.

\begin{lemma}[{\bf Uncrossing tight sets}] \label{lspan} \label{lcspan}
Let $f:2^V\mapsto\Z$ be a symmetric, normalized, two-way uncrossable, even-parity function.
Let $\Sc:=\{S\sse V: f(S)\geq 3\}$.
Let $x\in\R_+^E$ satisfy $x\bigl(\dt(S)\bigr)\geq f(S)$ for all $S\in\Sc$.
Let $Z=supp(x)$. 
Say that a set $S\sse V$ is tight if $x\bigl(\dt(S)\bigr)=f(S)$.
Let $A,B\in\Sc$ be two weakly-crossing tight sets. 
Then, there exists a laminar family $\Lc\sse\Sc\cap\{A,A\cap B,A\cup B,A-B,B-A\}$ of tight
sets with $A\in\Lc$ such that 
$\chi^{\dt_Z(B)}\in\spn\bigl(\bigl\{\chi^{\dt_Z(S)}\bigr\}_{S\in\Lc}\bigr)$.
\end{lemma}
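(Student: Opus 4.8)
The plan is to case-split on whether the two weakly-crossing tight sets $A,B$ actually cross (i.e., whether $V-(A\cup B)\neq\es$) or merely weakly-cross (i.e., $A\cup B = V$), and to use the standard submodularity-of-cuts inequality together with two-way uncrossability. Recall the identity $\chi^{\dt_Z(A)}+\chi^{\dt_Z(B)} = \chi^{\dt_Z(A\cap B)}+\chi^{\dt_Z(A\cup B)}+2\chi^{E(A\cap B,\,V-(A\cup B))}$ and the analogous identity with $A-B,B-A$ in place of $A\cap B,A\cup B$ (here $E(X,Y)$ denotes edges of $Z$ with one endpoint in each of disjoint $X,Y$); correspondingly the cut-function $x(\dt(\cdot))$ is submodular in both the ``$\cap,\cup$'' and the ``$A-B,B-A$'' directions. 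The first step is the standard one: summing $x(\dt(A))=f(A)$, $x(\dt(B))=f(B)$ against two-way uncrossability $f(A)+f(B)\le \min\{f(A\cap B)+f(A\cup B),\,f(A-B)+f(B-A)\}$, and using the fact that $x(\dt(\cdot))\ge f(\cdot)$ holds on every set in $\Sc$, I get that in whichever direction the four relevant sets all lie in $\Sc$, both of those sets are tight \emph{and} the corresponding cross-edge term is forced to contribute $0$, i.e.\ $E(A\cap B,V-(A\cup B))=\es$ in $Z$ (resp.\ $E(A-B,B-A)=\es$ in $Z$). That last fact is exactly what gives $\chi^{\dt_Z(A)}+\chi^{\dt_Z(B)} = \chi^{\dt_Z(A\cap B)}+\chi^{\dt_Z(A\cup B)}$ (resp.\ $=\chi^{\dt_Z(A-B)}+\chi^{\dt_Z(B-A)}$) as vectors, hence $\chi^{\dt_Z(B)}$ lies in the span of the indicator vectors of those newly-tight sets together with $\chi^{\dt_Z(A)}$.

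The subtlety — and where even parity enters — is that the two-way uncrossability inequality only guarantees tightness of sets in the chosen direction \emph{if those sets belong to $\Sc$}, i.e.\ have requirement $\ge 3$; if say $f(A\cap B)\le 2$, that set has been dropped and $x(\dt(A\cap B))$ need not equal $f(A\cap B)$, so the $\cap,\cup$ direction may be useless. So the real content is: at least one of the two directions is ``good'', meaning all its sets have requirement $\ge 3$. This is where I would use even parity together with $f(A),f(B)\ge 3$. Suppose the $\cap,\cup$ direction is bad, say $f(A\cap B)\le 2$. Two-way uncrossability gives $f(A\cap B)+f(A\cup B)\ge f(A)+f(B)\ge 6$, so $f(A\cup B)\ge 4$; and symmetrically (applying two-way uncrossability to the crossing pair $A, V-B$, using symmetry) one shows the $A-B, B-A$ direction sums to at least $6$ as well. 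Now even parity says $f(A\cap B) + f(A\cup B) + f(A) $ is even and likewise various triples; combined with $f(A),f(B)$ both $\ge 3$ one pins down enough parities to conclude that if one direction has a set of requirement $\le 2$, the other direction cannot — more precisely, the ``small'' sets in the two directions cannot coexist because $f(A-B)+f(B-A)+f(A\cap B)+f(A\cup B) \ge 2(f(A)+f(B)) \ge 12$ cannot be reconciled with two of these four values being $\le 2$ while the parity constraints force the other two to be too small. I expect this bookkeeping to be the main obstacle: getting the inequalities and parities to interlock so that ``both directions bad'' is impossible.

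Once I know a good direction exists, I have a laminar-compatible subfamily: if the $\cap,\cup$ direction is good, set $\Lc = \{A, A\cap B, A\cup B\}$, which is laminar (and all three are in $\Sc$ and tight), with $\chi^{\dt_Z(B)} = \chi^{\dt_Z(A\cap B)}+\chi^{\dt_Z(A\cup B)} - \chi^{\dt_Z(A)} \in \spn(\{\chi^{\dt_Z(S)}\}_{S\in\Lc})$; if the $A-B,B-A$ direction is good, set $\Lc=\{A, A-B, B-A\}$, again laminar since $A-B\sse A$, $B-A$ disjoint from $A$, with $\chi^{\dt_Z(B)} = \chi^{\dt_Z(A-B)}+\chi^{\dt_Z(B-A)}-\chi^{\dt_Z(A)}$. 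In the purely weakly-crossing (non-crossing) case $A\cup B=V$: then $B-A = V-A$, and since $f$ is symmetric, $f(V-A)=f(A)\ge 3$ and $\dt_Z(V-A)=\dt_Z(A)$, so $B-A$ is automatically tight and in $\Sc$; also $A\cap B = A - (V-B) $, hmm — more cleanly, when $A\cup B=V$ the sets $A-B = V-B$ and $B-A=V-A$ are just complements, the ``$A-B,B-A$'' direction degenerates, and one uses instead that $\{A\cap B, A\cup B=V\}$ reduces (after normalization, $f(V)=0$, $\dt_Z(V)=\es$) so that $\chi^{\dt_Z(B)}=\chi^{\dt_Z(V-B)}=\chi^{\dt_Z(A-B)}$; if $A-B = V-B \in \Sc$ we are done with $\Lc = \{A, A-B\}$ after noting $\chi^{\dt_Z(B)}=\chi^{\dt_Z(A-B)}$, and $f(A-B)=f(V-B)=f(B)\ge 3$ guarantees $A-B\in\Sc$. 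So this boundary case is actually the easy one. I will organize the final write-up as: (i) the two cut-identities; (ii) the ``one good direction'' lemma via parity; (iii) assembling $\Lc$ in each case and verifying laminarity and tightness; (iv) dispatching the degenerate $A\cup B = V$ case. The one place demanding care is (ii), and I would be prepared to split it into a short auxiliary claim with its own parity computation rather than inline it.
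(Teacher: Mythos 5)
The key claim in your step (ii) — that at least one of the two ``directions'' $\{A\cap B,A\cup B\}$, $\{A-B,B-A\}$ must lie entirely in $\Sc$ — is false, and the paper's proof is specifically organized around the opposite fact. After showing that each direction contains at least one set of $\Sc$, the paper devotes a four-case analysis to the situation where \emph{exactly} one set from each direction lies in $\Sc$ (e.g.\ $A\cup B,A-B\in\Sc$ while $A\cap B,B-A\notin\Sc$). To see that this really can happen under all the hypotheses, take $f(A)=4$, $f(B)=3$, $f(A\cap B)=f(A-B)=2$, $f(A\cup B)=f(B-A)=5$: two-way uncrossability $f(A)+f(B)\le\min\{f(A\cap B)+f(A\cup B),f(A-B)+f(B-A)\}$ holds with equality ($7\le 7$), and every instance of even parity one can form from these six sets is satisfied, e.g.\ $f(A\cap B)+f(A-B)+f(A)=8$, $f(A\cap B)+f(B-A)+f(B)=10$, $f(A)+f(B-A)+f(A\cup B)=14$, $f(B)+f(A-B)+f(A\cup B)=10$. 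Your parity bookkeeping in fact already contains a misstep: you invoke evenness of $f(A\cap B)+f(A\cup B)+f(A)$, but $A\cap B$ and $A\cup B$ are nested, not disjoint, so even parity says nothing about that triple.

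Because ``one good direction'' fails, your proposed laminar family (which always has the form $\{A,\ \cdot\ ,\ \cdot\}$ with coefficients $+1,+1,-1$) cannot work in the mixed case — the span argument has no sets to fall back on. The actual engine of the paper's proof is more delicate: in the mixed case one shows that \emph{all four} corner sets $A\cap B,A\cup B,A-B,B-A$ are tight even though two of them were dropped from $\Sc$ (this is where even parity is really used, by forcing a contradiction from \eqref{lcineq2} when $f(B)=3$), and additionally that the extra cross-term $\al=\sum_{uv\in Z:u\in A\cap B,v\in B-A}x_{uv}$ vanishes. One then writes, say, $\chi^{\dt_Z(B)}=\chi^{\dt_Z(A-B)}+\chi^{\dt_Z(A\cup B)}-2\chi^{\dt_Z(A)}$, a combination with a coefficient $-2$ on $A$ that uses only sets in $\Sc$. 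Your handling of the ``both directions good'' subcases and of the degenerate $A\cup B=V$ case matches the paper, but without the mixed-case identity the argument has a genuine gap.
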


\begin{proof}
Since $f$ is symmetric, if $S\in\Sc$, we also have that $V-S\in\Sc$. 
Define $g:2^V\mapsto\R$ by
$g(S)=f(S)-x\bigl(\dt(S)\bigr)$ for all $S\sse V$.
We may assume that $A$ and $B$ are crossing, since otherwise we have $A\cup B=V$, and so
$A-B=V-B\in\Sc$, and we can take $\Lc=\{A-B,A\}$.
Note that at least one of $A\cap B$ and $A\cup B$ is in $\Sc$, since 
$f(A\cap B)+f(A\cup B)\geq f(A)+f(B)\geq 3+3=6$. Similarly, since $f$ is two-way
uncrossable, at least one of $A-B$ and $B-A$ is in $\Sc$. 

Let $Z'=\{uv\in Z: u\in A-B, v\in B-A\}$ and $\tht=x(Z')$.
Let $Z''=\{uv\in Z: u\in A\cap B, v\notin A\cup B\}$ and $\gm=x(Z'')$
(see Fig.~\ref{tightuncross}). 
If $x\bigl(\dt(S)\bigr)\geq f(S)$ for both $S=A\cap B$ and $S=A\cup B$,  
then from standard arguments, we can show that $A\cap B$ and $A\cup B$ are tight, and
$\tht=0$. We have  
\begin{equation*}
0=f(A)-x\bigl(\dt(A)\bigr)+f(B)-x\bigl(\dt(B)\bigr)\leq 
f(A\cap B)-x\bigl(\dt(A\cap B)\bigr)+f(A\cup B)-x\bigl(\dt(A\cup B)\bigr)-2\tht\leq-2\tht\leq 0
\end{equation*}
where the first inequality is due to the two-way uncrossability of $g$, and the second is 
because 
$x\bigl(\dt(S)\bigr)\geq f(S)$ for $S\in\{A\cap B,A\cup B\}$.
It follows that we must have equality throughout, so $x\bigl(\dt(S)\bigr)=f(S)$ for
$S\in\{A\cap B,A\cup B\}$ and $\tht=0$. 
This implies that 
$\chi^{\dt_Z(A)}+\chi^{\dt_Z(B)}=\chi^{\dt_Z(A\cap B)}+\chi^{\dt_Z(A\cup B)}$.
So if both $A\cap B, A\cup B$ are in $\Sc$, then we can take $\Lc=\{A,A\cap B,A\cup B\}$, 
proving the lemma. 
 
Similarly, if $x\bigl(\dt(S)\bigr)\geq f(S)$ for both $S=A-B$ and $S=B-A$, then again, 
using the two-way uncrossability of $g$, we obtain that $A-B$ and $B-A$ are
tight, and $\gm=0$.
So $\chi^{\dt_Z(A)}+\chi^{\dt_Z(B)}=\chi^{\dt_Z(A-B)}+\chi^{\dt_Z(B-A)}$, and 
therefore if both $A-B, B-A$ lie in $\Sc$, then we can take
$\Lc=\{A,A-B,B-A\}$. 

We are left with the case where exactly one of $A\cap B,A\cup B$ is in $\Sc$, and exactly
one of $A-B,B-A$ is in $\Sc$. We will argue that even then, {\em all four sets 
$A\cap B, A\cup B, A-B, B-A$ are tight, and $\tht=\gm=0$}. This coupled with one
additional insight will prove the lemma. 
There are four cases to consider, but once we prove the lemma in one case, the
other cases follow from symmetric arguments.
\begin{enumerate}[label=\arabic*), topsep=0.3ex, itemsep=0.1ex, leftmargin=*]
\item Suppose $A\cup B, A-B\in\Sc$. So $A\cap B\notin\Sc$, $B-A\notin\Sc$, that is,
$f(A\cap B), f(B-A)\leq 2$.  We first establish that
$x\bigl(\dt(A\cap B)\bigr)+x\bigl(\dt(B-A)\bigr)\geq f(A\cap B)+f(B-A)$. 
Suppose not. L
Let $Z_{A\cap B,B-A}=\{uv\in Z: u\in A\cap B, v\in B-A\}$ and $\al=x(Z_{A\cap B,B-A})$
(see Fig.~\ref{tightuncross}). 
Then, we have 
\begin{equation}
f(B)=x\bigl(\dt(B)\bigr)=x\bigl(\dt(A\cap B)\bigr)+x\bigl(\dt(B-A)\bigr)-2\al
<f(A\cap B)+f(B-A)-2\al. \label{lcineq2}
\end{equation}
Thus, $f(A\cap B)+f(B-A)>f(B)+2\al$. If $f(B)\geq 4$, then this yields a contradiction
since $f(A\cap B), f(B-A)\leq 2$. If $f(B)=3$, then we still obtain a contradiction,
because then \eqref{lcineq2} implies that $f(A\cap B)=2=f(B-A)$, and this contradicts that
$f$ has even parity. 

So we have $x\bigl(\dt(A\cap B)\bigr)+x\bigl(\dt(B-A)\bigr)\geq f(A\cap B)+f(B-A)$. 
So $x\bigl(\dt(S)\bigr)\geq f(S)$ for some $S\in\{A\cap B, B-A\}$.
But then we must have $x\bigl(\dt(S)\bigr)=f(S)$, by the discussion in the prior
paragraphs, since $A\cup B$ and $A-B$ are both in $\Sc$. This also means that if $T$ is
the set other than $S$ in $\{A\cap B,A-B\}$, then $x\bigl(\dt(T)\bigr)\geq f(T)$,
because otherwise inequality \eqref{lcineq2} would still hold, yielding a
contradiction. Again, $x\bigl(\dt(T)\bigr)\geq f(T)$ implies that we actually have
equality here. 

So the upshot is that all four sets $A\cap B, A\cup B, A-B, B-A$ are tight, which also
means that $\tht=\gm=0$. Now we claim that we also have $\al=0$. This is because, as in
\eqref{lcineq2}, we obtain that $f(B)=f(A\cap B)+f(B-A)-2\al$. If $\al>0$, then we must have 
$f(A\cap B)+f(B-A)\geq 4$ and $f(B)=3$. But then we have $f(B)=3$ and 
$f(A\cap B)=f(B-A)=2$, which contradicts that $f$ has even parity.

Finally, observe that given all this, we have
$\chi^{\dt_Z(B)}=\chi^{\dt_Z(A-B)}+\chi^{\dt_Z(A\cup B)}-2\chi^{\dt_Z(A)}$, so taking
$\Lc=\{A,A-B,A\cup B\}$ proves the lemma. See Fig.~\ref{tightuncross} for an illustration.

\begin{figure}[ht!] 
\centering
\includegraphics[width=3.5 in]{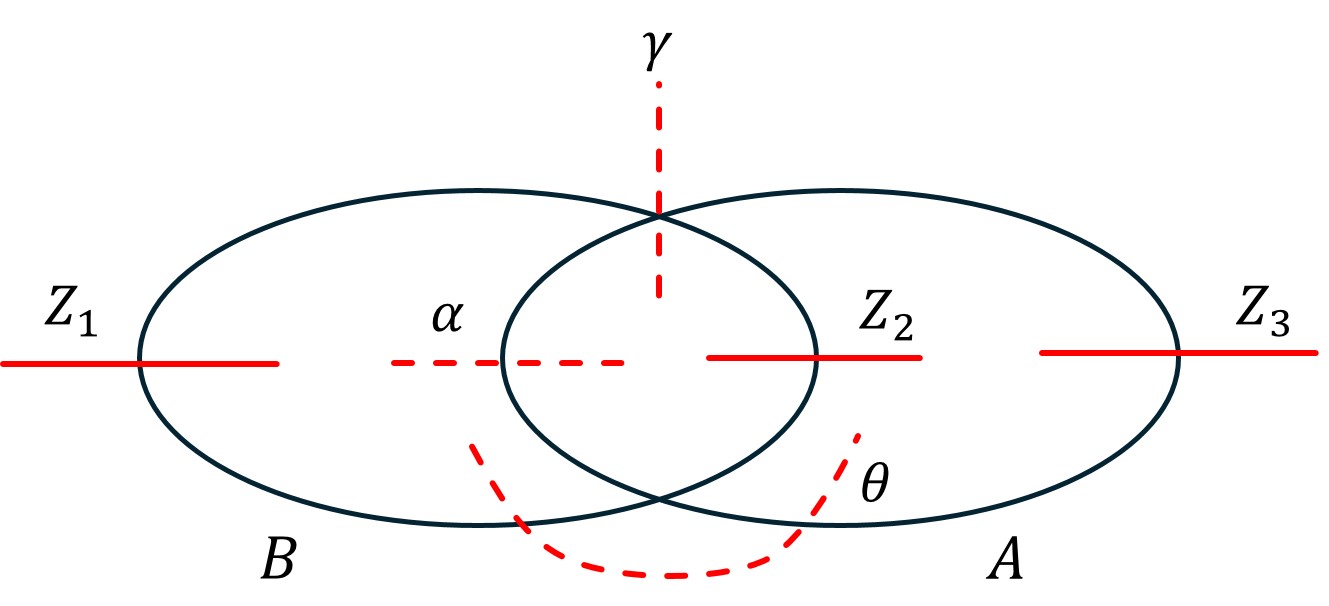}
\captionsetup{font=small, labelfont=small}
\caption{
  $Z_i$ denotes all edges in $Z$ with endpoints in the two sets containing the endpoints
  of the corresponding line. The quantities $\tht,\gm,\al$ denote the total $x$-weight of
  edges in $Z$ with endpoints in the two sets containing the endpoints of the
  corresponding arc.
  We prove that $\tht=\gm=0$, and in case 1) above, that
  $\al=0$.
  We can see then that $\chi^{\dt_Z(B)}=\chi^{Z_1}+\chi^{Z_2}$, 
  $\chi^{\dt_Z(A-B)}=\chi^{Z_2}+\chi^{Z_3}$, 
  $\chi^{\dt_Z(A\cup B)}=\chi^{Z_1}+\chi^{Z_3}$, and $\chi^{\dt_Z(A)}=\chi^{Z_3}$.
  \label{tightuncross}
}
\end{figure}

\item Suppose $A\cup B, B-A\in\Sc$. So $A\cap B\notin\Sc$, $A-B\notin\Sc$.
This case is completely symmetric to case 1), reversing the roles of $A$ and $B$. So we
obtain that $A\cap B, A\cup B, A-B, B-A$ are all tight sets, and 
$\chi^{\dt_Z(A)}=\chi^{\dt_Z(B-A)}+\chi^{\dt_Z(A\cup B)}-2\chi^{\dt_Z(B)}$. So we can take
$\Lc=\{A,B-A,A\cup B\}$.

\item Suppose $A\cap B, B-A\in\Sc$. So $A\cup B\notin\Sc$, $A-B\notin\Sc$. 
Let $\bB=V-B$. 
Note that the sets $A, \bB, A\cup\bB=V-(B-A), A-\bB=A\cap B$ are in
$\Sc$, and sets $A\cap\bB=A-B$ and $\bB-A=V-(A\cup B)$ are not in $\Sc$.
So considering the sets $A, \bB$, we are in case 1). Therefore, the sets 
$A\cap\bB, A\cup\bB, A-\bB, \bB-A$ are all tight, and we have
$\chi^{\dt_Z(\bB)}=\chi^{\dt_Z(A-\bB)}+\chi^{\dt_Z(A\cup\bB)}-2\chi^{\dt_Z(A)}$.
Equivalently, we obtain that $A\cap B, A\cup B, A-B, B-A$ are all tight, and 
$\chi^{\dt_Z(B)}=\chi^{\dt_Z(A\cap B)}+\chi^{\dt_Z(B-A)}-2\chi^{\dt_Z(A)}$. So we can take
$\Lc=\{A,A\cap B,B-A\}$.

\item Suppose $A\cap B, A-B\in\Sc$. So $A\cup B\notin\Sc$, $B-A\notin\Sc$.
This is symmetric to case 3), reversing the roles of $A$ and $B$. 
We obtain that $A\cap B, A\cup B, A-B, B-A$ are all tight,
and $\chi^{\dt_Z(A)}=\chi^{\dt_Z(A\cap B)}+\chi^{\dt_Z(A-B)}-2\chi^{\dt_Z(B)}$. So we can
take $\Lc=\{A,A\cap B,A-B\}$. \qedhere
\end{enumerate} 
\end{proof}

\begin{lemma} \label{laminar}
Let $f:2^V\mapsto\Z$ be a symmetric, normalized, two-way uncrossable, even-parity function.
Let $\Sc:=\{S\sse V: f(S)\geq 3\}$. Let $E'\sse E$. 
Let $\hx\in\R^{E'}$ be an extreme-point solution to the following system:
\begin{equation}
x\bigl(\dt_{E'}(S)\bigr)\geq f(S) \quad \forall S\in\Sc, \qquad
0\leq x_e\leq 1 \quad \forall e\in E'.
\label{fndlp}
\end{equation}
Let $F=\{e\in E':0<\hx_e<1\}$. There exists a laminar family $\Lc\sse\Sc$ with $|\Lc|=|F|$
such that the vectors $\bigl\{\chi^{\dt_F(S)}\bigr\}_{S\in\Lc}$ are linearly independent, and
$\hx\bigl(\dt(S)\bigr)=f(S)$ for all $S\in\Lc$.
\end{lemma}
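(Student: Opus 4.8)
The plan is to follow the standard uncrossing-plus-linear-algebra template that underlies Jain's laminarity lemma, with the crucial modification that the uncrossing step is supplied by Lemma~\ref{lcspan} rather than by weak supermodularity of the requirement function. Since $\hx$ is an extreme point of system~\eqref{fndlp}, it is uniquely determined by a set of $|F|$ linearly independent tight constraints among those defining the polytope; the box constraints $0\le x_e\le 1$ are not tight on the coordinates in $F$, so these $|F|$ constraints may be taken to be cut-constraints \eqref{rescon} for tight sets $S\in\Sc$. Write $\T$ for the family of all tight sets in $\Sc$ and let $W=\spn\bigl(\bigl\{\chi^{\dt_F(S)}\bigr\}_{S\in\T}\bigr)$; by the above, $\dim W=|F|$. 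The goal is to exhibit a laminar subfamily $\Lc\sse\T$ with $\bigl\{\chi^{\dt_F(S)}\bigr\}_{S\in\Lc}$ a basis of $W$, whence $|\Lc|=|F|$ automatically.

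First I would fix a maximal laminar subfamily $\Lc\sse\T$ (maximal with respect to inclusion among laminar subfamilies of $\T$), and break ties by choosing, among all such maximal laminar families, one maximizing the total size $\sum_{S\in\Lc}|S|$ (the usual ``spanning + maximal size'' potential used in uncrossing arguments). I then claim $\spn\bigl(\bigl\{\chi^{\dt_F(S)}\bigr\}_{S\in\Lc}\bigr)=W$. Suppose not; then there is a tight set $B\in\T$ with $\chi^{\dt_F(B)}\notin\spn\bigl(\bigl\{\chi^{\dt_F(S)}\bigr\}_{S\in\Lc}\bigr)$, and among all such $B$ pick one that weakly-crosses as few members of $\Lc$ as possible. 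By maximality of $\Lc$, $B$ must weakly-cross at least one set $A\in\Lc$ (otherwise $\Lc\cup\{B\}$ is laminar and strictly larger). Now apply Lemma~\ref{lcspan} to the weakly-crossing tight pair $A,B\in\Sc$: it produces a laminar family $\Lc'\sse\Sc\cap\{A,A\cap B,A\cup B,A-B,B-A\}$ of tight sets with $A\in\Lc'$ and $\chi^{\dt_Z(B)}\in\spn\bigl(\bigl\{\chi^{\dt_Z(S)}\bigr\}_{S\in\Lc'}\bigr)$. Restricting all characteristic vectors to coordinates in $F\sse Z$ preserves this span relation, so $\chi^{\dt_F(B)}\in\spn\bigl(\bigl\{\chi^{\dt_F(S)}\bigr\}_{S\in\Lc'}\bigr)$. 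Since $\chi^{\dt_F(B)}$ is not in the span of $\Lc$ but $A\in\Lc$, some set $C\in\Lc'\setminus\{A\}$ — hence $C\in\{A\cap B,A\cup B,A-B,B-A\}$ — has $\chi^{\dt_F(C)}\notin\spn\bigl(\bigl\{\chi^{\dt_F(S)}\bigr\}_{S\in\Lc}\bigr)$. The standard ``uncrossing reduces crossings'' fact then gives the contradiction: each of $A\cap B, A\cup B, A-B, B-A$ weakly-crosses strictly fewer members of $\Lc$ than $B$ does (a set $D\in\Lc$ that does not weakly-cross $B$ and does not weakly-cross $A$ cannot weakly-cross any of these four sets, and $A$ itself weakly-crosses $B$ but none of them), so $C$ is a tight set not in the span of $\Lc$ that weakly-crosses fewer sets of $\Lc$ than $B$, contradicting the choice of $B$. (One must also use the tie-breaking on $\sum_{S\in\Lc}|S|$ in the sub-case where $C$ weakly-crosses the \emph{same} number of sets of $\Lc$ as $B$; here one swaps $C$ into $\Lc$ in place of a dependent set and increases the potential, or argues the count is strictly smaller — this is exactly the routine bookkeeping in Jain's proof.) Hence $\spn\bigl(\bigl\{\chi^{\dt_F(S)}\bigr\}_{S\in\Lc}\bigr)=W$.

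Finally, prune $\Lc$ to a linearly independent subfamily whose characteristic vectors still span $W$; a subfamily of a laminar family is laminar, so the pruned $\Lc$ is the desired laminar family, and $|\Lc|=\dim W=|F|$, with every $S\in\Lc$ tight and in $\Sc$.

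The main obstacle, and the only place real work is needed beyond citing Lemma~\ref{lcspan}, is the crossing-count bookkeeping: verifying that replacing $B$ by one of its ``uncrossed pieces'' $C$ genuinely makes progress in the chosen potential. The subtlety specific to this setting — as opposed to Jain's — is that Lemma~\ref{lcspan} only guarantees $\Lc'\sse\Sc$, i.e., the uncrossed pieces have requirement at least $3$; pieces with requirement $\le 2$ have been dropped and are simply unavailable. But Lemma~\ref{lcspan} has already done the hard work of ensuring that whenever $A,B$ are weakly-crossing tight sets in $\Sc$, \emph{enough} of $\{A\cap B, A\cup B, A-B, B-A\}$ lies in $\Sc$ and is tight to express $\chi^{\dt_F(B)}$, so the induction on the number of weakly-crossed members of $\Lc$ goes through exactly as in the classical argument. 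I would also remark at the end that the same laminar family, together with a token-counting argument, feeds directly into Lemma~\ref{largereq}.
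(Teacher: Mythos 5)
Your proof is correct and follows essentially the same path as the paper's: take a maximal laminar family of tight sets, use the uncrossing Lemma~\ref{lcspan} together with a ``minimize weak-crossings'' induction to show that family spans $W$, and then prune to a basis to get $|\Lc|=|F|$. The tie-breaking hedge with $\sum_{S\in\Lc}|S|$ is unnecessary --- each of $A\cap B, A\cup B, A-B, B-A$ weakly-crosses \emph{strictly} fewer members of $\Lc$ than $B$, precisely because (as you note) $A$ weakly-crosses $B$ but none of the four, so the induction closes without any swap.
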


\begin{proof}
Let $Z=\supp(\hx)$. 
We say that $S\sse V$ is tight if $\hx\bigl(\dt(S)\bigr)=f(S)$.
Consider a maximal laminar family $\T$ of tight sets from $\Sc$ such that the vectors
$\bigl\{\chi^{\dt_Z(S)}\bigr\}_{S\in\T}$ are linearly independent. 
We first claim that if $T\in\Sc$
is tight, then $\chi^{\dt_Z(T)}\in\spn\bigl(\bigl\{\chi^{\dt_Z(S)}\bigr\}_{S\in\T}\bigr)$. 
Suppose not. Then, among all tight sets $T\in\Sc$
such that $\chi^{\dt_Z(T)}\notin\spn\bigl(\bigl\{\chi^{\dt_Z(S)}\bigr\}_{S\in\T}\bigr)$, let
$B\in\Sc$ be a tight set that weakly crosses the fewest number of sets in $\T$.
Let $A\in\T$ be a set that weakly crosses $B$. By Lemma~\ref{lcspan}, there is a laminar
family $\Lc'\sse\Sc\cap\{A,A\cap B,A\cup B,A-B,B-A\}$ of tight sets such that 
$\chi^{\dt_Z(B)}\in\spn\bigl(\bigl\{\chi^{\dt_Z(S)}\bigr\}_{S\in\Lc'}\bigr)$.
All four sets $A\cap B,A\cup B,A-B,B-A$ weakly cross fewer sets in $\T$ than $B$. So by the
choice of $B$, it follows that 
$\chi^{\dt_Z(T)}\in\spn\bigl(\bigl\{\chi^{\dt_Z(S)}\bigr\}_{S\in\T}\bigr)$ for all
$T\in\Lc'$. But then we have 
$\chi^{\dt_Z(B)}\in\spn\bigl(\bigl\{\chi^{\dt_Z(S)}\bigr\}_{S\in\T}\bigr)$, which yields
a contradiction.

Consider the matrix $M$ with columns corresponding to edges in $Z$, and rows 
$\bigl\{\chi^{\dt_Z(S)}\bigr\}_{S\in\T}$,
and the submatrix $M'$ of $M$ with the same rows as $M$ but columns corresponding to
edges in $F$. 
Note that $M$ has full row rank.
We claim that $M'$ has full column rank.  
Suppose not and $M'd'=0$ for $d'\in\R^F$, $d'\neq 0$. Define $d\in\R^{E'}$ by setting
$d_e=d'_e$ if $e\in F$, and $d_e=0$ otherwise. 
Then, for a suitably small $\ve>0$, we have that $\hx\pm\ve d$ is feasible to \eqref{fndlp}, 
since $Md=0$ implies, from the previous paragraph, 
that $d\bigl(\dt(S)\bigr)=0$ for every tight set $S\in\Sc$. 
This contradicts that $\hx$ is an extreme point of \eqref{fndlp}.

So we have $|F|=\text{column rank of $M'$}=\text{row rank of $M'$}\leq |\T|$.
So there is an $|F|\times |F|$ full-rank submatrix $M''$ of $M'$. Since
$\T$ is a laminar family of tight sets, the sets of $\T$ corresponding to the rows of
$M''$ yield the desired laminar family $\Lc$.
\end{proof}

Given this laminar structure, Lemma~\ref{largereq} implies the required integrality
property. 

\begin{lemma} \label{largereq}
Let $F\sse E$, and $\Lc\sse 2^V$ be a laminar family such that $|\Lc|=|F|$ and the vectors
$\bigl\{\chi^{\dt_F(S)}\bigr\}_{S\in\Lc}$ are linearly independent. 
Let $z\in(0,1)^F$ be 
such that $z\bigl(\dt_F(S)\bigr)$ is an integer for all $S\in\Lc$. 
Then, there is some set $T\in\Lc$ such that
(a) $|\dt_F(T)|\leq 3$, and hence (b) $z\bigl(\dt_F(T)\bigr)\leq 2$.
\end{lemma}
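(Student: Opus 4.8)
The plan is to use a token-counting / fractional-token argument in the style of Jain. I would set up a token scheme where each fractional edge $e \in F$ is given $z_e$ tokens on each of its two endpoints' ``owning'' sets in the laminar family, for a total of $2$ tokens per edge, and show that if every set $T \in \Lc$ has $|\dt_F(T)| \geq 4$ (and hence, being a laminar-family set whose boundary-vector is needed for independence, $z(\dt_F(T))$ is a positive integer, so $z(\dt_F(T)) \geq 1$), then the tokens can be redistributed so that each set in $\Lc$ collects strictly more than $2$ tokens while the root sets leave some slack — contradicting $2|F| = 2|\Lc|$ tokens total. Actually, the cleaner route given the hypotheses: assign to each $S \in \Lc$ the quantity $2\bigl(z(\dt_F(S)) - \lceil z(\dt_F(S))\rceil_{\text{frac part}}\bigr)$... let me instead follow the standard template precisely.

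\textbf{Step 1 (reduce to a counting inequality).} Suppose for contradiction that every $T \in \Lc$ has $|\dt_F(T)| \geq 4$. For $S \in \Lc$, since $\{\chi^{\dt_F(S)}\}_{S \in \Lc}$ are linearly independent, each $\dt_F(S)$ is nonempty and $z(\dt_F(S))$ is a positive integer, so $z(\dt_F(S)) \geq 1$; since all entries of $z$ are in $(0,1)$ and $|\dt_F(S)| \geq 4$, we also have $z(\dt_F(S)) \leq |\dt_F(S)| - \text{(something)}$... The key numeric fact to extract is: for each $S$, $z(\dt_F(S))$ is an integer in the open-ish range forcing $|\dt_F(S)| \geq z(\dt_F(S)) + 1 \geq 2$ unless all edges on the boundary are ``large'', and combined with $|\dt_F(S)| \geq 4$ we get $|\dt_F(S)| - z(\dt_F(S)) \geq |\dt_F(S)| - (|\dt_F(S)|-1) $, hmm. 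The honest approach is the fractional token argument of Jain: give each $e \in F$ two tokens, one to each endpoint. Redistribute so the token goes to the smallest set in $\Lc$ containing that endpoint and for which $e \in \dt_F(S)$, i.e., $e$ is assigned to $S$ if $S$ is (inclusionwise) minimal in $\Lc$ with exactly one endpoint of $e$ in $S$. Then show each $S \in \Lc$ receives $\geq 2$ tokens, with the outermost sets receiving $> 2$, giving $2|F| > 2|\Lc|$, a contradiction.

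\textbf{Step 2 (the local redistribution and the counting).} For a set $S \in \Lc$ with children $R_1, \dots, R_m$ in $\Lc$, consider the ``contribution'' vector $\chi^{\dt_F(S)} - \sum_i \chi^{\dt_F(R_i)}$ restricted to edges assigned to $S$ or to a child. By linear independence of the boundary vectors in $\Lc$, this vector is nonzero, which (since all quantities $z(\dt_F(\cdot))$ are integers) forces the number of edges assigned to $S$ to be positive; a parity/counting refinement shows that in fact each internal node collects $\geq 2$ tokens and each maximal (root) set collects $\geq 2$ tokens with at least one token to spare somewhere --- this is exactly where the hypothesis $|\dt_F(T)| \geq 4$ for all $T$ (equivalently $\neg$(a)) will produce the strict surplus, because a set with $\leq 3$ boundary edges is precisely what would be needed to make the token count tight. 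So whenever the conclusion (a) fails, the standard induction on the laminar family gives a strict inequality $2|F| > 2 |\Lc| = 2|F|$.

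\textbf{Main obstacle.} The delicate point is getting the \emph{strict} inequality: in the usual Jain argument one derives $|F| \geq |\Lc| + 1$ only after ruling out the ``each set has exactly $z(\dt_F(S)) = $ its boundary size minus one'' degenerate configurations, and here the right statement is the contrapositive --- that ruling these out is impossible unless some set has $|\dt_F(T)| \leq 3$. Concretely, I expect to argue: take any inclusionwise-minimal $S \in \Lc$ with no child in $\Lc$; all of $\dt_F(S)$ is ``private'' to $S$, so $S$ is assigned $|\dt_F(S)|$ tokens, and since $z(\dt_F(S)) \in \Z_{>0}$ with each edge-value in $(0,1)$ we get $|\dt_F(S)| \geq z(\dt_F(S)) + 1 \geq 2$; if $|\dt_F(S)| = 2$ then $z(\dt_F(S)) = 1$ and statement (a) holds with $T = S$ (indeed $|\dt_F(S)| = 2 \leq 3$), while if $|\dt_F(S)| = 3$ then again (a) holds. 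So if (a) fails, the minimal set already has $\geq 4$ private tokens, i.e.\ $2$ extra, and then a clean induction up the laminar tree --- each internal $S$ gets $\geq 2$ from the discrepancy $\chi^{\dt_F(S)} - \sum_i \chi^{\dt_F(R_i)} \neq 0$ being nonzero and integral, hence supported on $\geq 2$ edges by a parity argument using that $z(\dt_F(S))$ and all $z(\dt_F(R_i))$ are integers --- yields $2|F| \geq 2|\Lc| + 2$, contradicting $|F| = |\Lc|$. Part (b) is then immediate: $z(\dt_F(T)) < |\dt_F(T)| \leq 3$ and $z(\dt_F(T)) \in \Z$, so $z(\dt_F(T)) \leq 2$.
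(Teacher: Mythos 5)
Your overall strategy matches the paper exactly: assume every $T\in\Lc$ has $|\dt_F(T)|\geq 4$, give each fractional edge two tokens, assign tokens to minimal owning sets, and derive $2|F|\geq 2|\Lc|+2>2|F|$ via induction on the laminar forest. Part (b) from (a) and the leaf bound $|\dt_F(S)|\geq 4$ are handled correctly.

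The gap is in the step ``each internal $S$ gets $\geq 2$ tokens from the discrepancy $\chi^{\dt_F(S)}-\sum_i\chi^{\dt_F(R_i)}$ being nonzero, integral, hence supported on $\geq 2$ edges.'' This fails for sets with $a_S\geq 2$ children. Take $S=R_1\sqcup R_2$ with no private vertices and exactly one edge $e$ between $R_1$ and $R_2$. Then the discrepancy is $-2\chi^{\{e\}}$: it is supported on a \emph{single} edge, and since the coefficient is $-2$, the ``integrality'' constraint only forces $2z_e\in\Z$, which is satisfied by $z_e=1/2$ — so no parity argument rules this out. Moreover, even when the discrepancy is supported on several edges, edges between two children carry a $-2$ coefficient, lie in the support, but contribute \emph{no} tokens to $S$ (both endpoints lie in children); and edges from $V-S$ into $S$ contribute $0$ to the discrepancy yet may give $S$ tokens. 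So the support size of the discrepancy is neither a lower bound nor a certificate for the tokens collected by $S$. In the example above, $S$ collects $0$ tokens, and the ``$2I$'' part of your count $4L+2I$ breaks. The paper's proof avoids this by establishing the $\geq 2$-token bound \emph{only} for single-child sets (where the discrepancy has entries in $\{-1,0,1\}$, and a single supporting edge would force $z_e\in\Z\cap(0,1)=\emptyset$), and by observing that a set with $a_S\geq 2$ children needs no tokens of its own: the inductive contributions $\sum_i(2n_{R_i}+2)=2(n_S-1)+2a_S\geq 2n_S+2$ already carry the required surplus. If you replace your uniform ``internal nodes get $\geq 2$'' claim with this case split on $a_S$, the rest of your argument is exactly the paper's proof and goes through.
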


For ease of exposition and to avoid detracting the reader, we defer the proof of
Lemma~\ref{largereq} to the end of this section, and complete first the analysis
establishing Theorem~\ref{kecss-thm} assuming this result.

\begin{proofof}{Theorem~\ref{kecss-thm}}
We argue that Algorithm~\ref{kecss-alg} terminates in at most $|E|$ iterations.
The function $f^\res$ is two-way uncrossable and has even parity (and is symmetric,
normalized), due to Lemma~\ref{twoway} and Claim~\ref{even}. 
Let $F=\{e\in E': 0<\hx_e<1\}$, and $\Lc\sse\Sc$ be the laminar family
corresponding to $\hx$ given by Lemma~\ref{laminar}.  
Let $Q=\{e\in E': \hx_e=1\}$. Let $z=(\hx_e)_{e\in F}$. 
Then, $z\bigl(\dt_F(S)\bigr)=f^\res(S)-|\dt_Q(S)|$ is an integer for all $S\in\Lc$, where
$f^\res$ is the function at the start of the iteration.
So $F$, $\Lc$, and $z$ 
satisfy the conditions of Lemma~\ref{largereq}. So we have 
$z\bigl(\dt_F(T)\bigr)=f^\res(T)-|\dt_Q(T)|\leq 2$ for some $T\in\Lc$. 
But $T\in\Sc$, so $f^\res(T)\geq 3$. This implies that
$\dt_Q(T)\neq\es$, so $Q\neq\es$, and after the update in step~\ref{kecss-update}, $T$
drops out of $\Sc$. So in each iteration, 
we add at least one new edge to $H$, and hence will terminate in at most $|E|$ iterations.

By definition, at termination, we have that $H$ is at least $(k-2)$-edge connected. To
bound $c(H)$, we claim that we have $c(H)+\lpopt[{\text{\ref{reslp}}}]\leq\lpopt$ at the
start and end of every iteration. This clearly holds at the start of the first iteration, since we
have $H=\es$, and \eqref{reslp} is the same as \eqref{kecss-lp} then. In a given
iteration, if we add $Q\sse E'$ to $H$, then $(\hx_e)_{e\in E'-Q}$ is a feasible solution
to the residual LP at the start of the next iteration. So the optimal value of the
residual LP decreases by at least $c(Q)$, and $c(H)$ increases by $c(Q)$. Hence, if the
bound holds at the start of an iteration, it also holds at the start of the next
iteration. At the end of the final iteration, $\Sc=\es$, so the final solution $H$
satisfies $c(H)\leq\lpopt$.
\end{proofof}

\begin{proofof}{Lemma~\ref{largereq}}
Part (b) follows immediately from part (a), because $z\bigl(\dt_F(T)\bigr)<|\dt_F(T)|\leq 3$, since 
$z_e<1$ for all $e\in F$, and is an integer.

Part (a) follows from a token counting argument in~\cite{Jai01} (see Theorem 4.4
in~\cite{Jai01}), which we include below for completeness. 
Suppose that no such set $T$ exists. We create two tokens for each edge $uv\in F$, giving
one token to each endpoint. 
We view $\Lc$ as a forest in the usual
fashion, where the children of a set $S\in\Lc$ are the maximal sets in $\Lc$ strictly
contained in $S$. 
We assign the tokens at each node $u$ to the minimal set in $\Lc$ containing $u$. 
Thus, every leaf set in $\Lc$ receives at least $4$ tokens. Also, if $S\in\Lc$ has only
one child $R$, then we claim that $S$ receives at least $2$ tokens. 
This is because $z\bigl(\dt(S)\bigr)-z\bigl(\dt(R)\bigr)$ is a non-zero integer,
since $\chi^{\dt_F(S)}$ and $\chi^{\dt_F(R)}$ are linearly independent,
and $0<z_e<1$ for all $e\in F$, so it must be that there are at least two edges in $F$
having an endpoint in $S-R$. 

Now by induction on $|\Lc|$ it is easy to show that 
the number of tokens received by sets in $\Lc$ is at least $2|\Lc|+2$. This yields
a contradiction since the number of tokens created is exactly $2|F|=2|\Lc|$.
The base case when $|\Lc|=1$ follows
trivially. For the induction step, consider a maximal set $S\in\Lc$. Let $n_S$ be
the number of sets in $\Lc$ contained in $S$ (including $S$), and let $a_S$ be the number
of children of $S$. If $a_S=0$, then $n_S=1$, and $S$, being a leaf, receives at least
$2n_S+2$ tokens. Suppose $a_S\geq 1$.
Applying the induction hypothesis to each child of $S$, and summing,
the total number of tokens received by the sets in $\Lc$ strictly contained in $S$ is at
least $2(n_S-1)+2a_S=2n_S+2(a_S-1)$.
If $a_S=1$, then $S$ receives $2$ tokens, and so together with the above number of tokens,
this yields $2n_S+2$ tokens received by the sets in $\Lc$ contained in $S$ (including
$S$). 
If $a_S\geq 2$, then the above number is already at least $2n_s+2$.
Summing over all maximal sets in $\Lc$ shows that the number of tokens received by sets in
$\Lc$ is at least $2|\Lc|+2$.
\end{proofof}

\begin{remark} \label{algimprov}
Lemmas~\ref{lcspan} and~\ref{laminar} also hold when we expand $\Sc$ to include all sets
with $f(S)\geq 2$, and in fact, in this case, we do not need $f$ to have even parity (see
Lemma~\ref{new-lcspan}, Corollary~\ref{new-laminar}).  
However, Lemma~\ref{largereq} {\em is tight}, i.e., the bounds in parts (a) and (b) are
tight and the lemma does not hold if we decrease either of these bounds. In
Appendix~\ref{largereq-tight}, we give an example where every set $S\in\Lc$ has
$|\dt_F(S)|\geq 3$ and $z\bigl(\dt(S)\bigr)$ is an integer that is at least $2$.
This is the chief barrier in extending our algorithm
and analysis to obtain $(k-1)$ edge-connectivity.
\end{remark}

\section{\boldmath $(k-1)$-edge connectivity incurring cost at most $1.5\cdot\lpopt$}
\label{new-kecss} \label{bicriteria}
We now show how to modify Algorithm~\ref{kecss-alg} and its analysis to obtain a
$(k-1)$-edge connected subgraph of cost at most $1.5\cdot\lpopt$ {\em for all $k$}
(Theorem~\ref{bicriteria-kecss}). The 
modification to the algorithm is simple: we now drop sets when their residual requirement
becomes at most $1$ (as opposed to $2$). This yields $(k-1)$-edge connectivity. 
As stated in Remark~\ref{algimprov},
Lemmas~\ref{lcspan} and~\ref{laminar} continue to hold with this change, and without
requiring even parity; this is the key thing we need to prove in the analysis, and is
shown in Lemma~\ref{new-lcspan} and Corollary~\ref{new-laminar}.  
Lemma~\ref{largereq} no longer yields that we have an integral edge; but, 
{\em it does yield that there is some edge with $\hx_e\geq\frac{2}{3}$}, where $\hx$ is
the extreme-point solution to the current LP. So we can pick this edge and continue, and
this leads to cost at most $1.5\cdot\lpopt$.
To obtain the guarantee with unit edge costs, it will be convenient to also delete edges
not in $\supp(\hx)$.

\begin{procedure}[ht!] 
\caption{$1.5$Apx-$k$-ECSS() 
\textnormal{\qquad // $1.5$-approximation LP-rounding algorithm for \kecss}
\label{bi-kecssalg}} 
\KwIn{\kecss instance $\bigl(G=(V,E),\{c_e\geq 0\}_{e\in E},\,k\geq 2\bigr)$}
\KwOut{subset $H\sse E$}

Initialize $E'\assign E$, $H\assign\es$, $f^\res=f^{\kecss}$. 

Let $\Sc=\{S\sse V: f^\res(S)\geq 2\}$.

\While{$H$ is not $(k-1)$-edge connected (equivalently $\Sc\neq\es$)}{
Compute an extreme-point optimal solution $\hx\in\R^{E'}$ to the following residual LP:
\begin{alignat*}{3}
\min & \quad & \sum_{e\in E'}c_ex_e \tag{\ref{reslp}} \\
\text{s.t.} \quad 
&& x\bigl(\dt_{E'}(S)\bigr) & \geq f^\res(S) \qquad && \forall S\in\Sc \\
&& 0 \leq x_e & \leq 1 && \forall e\in E'. \notag
\end{alignat*}

Update $H\assign H\cup\bigl\{e\in E':\hx_e\geq\frac{2}{3}\bigr\}$, 
$E'\assign\bigl\{e\in E': 0<\hx_e<\frac{2}{3}\bigr\}$,
$f^\res(S)=f^{\kecss}(S)-|\dt_H(S)|$ for all
$S\sse V$, and $\Sc=\{S\sse V: f^\res(S)\geq 2\}$. \label{bi-kecssupdate}
}

\Return $H$.
\end{procedure}

\begin{lemma}[{\bf\boldmath Uncrossing tight sets with requirement at least $2$}]
\label{new-lspan} \label{new-lcspan}
Let $f:2^V\mapsto\Z$ be a symmetric, normalized, two-way uncrossable function.
Let $\Sc:=\{S\sse V: f(S)\geq 2\}$.
Let $x\in\R_+^E$ satisfy $x\bigl(\dt(S)\bigr)\geq f(S)$ for all $S\in\Sc$.
Let $Z=supp(x)$. 
Say that a set $S\sse V$ is tight if $x\bigl(\dt(S)\bigr)=f(S)$.
Let $A,B\in\Sc$ be two weakly-crossing tight sets. 
Then, there exists a laminar family $\Lc\sse\Sc\cap\{A,A\cap B,A\cup B,A-B,B-A\}$ of tight
sets with $A\in\Lc$ such that 
$\chi^{\dt_Z(B)}\in\spn\bigl(\bigl\{\chi^{\dt_Z(S)}\bigr\}_{S\in\Lc}\bigr)$.
\end{lemma}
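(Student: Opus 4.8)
The plan is to mimic the structure of the proof of Lemma~\ref{lcspan}, but now with the threshold $f(S)\geq 2$ instead of $f(S)\geq 3$, and to check that the one place where even parity was invoked can be circumvented. First I would set up exactly as before: define $g(S)=f(S)-x\bigl(\dt(S)\bigr)$, which is two-way uncrossable by Lemma~\ref{twoway}(b) (it is $\leq 0$ on $\Sc$ at tight sets); reduce to the case that $A,B$ actually cross, handling $A\cup B=V$ by taking $\Lc=\{A-B,A\}$ since then $A-B=V-B\in\Sc$ by symmetry. Then observe that since $f(A),f(B)\geq 2$, two-way uncrossability gives $f(A\cap B)+f(A\cup B)\geq 4$ and $f(A-B)+f(B-A)\geq 4$, so at least one of $A\cap B,A\cup B$ lies in $\Sc$ (each would need value $\leq 1$ to be excluded) and at least one of $A-B,B-A$ lies in $\Sc$.

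Next I would dispose of the two ``easy'' cases verbatim as in Lemma~\ref{lcspan}: if $x\bigl(\dt(S)\bigr)\geq f(S)$ for both $S=A\cap B$ and $S=A\cup B$, then the two-way-uncrossability chain on $g$ forces $A\cap B,A\cup B$ tight and $\tht=0$, whence $\chi^{\dt_Z(A)}+\chi^{\dt_Z(B)}=\chi^{\dt_Z(A\cap B)}+\chi^{\dt_Z(A\cup B)}$ and $\Lc=\{A,A\cap B,A\cup B\}$ works (note both are in $\Sc$ in this subcase); symmetrically for $A-B,B-A$. So again the remaining case is that exactly one of $\{A\cap B,A\cup B\}$ is in $\Sc$ and exactly one of $\{A-B,B-A\}$ is in $\Sc$, with four symmetric subcases; it suffices to treat the subcase $A\cup B,A-B\in\Sc$ (so $f(A\cap B),f(B-A)\leq 1$).

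The one delicate point is the argument that $x\bigl(\dt(A\cap B)\bigr)+x\bigl(\dt(B-A)\bigr)\geq f(A\cap B)+f(B-A)$, which in Lemma~\ref{lcspan} used even parity to rule out $f(B)=3$ with $f(A\cap B)=f(B-A)=2$. Here the structure is simpler: with $\al=\sum_{uv\in Z:\,u\in A\cap B,\,v\in B-A}x_{uv}\geq 0$, a violation would give $f(B)=x\bigl(\dt(B)\bigr)=x\bigl(\dt(A\cap B)\bigr)+x\bigl(\dt(B-A)\bigr)-2\al<f(A\cap B)+f(B-A)-2\al\leq 1+1-0=2$, contradicting $B\in\Sc$, i.e.\ $f(B)\geq 2$. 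So no parity argument is needed — the bound $f(A\cap B),f(B-A)\leq 1$ does all the work. Consequently some $S\in\{A\cap B,B-A\}$ has $x\bigl(\dt(S)\bigr)\geq f(S)$, hence (since $A\cup B,A-B\in\Sc$ and the easy-case reasoning applies to the pair $\{A\cap B,A-B\}$ via crossing-supermodularity of $g$) equality holds, the complementary set $T\in\{A\cap B,A-B\}$ also has $x\bigl(\dt(T)\bigr)\geq f(T)$ and hence equality (else the displayed inequality would persist), so all four of $A\cap B,A\cup B,A-B,B-A$ are tight and $\tht=\gm=0$. Finally I would check $\al=0$: from $f(B)=x\bigl(\dt(B)\bigr)=x\bigl(\dt(A\cap B)\bigr)+x\bigl(\dt(B-A)\bigr)-2\al=f(A\cap B)+f(B-A)-2\al$ and $f(A\cap B),f(B-A)\leq 1<2\leq f(B)$, we get $f(A\cap B)+f(B-A)-2\al\geq 2$ only if $f(A\cap B)=f(B-A)=1$ and $\al=0$. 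Then $\chi^{\dt_Z(B)}=\chi^{\dt_Z(A-B)}+\chi^{\dt_Z(A\cup B)}-2\chi^{\dt_Z(A)}$ (same picture as Fig.~\ref{tightuncross}), so $\Lc=\{A,A-B,A\cup B\}$ proves the lemma. The other three subcases follow by the same symmetric substitutions ($A\leftrightarrow B$; $B\leftrightarrow V-B$) used in cases 2)--4) of Lemma~\ref{lcspan}. The main obstacle, which turns out to be mild, is precisely verifying that replacing even parity by the tighter threshold bound $f(\cdot)\leq 1$ still closes off the forced-violation argument; everything else is a direct transcription.
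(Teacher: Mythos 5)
Your proposal is correct and follows essentially the same approach as the paper's proof of Lemma~\ref{new-lcspan}: reduce to the genuinely crossing case, handle the two easy sub-cases where both sets of an uncrossing pair lie in $\Sc$, and in the remaining case observe that the bound $f(A\cap B),f(B-A)\leq 1$ together with $f(B)\geq 2$ directly rules out $x\bigl(\dt(A\cap B)\bigr)+x\bigl(\dt(B-A)\bigr)<f(A\cap B)+f(B-A)$, which is exactly how the paper dispenses with the parity argument. (Minor nit: when you invoke the ``easy-case reasoning,'' the relevant pair is $\{A\cap B, A\cup B\}$ or $\{B-A, A-B\}$, not $\{A\cap B, A-B\}$, and the complementary set $T$ should range over $\{A\cap B, B-A\}$; this is a typographical slip that does not affect the validity of the argument.)
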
 

\begin{proof}
The proof is quite similar to that of Lemma~\ref{lcspan}, and we briefly sketch the
changes. 
Let $Z'=\{uv\in Z: u\in A-B, v\in B-A\}$ and $\tht=x(Z')$,
and $Z''=\{uv\in Z: u\in A\cap B, v\notin A\cup B\}$ and $\gm=x(Z'')$
(see Fig.~\ref{tightuncross}).
As before, since $f$ is two-way uncrossable, at least one of $A\cap B, A\cup B$ is in
$\Sc$, and at least one of $A-B, B-A$ is in $\Sc$. 
Also, if $x\bigl(\dt(S)\bigr)\geq f(S)$ for $S\in\{A\cap B, A\cup B\}$, then both sets are
tight, and $\tht=0$; 
if $x\bigl(\dt(S)\bigr)\geq f(S)$ for $S\in\{A-B, B-A\}$, then both sets are tight, and
$\gm=0$. 
So if $\{A\cap B,A\cup B\}\sse\Sc$ or $\{A-B,B-A\}\sse\Sc$, then we are done.
  
We only need to change slightly the portion of the argument when exactly one of 
$A\cap B, A\cup B$ lies in $\Sc$, and exactly one of $A-B,B-A$ lies in $\Sc$. We consider
only the case where $A\cup B, A-B\in\Sc$, and so $A\cap B\notin\Sc$, $B-A\notin\Sc$. The
other cases follow from symmetric arguments exactly as in Lemma~\ref{lcspan}.

Let $Z_{A\cap B,B-A}=\{uv\in Z: u\in A\cap B, v\in B-A\}$ and $\al=x(Z_{A\cap B,B-A})$
(see Fig.~\ref{tightuncross}).
Again, we first show that 
$x\bigl(\dt(A\cap B)\bigr)+x\bigl(\dt(B-A)\bigr)\geq f(A\cap B)+f(B-A)$. If this does not
hold, then we obtain, as before the inequality
\begin{equation*}
f(B)=x\bigl(\dt(B)\bigr)=x\bigl(\dt(A\cap B)\bigr)+x\bigl(\dt(B-A)\bigr)-2\al
<f(A\cap B)+f(B-A)-2\al. \tag{\ref{lcineq2}}
\end{equation*}
But now since $f(B)\geq 2$ and $f(A\cap B), f(B-A)\leq 1$, this immediately yields a
contradiction. So since 
$x\bigl(\dt(A\cap B)\bigr)+x\bigl(\dt(B-A)\bigr)\geq f(A\cap B)+f(B-A)$, the same
arguments as in Lemma~\ref{lcspan}, establish that we must have 
$x\bigl(\dt(S)\bigr)\geq f(S)$ for $S\in\{A\cap B,B-A\}$, and so all four sets 
$A\cap B, A\cup B, A-B, B-A$ are tight, and $\tht=\gm=0$. Given this, we also obtain that
$\al=0$: this is because, similar to \eqref{lcineq2}, we have 
$f(B)=f(A\cap B)+f(B-A)-2\al$, and $f(B)\geq 2$, $f(A\cap B), f(B-A)\leq 1$.
Therefore, in this case, we have 
$\chi^{\dt_Z(B)}=\chi^{\dt_Z(A-B)}+\chi^{\dt_Z(A\cup B)}-2\chi^{\dt_Z(A)}$.
\end{proof}

Lemma~\ref{new-lcspan} immediately yields the following laminarity characterization of
extreme points. The proof is the same as that of Lemma~\ref{laminar}, and is
therefore omitted. 

\begin{corollary} \label{new-laminar}
Let $f:2^V\mapsto\Z$ be a symmetric, normalized, two-way uncrossable function.
Let $\Sc:=\{S\sse V: f(S)\geq 2\}$. Let $E'\sse E$. 
Let $\hx\in\R^{E'}$ be an extreme-point solution to the following system:
\begin{equation*}
x\bigl(\dt_{E'}(S)\bigr)\geq f(S) \quad \forall S\in\Sc, \qquad
0\leq x_e\leq 1 \quad \forall e\in E'.
\end{equation*}
Let $F=\{e\in E':0<\hx_e<1\}$. There exists a laminar family $\Lc\sse\Sc$ with $|\Lc|=|F|$
such that the vectors $\bigl\{\chi^{\dt_F(S)}\bigr\}_{S\in\Lc}$ are linearly independent, and
$\hx\bigl(\dt(S)\bigr)=f(S)$ for all $S\in\Lc$.
\end{corollary}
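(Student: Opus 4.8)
The plan is to mirror the proof of Lemma~\ref{laminar} essentially verbatim, since Corollary~\ref{new-laminar} has the identical structure --- the only substantive change in the hypotheses is that $\Sc$ now collects sets with $f(S)\geq 2$ rather than $f(S)\geq 3$, and correspondingly we invoke Lemma~\ref{new-lcspan} in place of Lemma~\ref{lcspan}, which is exactly the uncrossing statement valid in this regime (and without even parity). So the proof has three movements, carried out in order.

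First, set $Z=\supp(\hx)$, call $S\sse V$ tight if $\hx(\dt(S))=f(S)$, and take a maximal laminar family $\T$ of tight sets from $\Sc$ for which $\{\chi^{\dt_Z(S)}\}_{S\in\T}$ is linearly independent. The first claim is that every tight $T\in\Sc$ has $\chi^{\dt_Z(T)}\in\spn(\{\chi^{\dt_Z(S)}\}_{S\in\T})$. This is proved by a minimal-counterexample / induction-on-weak-crossings argument: among tight sets in $\Sc$ outside the span, pick $B$ weakly crossing the fewest members of $\T$; pick $A\in\T$ weakly crossing $B$; apply Lemma~\ref{new-lcspan} to get a laminar family $\Lc'\sse\Sc\cap\{A,A\cap B,A\cup B,A-B,B-A\}$ of tight sets with $\chi^{\dt_Z(B)}\in\spn(\{\chi^{\dt_Z(S)}\}_{S\in\Lc'})$; observe each of $A\cap B,A\cup B,A-B,B-A$ weakly crosses strictly fewer sets of $\T$ than $B$ does, so by minimality each lies in the span of $\T$, hence so does $\chi^{\dt_Z(B)}$, a contradiction.

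Second, form the matrix $M$ with columns indexed by $Z$ and rows $\{\chi^{\dt_Z(S)}\}_{S\in\T}$ --- it has full row rank by construction --- and let $M'$ be the submatrix restricted to columns in $F=\{e\in E':0<\hx_e<1\}$. Argue $M'$ has full column rank: if $M'd'=0$ for a nonzero $d'\in\R^F$, extend to $d\in\R^{E'}$ by zero outside $F$; then $Md=0$, so by the span claim $d(\dt(S))=0$ for every tight $S\in\Sc$, so $\hx\pm\ve d$ stays feasible for small $\ve>0$ (the bound constraints $0\leq x_e\leq 1$ are slack on $F$), contradicting extremality of $\hx$. Hence $|F|=\operatorname{rank}(M')\leq|\T|$, so pick an $|F|\times|F|$ full-rank submatrix $M''$ of $M'$; the rows of $M''$ correspond to a laminar subfamily $\Lc\sse\T\sse\Sc$ of tight sets with $|\Lc|=|F|$ and $\{\chi^{\dt_F(S)}\}_{S\in\Lc}$ linearly independent --- exactly the asserted family.

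There is no real obstacle here: the entire content has been front-loaded into Lemma~\ref{new-lcspan}, whose proof already handles the absence of even parity by using the stronger numeric gap ($f(B)\geq 2$ versus $f(A\cap B),f(B-A)\leq 1$). The only point needing a word of care is the very first reduction step in Lemma~\ref{new-lcspan}'s analogue of the ``non-crossing'' case, i.e.\ when $A\cup B=V$ --- but that is inside Lemma~\ref{new-lcspan}, not this corollary. So the honest thing is to state that the proof is identical to Lemma~\ref{laminar} with Lemma~\ref{new-lcspan} substituted for Lemma~\ref{lcspan}, which is precisely what the excerpt does; if a self-contained write-up were wanted, one would simply transcribe the three movements above.
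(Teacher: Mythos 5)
Your proposal is correct and matches the paper exactly: the paper itself states that the proof of Corollary~\ref{new-laminar} is ``the same as that of Lemma~\ref{laminar}, and is therefore omitted,'' and what you have written out is precisely the proof of Lemma~\ref{laminar} with Lemma~\ref{new-lcspan} substituted for Lemma~\ref{lcspan}, which is all that is required.
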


\begin{proofof}{Theorem~\ref{bicriteria-kecss}}
We can solve \eqref{reslp} efficiently, because a polytime separation oracle follows as in
Lemma~\ref{lpsolve}. 
We argue that Algorithm~\ref{bi-kecssalg} terminates in at most $|E|$ iterations.
The function $f^\res$ is two-way uncrossable and has even parity (and is symmetric,
normalized), due to Lemma~\ref{twoway} and Claim~\ref{even}. 
Let $F=\{e\in E': 0<\hx_e<1\}$, and $\Lc\sse\Sc$ be the laminar family
corresponding to $\hx$ given by Corollary~\ref{new-laminar}.  
Let $Q=\{e\in E': \hx_e=1\}$. Let $z=(\hx_e)_{e\in F}$. 
Then, $z\bigl(\dt_F(S)\bigr)=f^\res(S)-|\dt_Q(S)|$ is an integer for all $S\in\Lc$, where
$f^\res$ is the function at the start of the iteration.
So $F$, $\Lc$, and $z$ 
satisfy the conditions of Lemma~\ref{largereq}. So we have 
$z\bigl(\dt_F(T)\bigr)=f^\res(T)-|\dt_Q(T)|\leq 2$, and $|\dt_F(T)|\leq 3$ for some
$T\in\Lc$.  
If $z\bigl(\dt_F(T)\bigr)\leq 1$, then since $f^\res(T)\geq 2$ (as $T\in\Sc$), this
implies that $\dt_Q(T)\neq\es$, so $Q\neq\es$; otherwise, there is some $e\in\dt_F(T)$
with $z_e=\hx_e\geq\frac{2}{3}$. 
In either case, after the update in step~\ref{bi-kecssupdate}, at least one edge gets
added to $H$, and $T$ drops out of $\Sc$. 
Hence, we terminate in at most $|E|$ iterations.

By definition, at termination, we have that $H$ is at least $(k-1)$-edge connected. 
To bound the cost, we claim that 
$c(H)+1.5\cdot\lpopt[{\text{\ref{reslp}}}]\leq 1.5\cdot\lpopt$ at the
start and end of every iteration, which, in particular, implies that the final solution $H$
satisfies $c(H)\leq1.5\cdot\lpopt$. The inequality clearly holds at the beginning.
In a given iteration, if we add $E''\sse E'$ to $H$, then $(\hx_e)_{e\in E'-E''}$ is a
feasible solution to the residual LP at the start of the next iteration. So the optimal
value of the residual LP decreases by at least $\frac{2}{3}\cdot c(E'')$, and $c(H)$
increases by $c(E'')$. So we maintain the invariant. 

\vspace*{-1ex}
\paragraph{Unit edge costs.} With unit edge costs, we can refine the above cost
analysis, by arguing in a manner similar to~\cite{GabowGTW09}. 
Note that $\lpopt\geq\frac{kn}{2}$ in this case, since $x\bigl(\dt(v)\bigr)\geq k$ for
every vertex $v$.  
Let $F_1=\{e\in E': 0<\hx_e<1\}$, and $\Lc_1\sse\Sc$ be the laminar family
corresponding to $\hx$ given by Corollary~\ref{new-laminar} in the very first iteration,
that is, when we solve \eqref{kecss-lp}. We have $|F_1|=|\Lc_1|\leq 2n$ since $\Lc_1$ is a 
laminar collection of subsets of a ground set with $n=|V|$ elements.
We argue that $c(H)=|H|\leq\lpopt+\min\{|F_1|-\hx(F_1),\hx(F_1)/2\}$.
Any increase in cost beyond $\lpopt$ results from rounding fractional edges to $1$, and
the increase when edge $e\in F_1$ gets rounded to $1$ is $1-\hx_e$.
So we have $c(H)\leq\lpopt+|F_1|-\hx(F_1)$. 
Let $R\sse F_1$ be the set of edges from $F_1$ that got added to $H$ in the first
iteration, so $|R|\leq 1.5\cdot\hx(R)$.
Then, $(\hx_e)_{e\in F_1-R}$ is a solution to the residual LP at the next iteration, and
so by the analysis above, the edges added to $H$ after the first iteration have cost at
most $1.5\cdot\hx(F_1-R)$. So we have 
$c(H)\leq\lpopt-x(F_1)+|R|+1.5\cdot\hx(F_1-R)\leq\lpopt+\hx(F_1)/2$. 
Thus, 
\begin{equation*}
\begin{split}
c(H)-\lpopt & \leq\min\Bigl\{|F_1|-\hx(F_1),\tfrac{\hx(F_1)}{2}\Bigr\}\leq
\tfrac{1}{3}\cdot\bigl(|F_1|-\hx(F_1)\bigr)+\tfrac{2}{3}\cdot\tfrac{\hx(F_1)}{2}
\leq\frac{|F_1|}{3}\leq\frac{2n}{3} \\
& \leq\frac{4}{3k}\cdot\lpopt. \qedhere
\end{split}
\end{equation*}
\end{proofof}

\section{\boldmath $k$-edge connected spanning multigraph} \label{kecsm}
We now consider the $k$-edge connected spanning multigraph (\kecsm) problem, which is the
variant of \kecss, where we allowed to pick multiple copies of an edge. 
We correspondingly modify \eqref{kecss-lp} to obtain the following LP-relaxation for
\kecsm:
\begin{alignat}{3}
\min & \quad & \sum_{e\in E}c_ex_e \tag{\kecsmlp} \label{kecsm-lp} \\
\text{s.t.} \quad && x\bigl(\dt(S)\bigr) & \geq k
\qquad && \forall \es\neq S\subsetneq V \label{ecsmcon} \\
&& x & \geq 0. \notag
\end{alignat}
In Section~\ref{contrib}, we sketched a proof showing that our guarantee for \kecss
(Theorem~\ref{kecss-thm}) immediately yields a pseudopolynomial time algorithm yielding
the guarantee in Theorem~\ref{kecsm-thm}. We show here that Algorithm~\ref{kecss-alg} can
be adapted to yield a polytime algorithm for \kecsm achieving the guarantee stated in
Theorem~\ref{kecsm-thm}, thereby proving Theorem~\ref{kecsm-thm}. 

In Algorithm~\ref{kecsm-alg}, we describe how to obtain, for even $k\geq 4$, a multigraph
$H$ that is $(k-2)$-edge connected, with $c(H)\leq\lpopt[{\kecsmlp}]$. 
The idea is to simply pick the ``integral portion'' of an extreme-point optimal solution
to \eqref{kecsm-lp} initially, which leaves one with a residual problem and an LP solution
to this residual problem that picks each edge to an extent of at most $1$.
To elaborate, 
we compute an extreme-point optimal solution
$\hx$ to \eqref{kecsm-lp}. 
Clearly, we have $|\supp(\hx)|\leq |E|$.
We pick $\floor{\hx_e}$ copies of every edge $e$,
and move to $f^\res$ given by 
$f^\res(S)=f^{\kecss}(S)-\sum_{e\in\dt_G(S)}\floor{\hx_e}$ for all $S\sse V$. 
Now $\bigl(\hx_e-\floor{\hx_e}\bigr)_e$
yields a feasible solution to \eqref{reslp} (where we still have the $0\leq x_e\leq 1$
constraints), and we proceed thereafter as in Algorithm~\ref{kecss-alg}. The analysis in
Section~\ref{kecss} continues to hold, since $f^\res$ remains a two-way uncrossable,
even-parity function, and so we obtain a multigraph that is $(k-2)$-edge-connected after
at most $|E|$ iterations.   

\begin{procedure}[ht!] 
\caption{$k$-ECSM-Alg() \textnormal{\qquad // LP-rounding algorithm for \kecsm}
\label{kecsm-alg}} 
\KwIn{\kecsm instance $\bigl(G=(V,E),\{c_e\geq 0\}_{e\in E},\,\text{even}\ k\geq 4\bigr)$}
\KwOut{Capacity-vector $w\in\Z_+^E$ specifying edge multiplicities; equivalently, a
  multigraph $H$}

Compute an extreme-point optimal solution $\hx$ to \eqref{kecsm-lp}.

Initialize $w_e=\floor{\hx_e}$ for all $e\in E$, $E'\assign\{e\in E:\hx_e>w_e\}$, 
and $f^\res(S)=f^{\kecss}(S)-w\bigl(\dt_G(S)\bigr)$ for all $S\sse V$.

Let $\Sc=\{S\sse V: f^\res(S)\geq 3\}$.

\While{$G$ is not $(k-2)$-edge connected under capacity-vector $w$ (equivalently $\Sc\neq\es$)}{
Compute an extreme-point optimal solution $\hx\in\R^{E'}$ to \eqref{reslp}.

Update $w_e\assign w_e+\floor{\hx_e}$ for all $e\in E'$,
$E'\assign\{e\in E': \hx_e>\floor{\hx_e}\}$,  
$f^\res(S)=f^{\kecss}(S)-w\bigl(\dt_G(S)\bigr)$ for all
$S\sse V$, and $\Sc=\{S\sse V: f^\res(S)\geq 3\}$. \label{kecsm-update}
}

\Return $w$.
\end{procedure}

\begin{proofof}{Theorem~\ref{kecsm-thm}}
To obtain the guarantee in Theorem~\ref{kecsm-thm}, we simply run
Algorithm~\ref{kecsm-alg} with parameter $k+2$ if $k$ is even, and parameter $k+3$ if $k$
is odd. Note that for any $k'\geq k$, we have
$\lpopt[{\kecsmlp[{k'}]}]\leq\frac{k'}{k}\cdot\lpopt[{\kecsmlp}]$.

So it suffices to show that for even $k\geq 4$, Algorithm~\ref{kecsm-alg} returns a
multigraph $H$ that is at least $(k-2)$-edge connected and satisfies
$c(H)\leq\lpopt[{\kecsmlp}]$. 
This follows from the analysis in Section~\ref{kecss}. 
As in lemma~\ref{lpsolve}, we obtain a polytime separation oracle for \eqref{reslp} by
computing the min-cut value, and considering all $2$-approximate min-cuts, under the
capacity vector $w+\hx$. Moreover, Lemmas~\ref{lcspan}--\ref{largereq} continue to hold,   
since $f^\res$ is two-way uncrossable and has even parity.
\end{proofof}

\section{Degree-bounded \boldmath \kecss and \kecsm} \label{degbnd-kecss}
We now consider the degree-bounded versions of \kecss and \kecsm, and showcase the
versatility of our simple ``uncrossing-to-achieve-laminarity'' approach, to
obtain strong guarantees for these problems.
In the {\em minimum-degree \kecss} (\mdkecss) problem, we are given a \kecss instance
$\bigl(G=(V,E),\{c_e\}_{e\in E},k\bigr)$ along with (integer) degree lower- and upper-
bounds $\{\ldeg_v,\udeg_v\}_{v\in V}$ on the nodes, and the goal is to find a minimum-cost
$k$-edge connected subgraph $H$ satisfying the degree bounds $\ldeg_v\leq|\dt_H(v)\leq\udeg_v$
for all $v\in V$. Minimum-degree \kecsm (\mdkecsm) is defined similarly, except that $H$ is
allowed to be a multigraph (as in \kecsm). 

We first discuss minimum-degree \kecss, and then show how to handle
\mdkecsm by suitably tweaking the underlying algorithm. 
The LP-relaxation \eqref{kecss-lp} for \kecss extends naturally to yield the following
natural LP-relaxation for \mdkecss.
\begin{alignat}{3}
\min & \quad & \sum_{e\in E}c_ex_e \tag{\mdkecsslp} \label{mdkecss-lp} \\
\text{s.t.} \quad && x\bigl(\dt(S)\bigr) & \geq k
\qquad && \forall \es\neq S\subsetneq V \label{mdcon} \\
&& 0 \leq x_e & \leq 1 && \forall e\in E \label{mdbnd} \\
&& \ldeg_v \leq x\bigl(\dt(v)\bigr) & \leq \udeg_v && \forall v\in V. \label{degbnd}
\end{alignat}
We collectively refer to constraints \eqref{degbnd} as degree constraints.
We show that Algorithm~\ref{kecss-alg} readily extends to yield the following guarantee.

\begin{theorem} \label{mdkecss-thm}
There is a polytime algorithm that, for any even $k$, returns a $(k-2)$-edge connected
subgraph $H$ such that $c(H)\leq\lpopt[{\mdkecsslp}]$ and
$\ldeg_v-2\leq|\dt_H(v)|\leq\udeg_v+2$ for all $v\in V$.

For odd $k$, we obtain that $H$ is $(k-3)$-edge connected, with the same bounds on $c(H)$
and $\bigl\{|\dt_H(v)|\bigr\}_{v\in V}$.
\end{theorem}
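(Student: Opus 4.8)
The plan is to run an iterative-rounding-and-relaxation algorithm that is the natural combination of Algorithm~\ref{kecss-alg} with the degree-relaxation step of Singh--Lau. We maintain a picked edge-set $H$, a residual edge-set $E'$, a residual cut-requirement function $f^\res(S)=f^{\kecss}(S)-|\dt_H(S)|$, a collection $\Sc=\{S:f^\res(S)\geq 3\}$ of ``active'' cut-sets, and a shrinking set $W\sse V$ of ``active'' degree-constrained nodes with residual bounds $\ldeg'_v,\udeg'_v$ (updated by subtracting $|\dt_H(v)|$ whenever edges are picked). In each iteration we compute an extreme-point optimum $\hx$ to the residual LP consisting of the cut constraints $x(\dt_{E'}(S))\geq f^\res(S)$ for $S\in\Sc$, the active degree constraints $\ldeg'_v\leq x(\dt_{E'}(v))\leq\udeg'_v$ for $v\in W$, and $0\leq x_e\leq 1$. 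We then: (i) pick all edges with $\hx_e=1$ into $H$; (ii) drop every $S$ with residual requirement $\leq 2$ from $\Sc$; and (iii) drop from $W$ every node $v$ whose active degree constraints are crossed by at most $2$ (say) fractional edges, after which feasibility forces $\ldeg_v-2\leq|\dt_H(v)|\leq\udeg_v+2$ once we account for the already-fixed edges, so that dropping the constraint is safe.

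The key structural step is to extend Lemma~\ref{laminar} to this setting: any extreme point $\hx$ of the residual LP is characterized by a linearly independent family consisting of a laminar family $\Lc\sse\Sc$ of tight cut-constraints \emph{together with} a set $W'\sse W$ of tight active degree constraints, with $|\Lc|+|W'|=|F|$ where $F=\{e:0<\hx_e<1\}$. This follows essentially verbatim from the proof of Lemma~\ref{laminar} once we observe that each degree constraint is a singleton set $\{v\}$, and singletons never weakly-cross any set, so they can always be added to a laminar family while preserving laminarity; hence Lemma~\ref{lcspan} (uncrossing tight cut-sets) is used exactly as before to span the cut-constraint rows, and the singleton degree rows are simply appended. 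Next I would rerun the token-counting argument of Lemma~\ref{largereq}, but with $3$ tokens per edge-endpoint instead of $2$ (giving $6|F|$ tokens total): assign tokens of endpoint $u$ to the minimal set of $\Lc\cup\{\{v\}:v\in W'\}$ containing $u$. A standard accounting (as in Singh--Lau) shows that if \emph{every} set in $\Lc$ has $|\dt_F(\cdot)|\geq 4$ and every node in $W'$ has degree $\geq 3$ in $F$, then the number of tokens collected is at least $6|F|+\Omega(1)$, a contradiction. Hence either some $S\in\Lc$ has $|\dt_F(S)|\leq 3$, so $f^\res(S)=\hx(\dt_F(S))+|\dt_{H}\text{-part}|$ forces its residual requirement down to $\leq 2$ and it gets dropped; or some $v\in W'$ has $|\dt_F(v)|\leq 2$, so its active degree constraint gets dropped; in either case there is also an integral edge to pick (since $F$ alone cannot satisfy an active constraint with a small fractional boundary), so $H$ strictly grows and the algorithm terminates in $O(|E|)$ iterations.

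Given this, the cost and connectivity bounds follow as in the proof of Theorem~\ref{kecss-thm}: we maintain the invariant $c(H)+\lpopt[{\text{\ref{reslp}}}]\leq\lpopt[{\mdkecsslp}]$, so $c(H)\leq\lpopt[{\mdkecsslp}]$ at termination; we only ever drop cut-sets of residual requirement $\leq 2$, so $H$ is $(k-2)$-edge connected; and we only ever drop a degree constraint for $v$ when the not-yet-fixed fractional boundary is small enough that the final degree must land within $[\ldeg_v-2,\udeg_v+2]$. One must also verify a polytime separation oracle for the residual LP, which is immediate: the degree constraints are polynomially many and checked directly, and the cut constraints are handled by the $2$-approximate-min-cut enumeration of Lemma~\ref{lpsolve}. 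The odd-$k$ case is obtained, exactly as in Theorem~\ref{kecss-thm}, by running the algorithm with $k-1$ in place of $k$. The main obstacle I anticipate is pinning down the precise additive constants: the exact threshold at which a degree constraint is dropped, and the exact token budget needed to make the counting go through, must be chosen consistently so that (a) a dropped degree constraint still guarantees the claimed $\pm 2$ violation after adding back the fixed edges, and (b) the token inequality still yields a contradiction when no set/node is ``small''; getting the $+2$ rather than $+3$ requires being slightly careful about whether the edges fixed to $1$ in the \emph{current} iteration count toward the violation at the moment of dropping.
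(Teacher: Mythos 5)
Your overall plan is the right one and matches the paper's Section~\ref{degbnd-kecss}: iterative rounding with relaxation of both cut-constraints (with residual requirement $\leq 2$) and degree constraints, plus an extension of Lemma~\ref{laminar} (this is the paper's Lemma~\ref{mdlaminar}) that appends singleton nodes to the laminar family. However, two of your intermediate steps diverge from the paper and contain genuine errors.

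First, the modified token count (3 tokens per endpoint, separate thresholds for $\Lc$ and $W'$) is unnecessary and, as you yourself flag, you don't pin down the constants. The paper's observation is that singletons $\{v\}$ are trivially laminar with every set, so the combined family $\Lc\cup\{\{v\}:v\in W\}$ is itself a laminar family and Lemma~\ref{largereq} applies \emph{unchanged}: some $T$ in the combined family has $|\dt_F(T)|\leq 3$ and $z\bigl(\dt_F(T)\bigr)\leq 2$. There is nothing new to count. Correspondingly, the drop rule for a degree node cannot be ``$\leq 2$ fractional boundary edges'': Lemma~\ref{largereq} only delivers $|\dt_F(v)|\leq 3$, so with your threshold of $2$ the algorithm can fail to make progress. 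The paper's drop rule is $\hx\bigl(\dt_{E'}(v)\bigr)\leq 2$ \emph{and} $|\dt_{E'}(v)|-\hx\bigl(\dt_{E'}(v)\bigr)\leq 2$; this is what Lemma~\ref{largereq} actually yields once you use that $z\bigl(\dt_F(v)\bigr)$ is a \emph{positive} integer (nonempty by linear independence, integer by tightness of an integer bound), so $z\bigl(\dt_F(v)\bigr)\geq 1$ and hence $|\dt_F(v)|-\hx\bigl(\dt_F(v)\bigr)\leq 3-1=2$. This positivity is exactly what gets you $+2$ rather than $+3$, answering the concern in your last paragraph.

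Second, the claim ``in either case there is also an integral edge to pick'' is false. For a cut-set $T\in\Sc$ this works, because $f^\res(T)\geq 3 > 2\geq z\bigl(\dt_F(T)\bigr)$ forces $|\dt_Q(T)|\geq 1$. But a tight degree constraint for $v\in W$ with small fractional boundary gives no such conclusion: the residual degree bounds are not constrained to be $\geq 3$, so $|\dt_Q(v)|$ can be $0$ and no edge gets fixed to $1$. The correct progress measure (as in the paper) is that each iteration either adds an edge to $H$ \emph{or} removes a vertex from the active degree-node set $V'$, giving termination in at most $|E|+|V|$ iterations — not $O(|E|)$ iterations each adding an edge. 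As written, your termination argument has a hole.
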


As discussed in Section~\ref{intro}, prior work on degree-bounded network design, when
specialized to degree-bounded \kecss, yields results where the subgraph $H$
returned is $k$-edge connected, but $c(H)$ is roughly $2\cdot\lpopt[{\mdkecsslp}]$, and
there is some violation of the degree constraints, where the degree violation involves
a multiplicative $O(1)$ factor or an additive $O(k)$ factor.

\subsubsection*{Proof of Theorem~\ref{mdkecss-thm}}
We may assume that $k$ is even, as the guarantee for odd $k$ is obtained by
invoking the algorithm with $k-1$. We mimic Algorithm~\ref{kecss-alg}. 
Essentially everything works as before, since tight degree constraints correspond to
singleton nodes, and these can be added to any laminar family while preserving
laminarity. Correspondingly, dropping a set may involve dropping a connectivity
constraint \eqref{mdcon} and/or dropping a degree constraint \eqref{degbnd}.
The residual LP is now augmented with node degree constraints (see \eqref{mdreslp}) for
the nodes $V'\sse V$ whose degree constraints have not been dropped.
We compute an extreme-point solution $\hx$ to this LP. 
We now both pick edges whose $\hx_e$s are set to $1$, and also delete edges
whose $\hx_e$s are set to $0$, and when we pick edges, we update both the residual
requirements of sets, and the degree lower and upper bounds for nodes.
Our laminar family may now also involve some singleton nodes whose degree constraints,
either the lower- or upper- bound in \eqref{degbnd}, are tight (see
Lemma~\ref{mdlaminar}). Given this laminar 
structure, we can again find a tight set $S$, which could be a singleton, that has at most
$3$ fractional edges crossing it. We drop the constraint(s) corresponding to this set; if
this set is a singleton node $\{v\}$, then this means also dropping both the lower- and
upper- bound degree constraints for $v$. 
In the latter case, we have
$\hx\bigl(\dt(v)\bigr)\in\{1,2\}$,
so the additive violation in the degree lower and upper bound for $v$ is at most $2$. 
The complete description appears as Algorithm~\ref{mdkecss-alg} below.

\begin{procedure}[t!] 
\caption{MD-$k$ECSS-Alg() \textnormal{\qquad // LP-rounding algorithm for \mdkecss}
\label{mdkecss-alg}} 
\KwIn{\kecss instance $\bigl(G=(V,E),\{c_e\geq 0\}_{e\in E},\,\text{even}\ k\bigr)$,
integers $\ldeg_v\leq\udeg_v$ for all $v\in V$} 
\KwOut{subset $H\sse E$}

Initialize $E'\assign E$, $V'\assign V$, $H\assign\es$, $f^\res=f^{\kecss}$. 
Set $\ldeg^\res_v=\ldeg_v, \udeg^\res_v=\udeg_v$ for all $v\in V$. 

Let $\Sc=\{S\sse V: f^\res(S)\geq 3\}$.

\While{$H$ is not $(k-2)$-edge connected (equivalently $\Sc\neq\es$) or $V'\neq\es$}{
Compute an extreme-point optimal solution $\hx\in\R^{E'}$ to the following residual LP:
\begin{alignat}{3}
\min & \quad & \sum_{e\in E'}c_ex_e \tag{MD-ResLP} \label{mdreslp} \\
\text{s.t.} \quad 
&& x\bigl(\dt_{E'}(S)\bigr) & \geq f^\res(S) \qquad && \forall S\in\Sc \label{mdrescon} \\
&& 0 \leq x_e & \leq 1 && \forall e\in E' \notag \\
&& \ldeg^\res_v \leq x\bigl(\dt(v)\bigr) & \leq \udeg^\res_v \qquad && \forall v\in V'.
\end{alignat}

Update $H\assign H\cup\{e\in E': \hx_e=1\}$, $E'\assign\{e\in E': 0<\hx_e<1\}$,
$f^\res(S)=f^{\kecss}(S)-|\dt_H(S)|$ for all $S\sse V$, and
$\Sc=\{S\sse V: f^\res(S)\geq 3\}$. \label{mdkecss-updatesc} 

Update $\ldeg^\res_v=\ldeg_v-|\dt_H(v)|$ and $\udeg^\res_v=\udeg_v-|\dt_H(v)|$, for all
$v\in V$. 

Update $V'=\{v\in V': \hx\bigl(\dt_{E'}(v)\bigr)>2\text{ or }|\dt_{E'}(v)|-\hx\bigl(\dt_{E'}(v)\bigr)>2\}$.
\label{mdkecss-update}
}

\Return $H$.
\end{procedure}

The analysis is very similar to that of Algorithm~\ref{kecss-alg}. A separation oracle for
\eqref{mdreslp} follows exactly as before, so \eqref{mdreslp} can be solved
efficiently. (Note that when $k=2$, there are no constraints \eqref{mdrescon}.)
Lemma~\ref{laminar} essentially continues to hold, but needs to be restated
more precisely, to take into account tight degree constraints.

\begin{lemma} \label{mdlaminar}
Let $f:2^V\mapsto\Z$ be a symmetric, normalized, two-way uncrossable, even-parity
function.
Let $\lb_v,\ub_v\in\Z_+$ for all $v\in V$.
Let $\Sc:=\{S\sse V: f(S)\geq 3\}$. Let $E'\sse E$, and $V'\sse V$. 
Let $\hx\in\R^{E'}$ be an extreme-point solution to the following system:
\begin{equation*}
x\bigl(\dt_{E'}(S)\bigr)\geq f(S) \quad \forall S\in\Sc, \qquad
0\leq x_e\leq 1 \quad \forall e\in E', \qquad
\lb_v \leq x\bigl(\dt(v)\bigr) \leq \ub_v \quad \forall v\in V'.
\end{equation*}
Let $F=\{e\in E':0<\hx_e<1\}$. There exists a laminar family $\Lc\sse\Sc$ and $W\sse V'$
with $|\Lc|+|W|=|F|$ such that
\begin{enumerate*}[label=\textnormal{(\alph*)}, topsep=0.1ex, noitemsep, leftmargin=*] 
\item the vectors $\bigl\{\chi^{\dt_F(S)}\bigr\}_{S\in\Lc},\,\bigl\{\chi^{\dt_F(v)}\bigr\}_{v\in W}$ are
linearly independent; 
\item $\hx\bigl(\dt(S)\bigr)=f(S)$ for all $S\in\Lc$; and 
\item $\hx\bigl(\dt(v)\bigr)\in\bigl\{\lb_v,\ub_v\}$, $\hx\bigl(\dt(v)\bigr)\geq 1$, for all
  $v\in W$.
\end{enumerate*}
\end{lemma}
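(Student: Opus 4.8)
The idea is to run the proof of Lemma~\ref{laminar} essentially verbatim for the cut-constraint part, and then fold in the degree constraints using the crucial fact that a singleton set $\{v\}$ never weakly-crosses any set, so tight degree constraints can be appended to any laminar family without destroying laminarity. Let $Z=\supp(\hx)$; call $S\sse V$ \emph{tight} if $\hx\bigl(\dt(S)\bigr)=f(S)$, and call $v\in V'$ \emph{tight} if $\hx\bigl(\dt(v)\bigr)\in\{\lb_v,\ub_v\}$.

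\textbf{The cut part.} Exactly as in Lemma~\ref{laminar}, take a maximal laminar family $\T\sse\Sc$ of tight sets for which $\bigl\{\chi^{\dt_Z(S)}\bigr\}_{S\in\T}$ are linearly independent. The span argument there --- among tight sets $T\in\Sc$ not in $\spn\bigl(\bigl\{\chi^{\dt_Z(S)}\bigr\}_{S\in\T}\bigr)$, choose one weakly-crossing the fewest members of $\T$, let $A\in\T$ weakly-cross it, and apply the uncrossing Lemma~\ref{lcspan} to $A,B$ --- involves only sets in $\Sc$ and carries over without change. Hence every tight $T\in\Sc$ lies in $\spn\bigl(\bigl\{\chi^{\dt_Z(S)}\bigr\}_{S\in\T}\bigr)$.

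\textbf{Adding the degree constraints and the matrix argument.} Now extend $\T$ by choosing a maximal set $W_0\sse V'$ of tight nodes for which $\bigl\{\chi^{\dt_Z(S)}\bigr\}_{S\in\T}\cup\bigl\{\chi^{\dt_Z(v)}\bigr\}_{v\in W_0}$ are linearly independent; by maximality, every tight $v\in V'$ has $\chi^{\dt_Z(v)}$ in this combined span. Since each $\{v\}$ is laminar-compatible with everything, $\T\cup\bigl\{\{v\}:v\in W_0\bigr\}$ is laminar. Now replay the extreme-point argument: let $M$ have rows $\bigl\{\chi^{\dt_Z(S)}\bigr\}_{S\in\T}$ and $\bigl\{\chi^{\dt_Z(v)}\bigr\}_{v\in W_0}$ (columns indexed by $Z$), and let $M'$ be its restriction to columns in $F$. $M$ has full row rank by construction; if $M'$ failed to have full column rank, a nonzero $d'\in\R^F$ with $M'd'=0$, extended by $0$ to $d\in\R^{E'}$, would satisfy $Md=0$, hence $d\bigl(\dt(S)\bigr)=0$ for all tight $S\in\Sc$ and $d\bigl(\dt(v)\bigr)=0$ for all tight $v\in V'$, so $\hx\pm\ve d$ would be feasible for small $\ve>0$ (fractional edges stay strictly inside $(0,1)$, integral edges are unperturbed, and non-tight constraints retain slack), contradicting that $\hx$ is extreme. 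Thus $|F|$ equals the rank of $M'$, which is at most $|\T|+|W_0|$, and any $|F|\times|F|$ full-rank submatrix of $M'$ selects the desired $\Lc\sse\T$ and $W\sse W_0$ with $|\Lc|+|W|=|F|$, giving properties (a), (b), and $\hx\bigl(\dt(v)\bigr)\in\{\lb_v,\ub_v\}$ in (c). Finally, for $v\in W$ the row $\chi^{\dt_F(v)}$ lies in a full-rank submatrix, so it is nonzero; hence $v$ has an incident edge $e\in F$, giving $\hx\bigl(\dt(v)\bigr)\geq\hx_e>0$, and since $\hx\bigl(\dt(v)\bigr)\in\{\lb_v,\ub_v\}\sse\Z$ it is at least $1$.

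\textbf{Main obstacle.} The only delicate point is the cut-part span argument: one must verify that the presence of the degree rows does not interfere with the uncrossing induction. This is exactly where the observation that singletons never weakly-cross anything is used --- Lemma~\ref{lcspan} is only ever invoked on a pair of sets in $\Sc$, so the degree constraints genuinely sit off to the side and enter only via the benign maximality and dimension count above. Everything else is a routine adaptation of the proof of Lemma~\ref{laminar}.
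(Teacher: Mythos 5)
Your proposal is correct and follows essentially the same approach as the paper: take a maximal laminar family $\T\sse\Sc$ of independent tight cut-constraint vectors, observe via the uncrossing argument of Lemma~\ref{laminar} that every tight $T\in\Sc$ lies in its span, append a maximal independent set of tight degree-constraint vectors (which preserves laminarity since singletons never weakly-cross anything), and close with the standard extreme-point rank argument. One nice addition on your part: you explicitly justify the condition $\hx\bigl(\dt(v)\bigr)\geq 1$ in part (c) --- each $v\in W$ contributes a nonzero row of the full-rank submatrix, hence has an incident edge in $F$, hence $\hx\bigl(\dt(v)\bigr)>0$, and integrality of $\{\lb_v,\ub_v\}$ lifts this to $\geq 1$ --- a step the paper's proof leaves implicit.
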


We defer the proof of Lemma~\ref{mdlaminar}, and first use this to argue that
Algorithm~\ref{mdkecss-alg} terminates in at most $|E|+|V|$ iterations with the desired 
subgraph $H$.
We argue that in every iteration, either we add an edge to $H$ and drop a set from $\Sc$,
or we delete a vertex from $V'$.

Let $F=\{e\in E': 0<\hx_e<1\}$, where $E'$ refers to the edge-set at the start of the
iteration. (Note that $E'$ gets updated to $F$ at the end of the iteration.) 
Let $f^\res$ be the cut-requirement function at the start of the iteration. As before,
$f^\res$ is two-way uncrossable, symmetric, normalized, and has even parity.
Let $\Lc\sse\Sc$ and $W\sse V'$ be the laminar family and node-set
corresponding to $\hx$ given by Lemma~\ref{mdlaminar}.   
Let $Q=\{e\in E': \hx_e=1\}$. Let $z=(\hx_e)_{e\in F}$. 
Then, $z\bigl(\dt_F(S)\bigr)=f^\res(S)-|\dt_Q(S)|$ is an integer for all $S\in\Lc$, and
$z\bigl(\dt_F(v)\bigr)=\hx\bigl(\dt_{E'}(v)\bigr)-|\dt_Q(v)|$ is a positive integer for
all $v\in W$, due to Lemma~\ref{mdlaminar} (c).

So $F$, $\Lc\cup\bigl\{\{v\}: v\in W\bigr\}$, and $z$ 
satisfy the conditions of Lemma~\ref{largereq}. So either for some $T\in\Lc$,  
we have $z\bigl(\dt_F(T)\bigr)=f^\res(T)-|\dt_Q(T)|\leq 2$, or for some $v\in W$, we have
$|\dt_F(v)|\leq 3$ and $z\bigl(\dt_F(v)\bigr)\leq 2$.
In the former case, since $T\in\Sc$, we have $f^\res(T)\geq 3$, which implies that 
$\dt_Q(T)\neq\es$, and so $Q\neq\es$ and after the update in step~\ref{mdkecss-updatesc}, $T$
drops out of $\Sc$. 
In the latter case, at the end of the iteration, when $E'$ gets updated to
$E^{'\after}=F$, we have $\hx\bigl(\dt_{E^{'\after}}(v)\bigr)\leq 2$ and 
$|\dt_{E^{'\after}}(v)|-\hx\bigl(\dt_{E^{'\after}}(v)\bigr)\leq 3-1=2$; so $v$ drops out
of $V'$ in step~\ref{mdkecss-update}.

By definition, at termination, we have that $H$ is at least $(k-2)$-edge connected. 
Consider any $v\in V$. We have $|\dt_H(v)|\geq\ldeg_v-2$ because as long as $v\in V'$, we
maintain that $\hx\bigl(\dt_{E'}(v)\bigr)+|\dt_H(v)|\geq\ldeg_v$, and when $v$ drops out of
$V'$, we have $\hx\bigl(\dt_{E'}(v)\bigr)\leq 2$. Similarly, we have
$|\dt_H(v)|\leq\udeg_v+2$ because $\hx\bigl(\dt_{E'}(v)\bigr)+|\dt_H(v)|\leq\udeg_v$ as long
as $v\in V'$. Also, when $v$ drops out of $V'$, we have
$|\dt_{E'}(v)|-\hx\bigl(\dt_{E'}(v)\bigr)\leq 2$, and the degree of $v$ in the final
solution can be at most $|\dt_H(v)|+|\dt_{E'}(v)|$, which is therefore at most $\udeg_v+2$.    
The bound on $c(H)$ follows as before, given that we only include edges in $H$ when
$\hx_e=1$. 

\begin{proofof}{Lemma~\ref{mdlaminar}}
The proof is essentially the same as that of Lemma~\ref{laminar}.
Let $Z=\supp(\hx)$. 
Say that $S\sse V$ is tight if $\hx\bigl(\dt(S)\bigr)=f(S)$, and say that
$v\in V'$ is tight if $\hx\bigl(\dt(v)\bigr)\in\{\lb_v,\ub_v\}$. 
(Note that we are distinguishing above between the set $\{v\}$ being tight, and the node
$v$ being tight.)

Consider a maximal laminar family $\T$ of tight sets from $\Sc$ such that the vectors
$\bigl\{\chi^{\dt_Z(S)}\bigr\}_{S\in\T}$ are linearly independent. 
Then, exactly as argued in the proof of Lemma~\ref{laminar}, if $T\in\Sc$ is tight, we
must have $\chi^{\dt_Z(T)}\in\spn\bigl(\bigl\{\chi^{\dt_Z(S)}\bigr\}_{S\in\T}\bigr)$. 

Let $A\sse V'$ be a maximal set of tight nodes such that the vectors
$\bigl\{\chi^{\dt_Z(S)}\bigr\}_{S\in\T},\bigl\{\chi^{\dt_Z(v)}\bigr\}_{v\in A}$ are
linearly independent.
Clearly, this means that if $u\in V'$ and is tight, then 
$\chi^{\dt_Z(u)}\in\spn\bigl(\bigl\{\chi^{\dt_Z(S)}\bigr\}_{S\in\T},
\bigl\{\chi^{\dt_Z(v)}\bigr\}_{v\in A}\bigr)$.

Consider the matrix $M$ with columns corresponding to edges in $Z$, and rows 
$\bigl\{\chi^{\dt_Z(S)}\bigr\}_{S\in\T}$,
$\bigl\{\chi^{\dt_Z(v)}\bigr\}_{v\in A}$. Let $M'$ be the submatrix
of $M$ with the same rows as $M$ but columns corresponding to edges in $F$. 
The row rank of $M$ is equal to $|\T|+|A|$; so the row
rank of $M'$ is at most $|\T|+|A|$. Since $\hx$ is an extreme point and the row-span of
$M$ contains $\chi^{\dt_Z(S)}$, $\chi^{\dt_Z(u)}$ for for every tight set $S\in\Sc$ and
tight node $u\in V'$, we again have that $M'$ has full column rank.
So we have $|F|=\text{column rank of $M'$}=\text{row rank of $M'$}\leq |\T|+|A|$.
So there is an $|F|\times |F|$ full-rank submatrix $M''$ of $M'$. 
The rows of $M''$ corresponding to the sets of $\T$ yield the desired laminar family
$\Lc$, and the rows corresponding to the nodes of $A$ yield $W$.
\end{proofof}

This completes the proof of Theorem~\ref{mdkecss-thm}. \hfill \qed

\subsection{Minimum-degree \boldmath \kecsm}
We handle the minimum-degree \kecsm problem by modifying Algorithm~\ref{mdkecss-alg} in
much the same way that Algorithm~\ref{kecss-alg} was tweaked to handle \kecsm.

The LP-relaxation for \mdkecsm is almost the same as \eqref{mdkecss-lp}, except that we
replace constraints \eqref{degbnd} simply with the nonnegativity constraints $x\geq 0$. Let
(\mdkecsmlp) denote this LP-relaxation.

We proceed by computing an extreme-point optimal solution to (\mdkecsmlp).
We initialize $H$ to include $\floor{\hx_e}$ copies of each edge $e$, and
define the residual cut-requirement function $f^\res$, and residual degree lower- and
upper- bounds $\ldeg^\res_v,\udeg^\res_v$ accordingly to take this into account. 
We set $E'\assign\{e\in E:\hx_e>\floor{\hx_e}\}$, $\Sc=\{S\sse V: f^\res(S)\geq 3\}$, and 
$V'=\{v\in V: \hx\bigl(\dt_{E'}(v)\bigr)>2\text{ or }|\dt_{E'}(v)|-\hx\bigl(\dt_{E'}(v)\bigr)>2\}$ 
(where the $E'$ in the definition of $V'$ refers to the updated $E'$).
From here on, we proceed as in Algorithm~\ref{mdkecss-alg}. 

This yields a multigraph $H$ satisfying the same guarantees as in
Theorem~\ref{mdkecss-thm}, but with the stronger cost bound
$c(H)\leq\lpopt[{\mdkecsmlp}]$. To obtain a $k$-edge connected multigraph, we run the
above algorithm, taking $k'=k+2$ for even $k$, and $k'=k+3$ for odd $k$. This yields the
following.

\begin{theorem} \label{mdkecsm-thm}  
There is a polynomial time algorithm for \mdkecsm that returns a $k$-edge connected
multigraph $H$ such that $c(H)\leq\rho_k\cdot\lpopt[{\mdkecsmlp}]$, and
$\ldeg_v-2\leq|\dt_H(v)|\leq\rho_k\udeg_v+2$ for all $v\in V$, 
where $\rho_k=\bigl(1+\frac{2}{k}\bigr)$, if $k$ is even, and
$\rho_k=\bigl(1+\frac{3}{k}\bigr)$ if $k$ is odd.
\end{theorem}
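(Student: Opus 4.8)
The plan is to derive Theorem~\ref{mdkecsm-thm} from a degree-bounded multigraph analogue of Theorem~\ref{mdkecss-thm}, in exactly the way Theorem~\ref{kecsm-thm} was obtained from Theorem~\ref{kecss-thm}. Concretely, the first step is to prove: \emph{for every even $k'\ge 4$, the algorithm described above (the hybrid of Algorithms~\ref{kecsm-alg} and~\ref{mdkecss-alg}) runs in polynomial time and returns a $(k'-2)$-edge connected multigraph $H$ with $c(H)\le\lpopt[{\mdkecsmlp[{k'}]}]$ and $\ldeg_v-2\le|\dt_H(v)|\le\udeg_v+2$ for all $v\in V$.} I would begin by checking that the loop is well-posed: after fixing $w_e=\floor{\hx_e}$ copies of each edge of the initial extreme-point solution $\hx$ to \mdkecsmlp[{k'}], the remainder $(\hx_e-\floor{\hx_e})_e$ lies in $[0,1)^{E'}$ and is feasible for the residual LP, because forming $f^\res(S)=f^{\kecss}(S)-w(\dt_G(S))$, $\ldeg^\res_v=\ldeg_v-w(\dt(v))$, $\udeg^\res_v=\udeg_v-w(\dt(v))$ subtracts the same integral quantity from both sides of \eqref{mdcon} and \eqref{degbnd}.

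Next I would verify that the analysis of Section~\ref{degbnd-kecss} transfers essentially verbatim to this residual instance. The function $f^\res$ is symmetric, normalized, and two-way uncrossable by Lemma~\ref{twoway}(b), and, since $k'$ is even, has even parity by Claim~\ref{even}; both properties are preserved under every subsequent edge-pick, again by Lemma~\ref{twoway}(b) and Claim~\ref{even}. A polytime separation oracle for the residual LP follows exactly as in Lemma~\ref{lpsolve}, now applied to the capacity vector $w+\hx$; here I use $k'\ge 4$ so that $2(k'-2)\ge k'$ and it suffices to enumerate $2$-approximate min-cuts. Lemma~\ref{mdlaminar} then produces a laminar family $\Lc\sse\Sc$ of tight cut-constraints together with a node set $W\sse V'$ of tight degree constraints that together define $\hx$; applying Lemma~\ref{largereq} to $\Lc\cup\{\{v\}:v\in W\}$ shows that each iteration either picks an integral edge and drops a set $T$ from $\Sc$ (since $f^\res(T)\ge 3$ forces the picked set $Q\ne\es$), or removes a node $v$ with $|\dt_F(v)|\le 3$ from $V'$. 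Hence the loop terminates in at most $|E|$ iterations. The cost bound is maintained as the invariant $c(H)+\lpopt[{\text{\ref{mdreslp}}}]\le\lpopt[{\mdkecsmlp[{k'}]}]$, since the residual LP optimum drops by at least the cost of the copies added in each iteration; and the degree guarantees follow from the invariant $\ldeg_v\le|\dt_H(v)|+\hx(\dt_{E'}(v))\le\udeg_v$ maintained while $v\in V'$, together with the fact that $v$ leaves $V'$ only once $\hx(\dt_{E'}(v))\le 2$ and $|\dt_{E'}(v)|-\hx(\dt_{E'}(v))\le 2$, which caps the remaining violation at $2$ on each side.

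For the second step I would scale. Take $k'=k+2$ when $k$ is even and $k'=k+3$ when $k$ is odd; then $k'$ is even, $k'\ge 4$, $k'-2\ge k$, and $\frac{k'}{k}=\rho_k$. Given a \mdkecsm instance with bounds $\{\ldeg_v,\udeg_v\}$, I form the \mdkecsm[{k'}] instance on the same graph and costs, keeping the lower bounds $\ldeg_v$ but relaxing the upper bounds to $\rho_k\udeg_v$, and run the algorithm of the first step. If $\sx$ is optimal for \mdkecsmlp, then $\frac{k'}{k}\sx$ is feasible for \mdkecsmlp[{k'}] with these bounds: every cut receives $x$-value at least $k'$, while $\frac{k'}{k}\sx(\dt(v))\ge\sx(\dt(v))\ge\ldeg_v$ and $\frac{k'}{k}\sx(\dt(v))\le\rho_k\udeg_v$. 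Thus $\lpopt[{\mdkecsmlp[{k'}]}]\le\rho_k\cdot\lpopt[{\mdkecsmlp}]$, and the returned $H$ is $(k'-2)$-edge connected, hence $k$-edge connected, with $c(H)\le\rho_k\cdot\lpopt[{\mdkecsmlp}]$ and $\ldeg_v-2\le|\dt_H(v)|\le\rho_k\udeg_v+2$, which is the claim.

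I do not expect a serious conceptual obstacle, since this is essentially a corollary of the degree-bounded \kecss machinery already developed; the one place to be careful is the degree bookkeeping under the initial floor-truncation. I would make sure that a residual lower bound becoming negative is harmless — the corresponding constraint merely becomes slack (so one may drop it or clamp it to $0$ before invoking Lemma~\ref{mdlaminar}), and $|\dt_H(v)|\ge w(\dt(v))>\ldeg_v$ already holds for such $v$ — and that the additive $\pm 2$ slack is not compounded by the scaling (it is not: the $+2$ is added to the relaxed bound $\rho_k\udeg_v$, and the $-2$ to the unchanged $\ldeg_v$). I would also double-check the two numeric side conditions used above, namely that $k'$ is even (needed for even parity) and $k'\ge 4$ (needed for the approximate-min-cut separation oracle), both of which hold for every $k\ge 2$ under the stated choice of $k'$.
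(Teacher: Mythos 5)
Your proposal matches the paper's own (terse) argument quite closely: prove the ``Theorem~\ref{mdkecss-thm} with stronger cost bound'' guarantee for the hybrid of Algorithms~\ref{kecsm-alg} and~\ref{mdkecss-alg} run with an even $k'\ge 4$, and then scale by taking $k'=k+2$ (even $k$) or $k'=k+3$ (odd $k$). Your scaling step correctly identifies what the paper leaves implicit, namely that the degree upper bounds must be relaxed to $\rho_k\udeg_v$ (but the lower bounds kept at $\ldeg_v$) so that $\frac{k'}{k}\sx$ is feasible for the $k'$-LP, whence $\lpopt[{\mdkecsmlp[{k'}]}]\le\rho_k\lpopt[{\mdkecsmlp}]$ and the output degree bound becomes $\ldeg_v-2\le|\dt_H(v)|\le\rho_k\udeg_v+2$. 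Your handling of negative residual lower bounds is also correct.

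One point you gloss over, which the paper also does not address, is that $\rho_k\udeg_v=\frac{k'}{k}\udeg_v$ is generically non-integral, whereas Lemma~\ref{mdlaminar} is stated for integer bounds $\lb_v,\ub_v\in\Z_+$, and this integrality is actually used: the token-counting argument in Lemma~\ref{largereq} needs $z\bigl(\dt_F(\{v\})\bigr)$ to be an integer for $v\in W$ in order to conclude that a one-child set gets at least two tokens. If $v\in W$ is tight on a non-integral upper bound, this integer-difference argument breaks down. The obvious fix --- passing $\ceil{\rho_k\udeg_v}$ as an integer upper bound --- restores the lemma but degrades the final degree guarantee to $|\dt_H(v)|\le\ceil{\rho_k\udeg_v}+2\le\rho_k\udeg_v+3$. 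Your write-up should either accept this extra $+1$, or argue that the non-integrality is innocuous (e.g., by showing that upper-bound-tight nodes with non-integral bounds cannot occur as unique children in the laminar family, or by a more careful token count). This is a small bookkeeping gap shared with the paper's sketch rather than a conceptual flaw in your approach. Also, a trivial slip: the iteration count should be $|E|+|V|$, not $|E|$, since an iteration may remove a vertex from $V'$ without adding an edge.
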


\bibliographystyle{abbrv}
\bibliography{kECSS}

\appendix

\section{Proofs omitted from Section~\ref{prelim}} \label{append-prelim}

\begin{proofof}{Lemma~\ref{symnormal}}
For part~\ref{symtwoway}, let $A,B\sse V$ be crossing sets. Note that $A,B\neq\es$ and $A,B\subsetneq V$.
So $g(A)=f(T_A)$ and $g(B)=f(T_B)$, where
$T_A\in\{A,V-A\}$ and $T_B\in\{B,V-B\}$. Note that $T_A$ and $T_B$ are also crossing sets.
We have 
\begin{equation}
g(A)+g(B)=f(T_A)+f(T_B)\leq\min\bigl\{f(T_A\cap T_B)+f(T_A\cup T_B),
f(T_A-T_B)+f(T_B-T_A)\bigr\} \label{symineq1}
\end{equation} 
There are now four possibilities.
\begin{enumerate}[label=$\bullet$, topsep=0.2ex, itemsep=0.1ex, leftmargin=*]
\item $T_A=A, T_B=B$: 
the RHS of \eqref{symineq1} is then at most 
$\min\bigl\{g(A\cap B)+g(A\cup B),g(A-B)+g(B-A)\bigr\}$

\item $T_A=A, T_B=V-B$: the RHS of \eqref{symineq1} is then at most 
$\min\bigl\{g(A-B)+g(B-A),g(A\cap B)+g(A\cup B)\bigr\}$, since $T_A\cup T_B=V-(B-A)$,
$T_A-T_B=A\cap B$, $T_B-T_A=V-(A\cup B)$.

\item $T_A=V-A, T_B=B$: the RHS of \eqref{symineq1} is then at most 
$\min\bigl\{g(B-A)+g(A-B),g(A\cup B)+g(A\cap B)\}$

\item $T_A=V-A, T_B=V-B$: the RHS of \eqref{symineq1} is then at most 
$\min\bigl\{g(A\cup B)+g(A\cap B),g(B-A)+g(A-B)\}$.
\end{enumerate}
So $g$ is two-way uncrossable.

\medskip
To prove part~\ref{symndp}, suppose $x\bigl(\dt(S)\bigr)\geq f(S)$ for all 
$\es\neq S\subsetneq V$. Then, for any set $S$ where $\es\neq S\subsetneq V$, we have
$x\bigl(\dt(S)\bigr)\geq f(S)$ and 
$x\bigl(\dt(S)\bigr)=x\bigl(\dt(V-S)\bigr)\geq f(V-S)$. It follows that 
$x\bigl(\dt(S)\bigr)\geq\max\{f(S),f(V-S)\}=g(S)$. Also, clearly,
$x\bigl(\dt(\es)\bigr)=0=g(\es)$ and $x\bigl(\dt(V)\bigr)=0=g(V)$.
Conversely, if $x\bigl(\dt(S)\bigr)\geq g(S)$ for all $\es\neq S\subsetneq V$, then since
$g(S)\geq f(S)$ for all $\es\neq S\subsetneq V$, we also have 
$x\bigl(\dt(S)\bigr)\geq f(S)$ for all $\es\neq S\subsetneq V$.

\medskip
Finally, consider part~\ref{symtight}. 
If $T\in\{\es,V\}$,
clearly $x\bigl(\dt(T)\bigr)=g(T)$. Otherwise, as argued above, we also have
$x\bigl(\dt(S)\bigr)\geq g(S)$ for all $S\sse V$, and in particular for the set $T$; so
$f(T)\geq g(T)$, which implies that $f(T)=g(T)$. 
\end{proofof}

\section{Tightness of Lemma~\ref{largereq}} \label{largereq-tight}

\begin{figure}[ht!]
\centering
\includegraphics[width=3.75in]{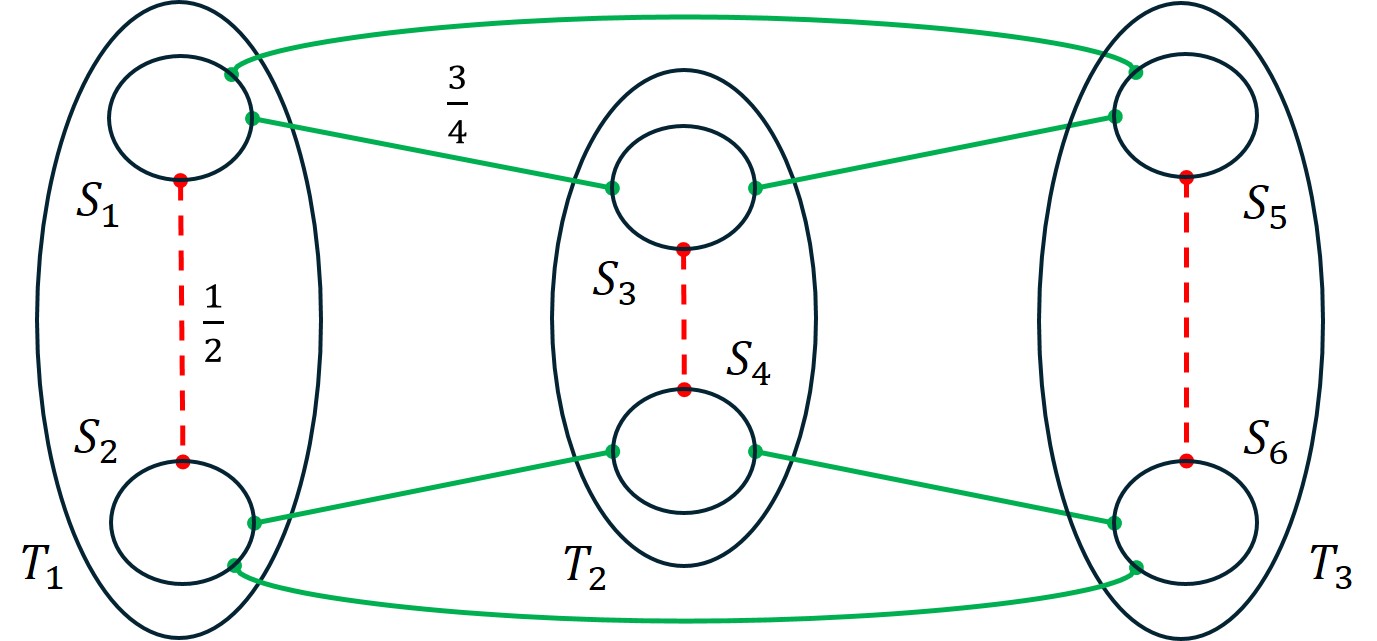}
\captionsetup{font=small, labelfont=small}
\caption{The blobs denote the sets in the laminar family $\Lc$, and the dashed and solid
lines represent the edges in $F$. The dashed (red) edges have $z_e=0.5$, and the solid
(green) edges have $z_e=0.75$. Note that $z\bigl(\dt(S)\bigr)$ is an integer for all
$S\in\Lc$.} 
\label{counterexample_fig}\label{ex1-fig}
\end{figure}

Figure~\ref{counterexample_fig} gives an example showing that the bounds in
Lemma~\ref{largereq} are tight.  
It depicts a laminar family $\Lc$ with $9$ sets, an edge-set $F$ with $9$ edges, and
$z\in(0,1)^F$ such that $b_S:=z\bigl(\dt(S)\bigr)$ is an integer for
all $S\in\Lc$. 
Clearly, we have $|\dt_F(S)|\geq 3$ and $b_S\geq 2$ for all $S\in\Lc$, so to satisfy the
requirements of Lemma~\ref{largereq}, we only need to show that the vectors 
$\bigl\{\chi^{\dt_F(S)}\bigr\}_{S\in\Lc}$ are linearly
independent. We do so by arguing that the solution $z$ shown in the figure is the unique
solution to the system: $x\bigl(\dt(S)\bigr)=b_S\ \ \forall S\in\Lc$.
Since $|F|=|\Lc|$, this shows the desired linear-independence property.

We first argue that all dashed edges must have $x_e=0.5$. 
Consider, for example, the dashed edge $e$ connecting $S_1$ and $S_2$. 
We have  
\begin{equation}
x_e=\frac{b_{S_1} + b_{S_2} - b_{T_1}}{2} = \dfrac{2+2-3}{2}=0.5. \label{ex1ineq}
\end{equation}  
A similar argument shows that all dashed edges have $x_e=0.5$. 
The solid edges can be partitioned into two cycles of length $3$, where the total value of
edges incident to each $S_i$-set is $1.5$. In this case, the only possible solution is a
value of $0.75$ for all solid edges.   

\medskip
We remark that the instance shown in Fig.~\ref{ex1-fig} can arise from an optimal
extreme-point \kecsslp solution after dropping integral edges. 
To show this, we show that the instance can be augmented
by adding edges of $x_e$-value $1$ to obtain an optimal (extreme-point)
\kecsslp solution. 
The simplest way of seeing this is when $k=3$ and each $S_i$ set is a
singleton: if we add an edge with $x_e=1$ parallel to each dashed edge, then one can
verify that the resulting solution is a feasible solution to \kecsslp[3]. 
If we set the cost of the integral edges to $0$, the cost of the dashed edges to $1$, and
the cost of the solid edges to $2$, then this solution is also the unique optimal solution
to \kecsslp[3].%
\footnote{To see this, let $x$ be a feasible solution to \kecsslp[3]. We must have
$x_e=1$ for all the zero-cost edges. Let $D$ be the dashed edges, and $E'$ be the solid
  edges. Since $x\bigl(\dt_{D\cup E'}(S_i)\bigr)\geq 2$ 
for every singleton $S_i$-set, we have $x(D)+x(E')\geq 6$. Similarly, since 
$x\bigl(\dt_{D\cup E'}(T_i)\bigr)\geq 3$, for every $T_i$-set, we have $x(E')\geq 4.5$. So 
$c^Tx\geq 10.5$, and for equality to hold, all these inequalities must be tight; in
particular, all the sets shown in Fig.~\ref{ex1-fig} must be tight.} 
(We can avoid parallel edges by letting each 
$T_i-(S_{i_1}\cup S_{i_2})$ ``ring'' be a clique of size $4$, with one edge each from the
clique to $S_{i_1}$ and $S_{i_2}$. All of these edges have cost $0$ and $x_e=1$.)

\medskip
The instance in Fig.~\ref{ex1-fig} can also result from an optimal extreme-point
\kecsslp solution for {\em even} $k$, after dropping integral edges. Fig.~\ref{ex3-fig}
shows how this may happen for \kecsslp[6], where the thick (blue) edges denote edges with
$x_e=1$. 

\begin{figure}[ht!]
\centering
\includegraphics[width=3.6in]{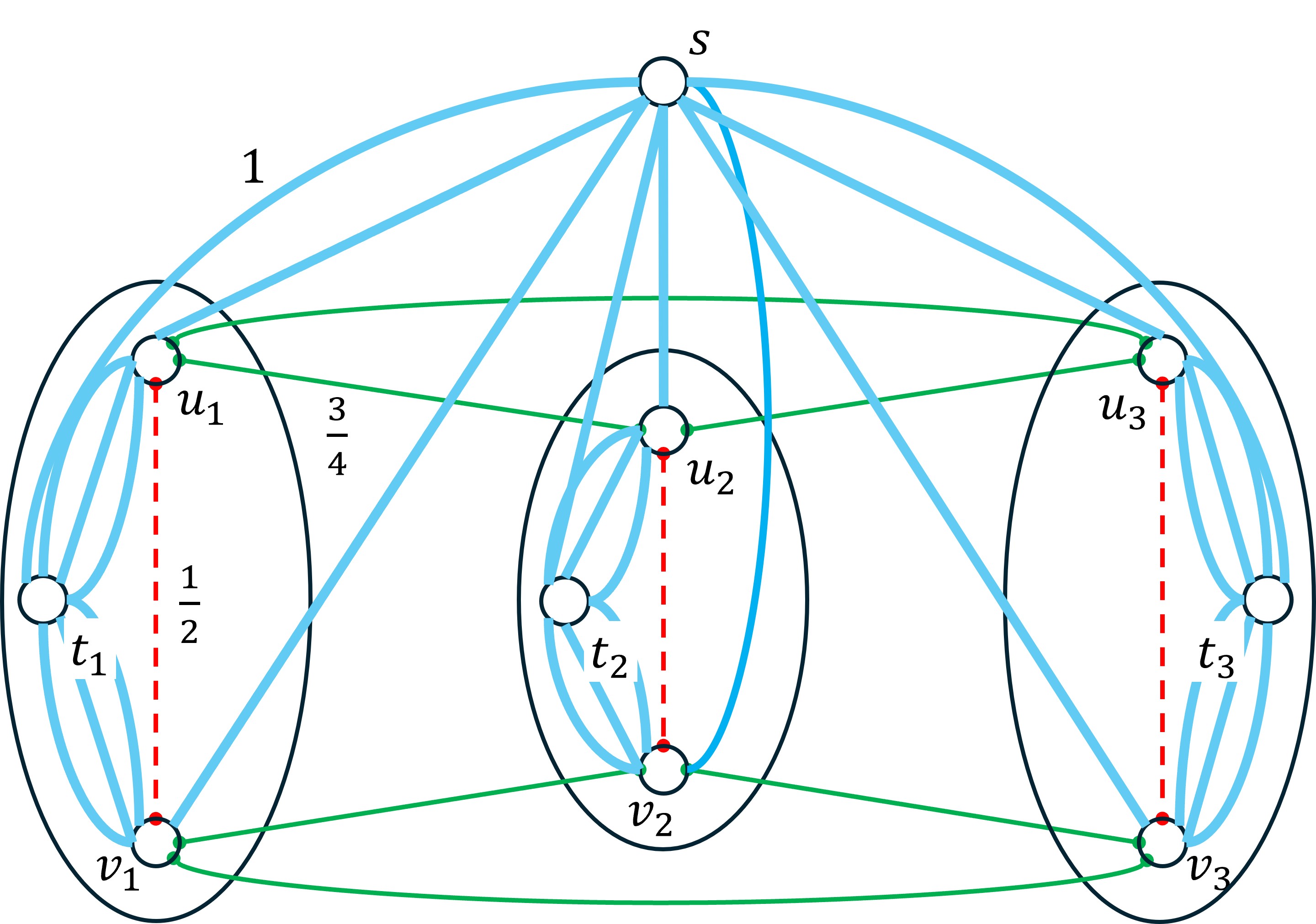}
\captionsetup{font=small, labelfont=small}
\caption{An extreme-point solution to \kecsslp[6]. The dashed (red) edges have
$\bx_e=0.5$, the solid (green) edges have $\bx_e=0.75$, and the thick (blue) edges have
$\bx_e=1$. After dropping integral edges, we obtain the residual instance shown in
Fig.~\ref{ex1-fig}.}  
\label{ex3-fig}
\end{figure}

We briefly justify why the solution $\bx$ shown in Fig.~\ref{ex3-fig} is an extreme-point
solution to \kecsslp[6]. To argue feasibility, it suffices to show that one can send $6$
units of flow 
from $s$ to all other nodes under the $\{\bx_e\}$ edge capacities. By symmetry, we only need
to argue this for the pair $s,t_1$, and the pair $s,u_1$. 
\begin{enumerate}[label=$\bullet$, topsep=0.2ex, itemsep=0.1ex, leftmargin=*]
\item $s,t_1$: We can send $1$ unit of flow along $s,t_1$, $s,u_1,t_1$, $s,v_1.t_1$, and
$0.75$ units of flow along each of $s,u_i,u_1,t_1$ and $s,v_i,v_1,t_1$, for $i=2,3$.

\item $s,u_1$: We can send $1$ unit of flow along $s,u_1$, $s,v_1,t_1,u_1$; 
$0.75$ units of flow along each of $s,u_i,u_1$ and $s,v_i,v_1,t_1,u_1$, for $i=2,3$;
and finally, $0.5$ units of flow along $s,t_1,u_1$ and $s,t_1,v_1,u_1$.
\end{enumerate}

Given feasibility, the fact that $\bx$ is an extreme point follows, because after dropping
integral edges, we obtain the instance in Fig.~\ref{ex1-fig}, with residual
requirements of $2$ for all $\{u_i\}$, $\{v_i\}$ sets, and $3$ for all $\{u_i,v_i,t_i\}$
sets. We argued earlier that the fractional solution shown in Fig.~\ref{ex1-fig} is an
extreme-point solution to the system with these residual requirements. 
Finally, as before, if we set the cost of integral edges to
$0$, the cost of dashed edges to $1$, and the cost of solid edges to $2$, we also obtain
that $\bx$ is the unique optimal solution to \kecsslp[6].

\end{document}